\def\@fnsymbol#1{\ensuremath{\ifcase#1\or *\or \ddagger\or \dagger\or
   \mathsection\or \mathparagraph\or \|\or **\or \dagger\dagger
   \or \ddagger\ddagger \else\@ctrerr\fi}}\makeatother
\theoremstyle{definition}
\newtheorem{defn}{Definition}[section]
\theoremstyle{plain}
\newtheorem{lem}[defn]{Lemma}
\newtheorem{thm}[defn]{Theorem}
\newtheorem{prop}[defn]{Proposition}
\DeclareMathOperator{\boundeds}{\mathcal{B}}
\DeclareMathOperator{\Tr}{Tr}
\DeclareMathOperator{\Hom}{Hom}
\DeclareMathOperator{\id}{id}
\DeclareMathOperator{\im}{im}
\DeclareMathOperator{\supp}{supp}
\let\Re\undefined
\DeclareMathOperator{\Re}{Re}
\newcommand{\ket}[1]{\left|#1\right\rangle}
\newcommand{\braket}[2]{\left\langle#1\middle|#2\right\rangle}
\newcommand{\norm}[1]{\left\|#1\right\|}
\newcommand{\homatinfty}[1]{[#1]^\infty}
\newcommand{\cohomatinfty}[1]{[#1]_\infty}
\title{Homological codes and abelian anyons}
\author[1]{P\'eter Vrana\thanks{\texttt{vranap@math.bme.hu}}}
\author[2]{M\'at\'e Farkas\thanks{\texttt{mate.frks@gmail.com}}}
\affil[1]{Department of Geometry, Budapest University of Technology and Economics, Egry J\'ozsef u. 1., 1111 Budapest, Hungary}
\affil[2]{Department of Theoretical Physics, Budapest University of Technology and Economics, Budafoki \'ut 8., 1111 Budapest, Hungary}
\begin{document}
\maketitle

\begin{abstract}
We study a generalization of Kitaev's abelian toric code model defined on CW complexes. In this model qudits are attached to $n$ dimensional cells and the interaction is given by generalized star and plaquette operators. These are defined in terms of coboundary and boundary maps in the locally finite cellular cochain complex and the cellular chain complex. We find that the set of frustration free ground states and the types of charges carried by certain localized excitations depend only on the proper homotopy type of the CW complex. As an application we show that the homological product of a CSS code with the infinite toric code has excitations with abelian anyonic statistics.
\end{abstract}

\section{Introduction}\label{sec:intro}

Generalizations of Kitaev's toric code \cite{Kitaev} to spaces of higher dimension appear in many contexts. Ref. \cite{FreedmanMeyerLuo} makes use of discretizations of Riemannian manifolds in order to translate results in systolic geometry to a construction of quantum LDPC codes with distance growing faster than the square root of the block length. The authors of ref. \cite{FreedmanHastings} approach problems in quantum complexity theory using a variant of the model on certain cell complexes. They also reformulate hypergraph product codes as homological codes constructed from products of chain complexes, which are in turn used in ref. \cite{BravyiHastings} to prove the existence of good CSS codes having stabilizer weights bounded by the square root of the block length. In that paper it is pointed out that the homological product of the toric code with some fixed code might exhibit anyonic excitations. Other examples include the 4D toric code \cite{topmemory} and a 3D candidate for self-correcting quantum memory \cite{Brell}.

The essential features of all these models\footnote{With the exception of the ``single sector theory'' of ref. \cite{BravyiHastings}.} can be summarized as follows. First, one takes a CW complex and puts qudits on its $n$-cells for some fixed $n\in\mathbb{N}$. For each $n-1$-cell we form an $X$-type stabilizer from its coboundary and for each $n+1$-cell we form a $Z$-type stabilizer from its boundary. These stabilizers commute and generate the stabilizer subgroup of a CSS code. It is also possible to introduce a local Hamiltonian acting on the qudits such that its ground states are exactly the vectors in the code subspace. There is a freedom in doing this, in this paper we prefer to take the negative of the sum of projections onto the $1$ eigenspaces of the stabilizers.

An important property of Kitaev's toric code model is that it has excited states which can be interpreted as collections of anyonic quasiparticles\cite{Kitaev}. In order to study these in a mathematically rigorous way, one needs to formulate the problem in the framework of local quantum physics \cite{DHR1,DHR2,Halvorson}. Moreover, since different charges are then identified with inequivalent representations of the algebra of observables, it is necessary to consider infinite systems. This is done for the square lattice on the plane in refs. \cite{NaaijkensLocalized,NaaijkensDuality,NaaijkensIndex}, where localized excitations are classified and their braiding properties are derived from first principles. Under mild additional conditions on the CW complex, the corresponding more general model can also be discussed in this framework, and this is precisely the aim of the present paper.

It turns out that many results from the plane can be generalized to the present setting. There are some crucial differences, though. First of all, in the plane there is a unique translational invariant ground state, but there are infinitely many other ground states as well. Here translational invariance does not make sense, instead we concentrate on those ground states which minimize each interaction term to get a reasonably small set of states. Second, in the plane there is a natural selection criterion requiring that charges be transportable and localized in infinite cones. In general one cannot speak about cones, therefore we do not know what the ``right'' selection criterion should be. Finally, in the plane it is possible to define braiding in a canonical way, but the definition involves relative orientations of cones, hence it is also not applicable.

The above difficulties prevent us from performing a complete DHR-type analysis. Instead, we reformulate some known properties and constructions, invoking the language of algebraic topology, and generalize them to our setting. This leads to a distinguished class of ground states and low-energy excitations which depend only on the proper homotopy type of the CW complex.

The paper is organized as follows. In section \ref{sec:results} we briefly state the main results in an informal way. In section \ref{sec:examples} we illustrate the concepts and results on some special cases, namely, the ferromagnetic Ising model on certain infinite graphs, the Kitaev model on surfaces, and homological product codes. Section \ref{sec:homology} starts the formal treatment by giving a brief introduction to some tools in the algebraic topology of non-compact CW complexes. Besides ordinary cellular homology and cohomology, locally finite cellular (co-)homology and (co-)homology groups at infinity are covered here as a preparation for later sections. In section \ref{sec:model} we introduce our model in the C*-algebraic framework. In section \ref{sec:ground} we classify frustration free ground states. Section \ref{sec:endomorphisms} generalizes the string operators of Kitaev's toric code model to our setting and studies excited states associated with them. In section \ref{sec:GNSreps} we discuss properties of the GNS representations corresponding to these states. Here we also introduce an invariant which can tell apart some inequivalent representations. Finally, in section \ref{sec:braiding} we introduce and calculate a variant of the braiding operators, which can be given the usual interpretation when the space is plane-like.

\section{Results}\label{sec:results}

The model studied in this paper can be described informally as follows (see section \ref{sec:model} for a formal treatment). Given a CW complex $E$, a natural number $n$ and a finite group $G$, we put the Hilbert space $\ell^2(G)$ on each $n$-cell of $E$. We let $\mathcal{E}_i$ denote the set of $i$ dimensional cells. Cells of various dimensions will be labelled by the symbols $e_\alpha,e_\beta,e_\sigma,\ldots$. We will make use of the generalised Pauli operators acting on $\ell^2(G)$ as
\begin{equation}
X^g\ket{h}=\ket{g+h}\text{ and }Z^\chi\ket{h}=\chi(h)\ket{h},
\end{equation}
where $g\in G$ and $\chi:G\to\mathbb{C}$ is an irreducible character. The Hamiltonian defining the system is
\begin{equation}
H=-\sum_{\alpha\in\mathcal{E}_{n-1}}A_\alpha-\sum_{\beta\in\mathcal{E}_{n+1}}B_\beta,
\end{equation}
where $A_\alpha$ and $B_\beta$ are generalized star and plaquette operators centered at $\alpha$ and $\beta$, respectively. These are in turn given as
\begin{equation}
A_\alpha=\frac{1}{|G|}\sum_{g\in G}X^{\partial^T(g e_\alpha)}\text{ and }B_\beta=\frac{1}{|G|}\sum_{\chi\in \hat{G}}Z^{\partial(\chi e_\beta)},
\end{equation}
where the notation $X^a$ for an $n$-chain $a$ means the product of $X^g$ operators acting at the site $\sigma$ when there is a term $g e_\sigma$ in $a$, and similarly for $Z$-type operators. $\partial$ and $\partial^T$ are the boundary and coboundary operators, respectively.

This model admits frustration free ground states (when the system is finite, every ground state is frustration free). It turns out that the set of frustration free ground states is in bijection with the set of all states on a C*-algebra $\mathfrak{A}_{\mathrm{logical}}$ (see section \ref{sec:ground} for details). By analogy with the finite case (relevant in coding), we interpret these as logical states.

The structure of $\mathfrak{A}_{\mathrm{logical}}$ is determined by the $n$th homology ($H_n(E;\hat{G})$) and locally finite cohomology ($H_{\text{lf}}^n(E;G)$) groups and the canonical pairing between the two. It has a dense subalgebra with a basis consisting of unitaries $X^{[a]}Z^{[b]}$ where $[a]\in H_{\text{lf}}^n(E;G)$ and $[b]\in H_n(E;\hat{G})$, and they satisfy $(X^{[a]}Z^{[b]})(X^{[a']}Z^{[b']})=e^{2\pi i\langle b,a'\rangle}(X^{[a+a']}Z^{[b+b']})$. These properties determine $\mathfrak{A}_{\mathrm{logical}}$ essentially uniquely, see Proposition \ref{prop:logicalalgebra}.

When $H_n(E;\hat{G})$ and $H_{\text{lf}}^n(E;G)$ are finite groups, $\mathfrak{A}_{\mathrm{logical}}$ is finite dimensional, and the unitaries above form a basis. In this case $\mathfrak{A}_{\mathrm{logical}}\simeq\mathbb{C}^{2^c}\otimes\boundeds(\mathbb{C}^{2^q})$ where $c$ and $q$ are related to the degenerate and nondegenerate parts of the pairing and satisfy $2^{2q+c}=|H_n(E;\hat{G})\times H_{\text{lf}}^n(E;G)|$. For a precise statement, see Theorem \ref{thm:logicalstates}. This result means that the logical state can store $c$ classical bits and $q$ qubits of information.

The bijection between frustration free ground states and logical states respects the convex structure. In fact, a frustration free ground state is pure iff the corresponding logical state is pure (Proposition \ref{prop:purestrongground}). If we pass to the GNS representations, we find therefore that irreducible representations correspond to irreducible ones. Moreover, the GNS representation of a frustration free ground state is a factor iff the GNS representation of the corresponding logical state is a factor. This also implies that quasiequivalence of frustration free ground states is reflected in the quasiequivalence of logical states (Theorem \ref{thm:stronggroundGNS}).

Excited states can be created by composing a frustration free ground state with an endomorphism of the algebra of observables. Let $\gamma$ be a locally finite $n$-chain and $\delta$ an $n$-cochain. For any finite $K\subseteq\mathcal{E}_n$ we denote by $\gamma_K$ and $\delta_K$ their restrictions to $K$, i.e. the chain and locally finite cochain obtained by omitting any term supported outside $K$. Then an endomorphism can be formed as
\begin{equation}
\rho_{(\gamma,\delta)}:A\mapsto\lim_{K\to\mathcal{E}_n}Z^{\gamma_K}X^{\delta_K}AX^{-\delta_K}Z^{-\gamma_K}.
\end{equation}
If $\omega_0$ is a frustration free ground state then $\omega_{(\gamma,\delta)}:=\omega_0\circ\rho_{(\gamma,\delta)}$ minimizes those terms in the Hamiltonian which are not centered in the support of $\partial\gamma$ and $\partial^T\delta$. The increase in energy is precisely the sum of the sizes of these supports. This construction can also produce ground states that are not frustration free (Proposition \ref{prop:toground}).

In the planar Kitaev model anyonic excitations are described by GNS representations which are equivalent to the ground state when restricted to the complement of an infinite cone. For an arbitrary space we do not know what the ``right'' generalisation of this selection criterion is, but states of the form $\omega_{(\gamma,\delta)}$ with $|\supp\partial\gamma|<\infty$, $|\supp\partial^T\delta|<\infty$ seem to be the closest analogues of such states, therefore we will call them localized states. In this case $\gamma$ and $\delta$ can be thought of as representatives of homology and cohomology classes at infinity. Interestingly, the equivalence class of the GNS representation only depends on the classes $\homatinfty{\gamma}\in H^\infty_{n-1}(E;\hat{G})$ and $\cohomatinfty{\delta}\in H_\infty^n(E;G)$ (Proposition \ref{thm:equivalent}).

In the other direction, for these states it is possible to introduce an invariant in the form of a unitary representation of $H^\infty_n(E;\hat{G})\times H_\infty^{n-1}(E;G)$ in the center of the GNS representation. For $\cohomatinfty{c}\in H_\infty^{n-1}(E;G)$ and $\homatinfty{d}\in H^\infty_n(E;\hat{G})$ the representing operator is defined as
\begin{equation}
P_\omega(\homatinfty{d},\cohomatinfty{c}):=\lim_{K_\pm\to\mathcal{E}_{n\pm 1}}\pi_\omega\left(X^{\partial^T(c-c_{K_-})}Z^{\partial(d-d_{K_+})}\right),
\end{equation}
where the limit is understood in the strong operator topology (see Theorem \ref{thm:polarization}). We call $P_\omega$ the polarization of the state.

The definition of $P_\omega$ makes sense for states that are close to a frustration free ground state when restricted to a sufficiently distant compact contractible region (Definition \ref{def:astrong}). If $\homatinfty{\gamma}\in H^{\infty}_{n-1}(E;\hat{G})$ and $\cohomatinfty{\delta}\in H_\infty^n(E;G)$ then the polarization of $\omega'=\omega\circ\rho_{(\gamma,\delta)}$ is isomorphic to $P_\omega\otimes V_{(\gamma,\delta)}$ where $V_{(\gamma,\delta)}$ is one dimensional and $(\homatinfty{d},\cohomatinfty{c})$ acts as $e^{2\pi i(\langle\homatinfty{\gamma},\cohomatinfty{c}\rangle+\langle\homatinfty{d},\cohomatinfty{\delta}\rangle)}$ (Proposition \ref{prop:polarization-endomorphism}).

When $\homatinfty{\gamma}=\homatinfty{\gamma'}$ and $\cohomatinfty{\delta}=\cohomatinfty{\delta'}$ holds, one can find an $n$-chain $\hat{\gamma}$, a locally finite $n$-cochain $\hat{\delta}$, a locally finite $n+1$-chain $p$ and an $n-1$-cochain $q$ such that $\gamma'=\gamma-\hat{\gamma}+\partial p$ and $\delta'=\delta-\hat{\delta}+\partial^Tq$. Using these is it possible to construct a net in (the GNS representation of) the observable algebra converging to a unitary intertwiner (or charge transporter) between $\rho_{(\gamma,\delta)}$ and $\rho_{(\gamma',\delta')}$ in the strong operator topology (Proposition \ref{prop:transporter}).

Crucially, if we apply an endomorphism corresponding to a localized excitation to every element in this net, the result will still converge. This makes it possible to define braiding operations using such nets. However, the braiding operator will in general depend on the choice of $\hat{\gamma},\hat{\delta},p,q$ (see Propositions \ref{prop:braiding} and \ref{prop:freedominbraiding}).

\section{Examples}\label{sec:examples}

\subsection{Ferromagnetic Ising model}\label{sec:ising}
As the first example we consider the ferromagnetic Ising model with nearest neighbour interactions on a locally finite graph (i.e. a 1-cell joining each pair of neighbouring $0$-cells) with $c<\infty$ components and finitely many ends (see Figure \ref{fig:Isingstar}). This corresponds to taking $n=0$ and $G=\mathbb{Z}_2$.\footnote{We identify $\mathbb{Z}_d$ with its dual using the ring multiplication and the map $\mathbb{Z}_d\hookrightarrow\mathbb{Q}/\mathbb{Z}$ sending $1$ to (the equivalence class of) $1/d$.} Then we find that $H_0(E;\mathbb{Z}_2)$ is a vector space over $\mathbb{Z}_2$ with a basis corresponding to the set of connected compontents. On the other hand, $\dim_{\mathbb{Z}_2}H_{\text{lf}}^0(E;\mathbb{Z}_2)=c_0$ is the number of finite components. The pairing is nondegenerate on the finite components, therefore $\mathfrak{A}_{\mathrm{logical}}\simeq\mathbb{C}^{2^{c-c_0}}\otimes\boundeds(\mathbb{C}^{2^{c_0}})$ . In other words, the ground states can store one qubit for each finite component and one classical bit for each infinite component. It follows that quasiequivalence-classes of frustration free factor ground states are classified by finite bit strings of length $c-c_0$.

\begin{figure}
\centering
\includegraphics{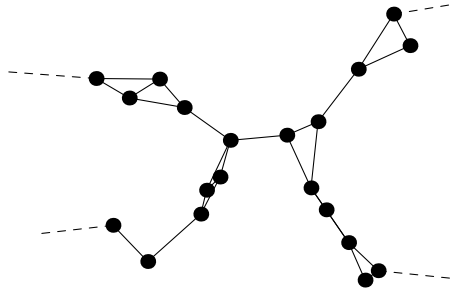}
\caption{The ferromagnetic Ising model can be defined on any locally finite graph. The dashed lines indicate that in this example the graph extends to infinity in four directions or, more formally, it has four ends.\label{fig:Isingstar}}
\end{figure}

Since $H^{\infty}_{-1}(E;\mathbb{Z}_2)=0$, there are no $Z$-type excitations. On the other hand, $H_\infty^0(E;\mathbb{Z}_2)$ is nontrivial when there are infinite components. In this case $\dim_{\mathbb{Z}_2}(E;\mathbb{Z}_2)$ is the number of ends of the graph. A charged sector may be constructed as follows. Take a finite part $K$ of $E$ such that its complement has $\dim_{\mathbb{Z}_2}(E;\mathbb{Z}_2)$ connected components. For each such component there is an endomorphism of type $X$ flipping all the spins in that component. This corresponds to the formal sum $\delta$ of all the vertices in the component. The coboundary $\partial^T\delta$ can only contain edges joining a vertex inside $K$ with one outside $K$, and the number of such edges is finite.

$H^\infty_0(E;\mathbb{Z}_2)$ is isomorphic to $H_\infty^0(E;\mathbb{Z}_2)$. With $K$ as above, in each component of its complement we can pick a semi-infinite path. The formal sum of its edges represents a $0$-homology class at infinity. For this representative the boundary consists of a single point in the component in question, therefore the pairing with the cohomology class at infinity corresponding to that component gives $1/2$, while with other components we get $0$.

Suppose that we start from the ground state with all spins pointing in the ``up'' direction and then flip all the spins in some component of the complement of $K$. Take a semi-infinite path representing a generator of $H^\infty_0(E;\mathbb{Z}_2)$ as above. The value of the corresponding polarization is therefore $e^{2\pi i\cdot\frac{1}{2}}=-1$ or $e^{2\pi i\cdot 0}=1$ depending on whether the spins are flipped from some point along the path or not.

The map $H^0(E;\mathbb{Z}_2)\to H_\infty^0(E;\mathbb{Z}_2)$ is described as follows. $H^0(E;\mathbb{Z}_2)$ has a basis labelled by the components of the graph, the induced map takes finite components to $0$ and each infinite component to the sum of its ends.

\subsection{Kitaev model}\label{sec:kitaev}
The second example is Kitaev's model, where the space is a surface and $n=1$. For simplicity, we take $G=\mathbb{Z}_2$, but similar results hold for any finite abelian group. In the coding theory context one takes a compact orientible surface of genus $g$ together with some cellular decomposition. In this case the relevant groups are 
$H_1(E;\mathbb{Z}_2)\simeq\mathbb{Z}_2^{2g}\simeq H_{\text{lf}}^1(E;\mathbb{Z}_2)$ with a nondegenerate pairing, and one gets $\mathfrak{A}_{\mathrm{logical}}\simeq\boundeds(\mathbb{C}^2)^{\otimes 2g}$. Since a compact CW complex has finitely many cells, the inclusion maps induce isomorphisms between the ordinary and locally finite (co-)homology groups. As a consequence, (co-)homology groups at infinity vanish and we get one GNS representation up to quasiequivalence. This is also clear from the C*-algebraic viewpoint, because in this case the algebra of observables is a full matrix algebra. From the point of view of physics, the reason is that charged excitations are always created in opposite pairs, hence the total charge is always trivial.

The situation is different when the surface is noncompact. The simplest example is $\mathbb{R}^2$ with the cell structure given by a square lattice (Figure \ref{fig:kitaevplane}), which has been investigated in \cite{NaaijkensLocalized,NaaijkensDuality,NaaijkensIndex}. In fact, many of our results are inspired by these papers. First note that $H_1(\mathbb{R}^2;\mathbb{Z}_2)\simeq H_{\text{lf}}^1(\mathbb{R}^2;\mathbb{Z}_2)\simeq 0$ implies that $\mathfrak{A}_{\mathrm{logical}}\simeq\mathbb{C}$ is trivial, therefore there is a unique frustration free ground state. This state was computed in ref. \cite{AFH}.
\begin{figure}
\centering
\includegraphics{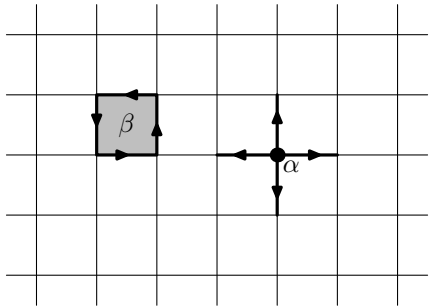}
\caption{Kitaev's toric code model on the plane. The CW structure is given by a rectangular lattice. Examples of star and plaquette operators indicated with arrows. ($n=1$)\label{fig:kitaevplane}}
\end{figure}

To understand charged states we need to compute the groups $H^\infty_0(\mathbb{R}^2,\mathbb{Z}_2)$ and $H_\infty^1(E;\mathbb{Z}_2)$. It turns out that both are isomorphic to $\mathbb{Z}_2$. The nonzero element of $H^\infty_0(\mathbb{R}^2,\mathbb{Z}_2)$ is represented by the formal sum of edges along any semi-infinite path. The boundary of such a representative consists of a single vertex, which can be interpreted as the location of a charged quasiparticle. Similarly, the nontrivial element of $H_\infty^1(\mathbb{R}^2;\mathbb{Z}_2)$ can be found using a semi-infinte dual path. A dual path is a sequence of plaquettes (2-cells) such that consecutive ones share an edge. The formal sum of the shared edges represents the generator, its coboundary is supported on the first plaquette.

To compute polarization operators in these excited states we need to find the nontrivial elements of $H_\infty^0(\mathbb{R}^2,\mathbb{Z}_2)$ and $H^\infty_1(\mathbb{R}^2;\mathbb{Z}_2)$. In the former case it is given by the sum of all vertices while in the latter case by the sum of all plaquettes. Given a finite subset $K_-$ of vertices, the sum of all vertices outside $K_-$ has a finite coboundary, which can be visualized as the set of edges joining $K_-$ to its complement. It is easier to imagine the situation when $K_-$ has a simple shape, e.g. the set of vertices in a large circle. In this case the coboundary is a dual path encircling the points inside, therefore it detects a semi-infinite path (i.e. picks up a factor of $-1$) iff its endpoint lies within the circle. As $K_-$ grows, eventually any point will be inside, therefore the corresponding operator will count the total charge carried by $Z$-type excitations.

Similarly, $H^\infty_1(\mathbb{R}^2;\mathbb{Z}_2)$ is generated by the class of the sum of all the plaquettes. The sum of plaquettes outside some finite subset $K_+$ has a finite boundary. Again, if $K_+$ is the set of plaquettes in a circle then this boundary is a closed path and the operator counts the number of $X$-type excitations inside (mod 2). The total charge is recovered as $K_+$ grows and eventually contains any plaquette.

It is possible to introduce a braiding on the charged sectors, and it turns out that the quasiparticles are in fact anyons. In our treatment this will follow from the more general setting discussed in sec. \ref{sec:homprod}

Slightly more generally, we can consider Kitaev's model on a noncompact surface with genus $0$ and finitely many ends. To obtain such a surface properly embedded in $\mathbb{R}^3$, we cut $k$ circular holes in the unit sphere and attach copies of the cylinder $S^1\times[0,\infty)$ along the boundaries of the holes. We denote this surface by $\Sigma_k$ (Figure \ref{fig:starfish}). A CW structure can be given using those on the sphere and on the cylinder. We remark that $\Sigma_1\simeq\mathbb{R}^2$, therefore we can get back the results for the plane as a special case.
\begin{figure}
\centering
\includegraphics{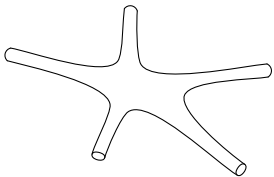}
\caption{Proper embedding of a non-compact surface obtained by cutting $k$ holes on a $2$-sphere and attaching semi-infinite cylinders. ($k=5$)\label{fig:starfish}}
\end{figure}

One can compute that $H_1(\Sigma_k;\mathbb{Z}_2)\simeq H_{\text{lf}}^1(\Sigma_k;\mathbb{Z}_2)\simeq \mathbb{Z}_2^{k-1}$ and the pairing between the two is trivial. Indeed, for any cylinder one can take a closed path (dual path) going around the $S^1$ factor. These are nontrivial, since none of them is the boundary (coboundary) of a finite sum of plaquettes (vertices). They generate the two groups, but in both cases there is one relation. The sum of all $k$ generators can be realized as the boundary of the sum of plaquettes (coboundary of the sum of vertices) on the sphere on which we have cut the holes. Therefore the ground states can store $2k-2$ classical bits.

Charged sectors can again be constructed using semi-infinite paths and dual paths. This time not all of these are equivalent, since any such path will eventually leave every cyilinder but one.\footnote{Note that we consider paths in the graph theoretic sense and therefore can only cross the sphere a finite number of times. The situation is different for injective continuous maps $[0,\infty)\to\Sigma_k$ (but not for proper ones).} Taking $k$ paths $\gamma_1,\ldots,\gamma_k$ and dual paths $\delta_1,\ldots,\delta_k$ going to infinity in the corresponding cylinder gives representatives of independent generators of $H^\infty_0(\Sigma_k;\mathbb{Z}_2)\simeq\mathbb{Z}_2^k$ and $H_\infty^1(\Sigma_k;\mathbb{Z}_2)\simeq\mathbb{Z}_2^k$, respectively.

The existence of inequivalent paths gives rise to an interesting phenomenon. We may assume that $\gamma_1,\ldots,\gamma_k$ start at the same point. Choose $i\neq j$ from $1,\ldots,k$ and consider the sum $\gamma_i+\gamma_j$. This path is infinite in both directions, therefore it has empty boundary, but it is not a boundary of any locally finite $2$-chain, i.e. it represents a nontrivial element in $H^{\text{lf}}_1(\Sigma_k;\mathbb{Z}_2)$. Therefore if we create both excitations, then it is not possible to tell the difference by measurements in compact contractible regions. Still, the state changes during this process, as can be seen using the polarization operators. There is also a phyisical picture behind this. Namely, it is possible to apply logical operations by creating a particle at infinity in one of the cylinders, moving towards the sphere and then again towards infinity, but along some other cylinder. The same can be done using dual paths.

Mathematically, this can be understood using the maps $H^{\text{lf}}_1(\Sigma_k;\mathbb{Z}_2)\to H^\infty_0(\Sigma_k;\mathbb{Z}_2)$ and $H^1(\Sigma_k;\mathbb{Z}_2)\to H_\infty^1(\Sigma_k;\mathbb{Z}_2)$ from the long exact sequences in eqs. \eqref{eq:longhomology} and \eqref{eq:longcohomology}. The groups $H^{\text{lf}}_1(\Sigma_k;\mathbb{Z}_2)$ and $H^1(\Sigma_k;\mathbb{Z}_2)$ are isomorphic to $\mathbb{Z}_2^{k-1}$ and are generated by the sums $\gamma_i+\gamma_j$ and $\delta_i+\delta_j$, respectively.

Finally, let us see how polarization operators look like. To each cylinder one can associate the element of $H_\infty^0(\Sigma_k;\mathbb{Z}_2)$ represented by the sum of its vertices and the element of $H^\infty_1(\Sigma_k;\mathbb{Z}_2)$ represented by the sum of its plaquettes. Truncating these to the complement of a large finite set of vertices (plaquettes) and taking the coboundary (boundary) results in a dual path (path) around the cylinder, and its expected value counts the number of paths (dual paths) in that cylinder attached to localized excitations (mod 2).

In the examples above, for $k>1$ there are frustration free ground states which give rise to GNS representations that are not quasiequivalent to each other. This is related to the classical information stored in the states. Inequivalent representations can arise in a different way for surfaces with infinite genus. As a simple example consider the surface $\Sigma$ of a thickened semi-infinite ladder embedded in $\mathbb{R}^3$ (Figure \ref{fig:infinitegenus}). This surface has one end, it is orientable and has infinite genus. In this case $H_1(\Sigma;\mathbb{Z}_2)$ and $H_{\text{lf}}^1(\Sigma;\mathbb{Z}_2)$ are isomorphic to the direct sum of countably many copies of $\mathbb{Z}_2$, and the pairing between the two is nondegenerate. Therefore $\mathfrak{A}_{\mathrm{logical}}$ is the tensor product of countably many $2\times 2$ matrix algebras\footnote{This can be given a precise meaning as the direct limit of tensor products of an increasing number of copies.}, which has uncountably many pairwise inequivalent irreducible representations.
\begin{figure}
\centering
\includegraphics{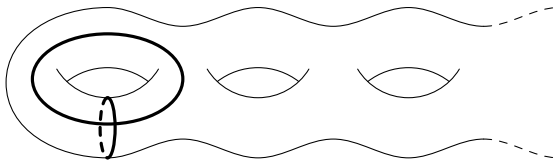}
\caption{An orientable surface with one end and infinite genus. Each hole gives rise to two of generators of $H_1(\Sigma;\mathbb{Z}_2)$ and of $H_\text{lf}^1(\Sigma;\mathbb{Z}_2)$. \label{fig:infinitegenus}}
\end{figure}

\subsection{Homological product codes}\label{sec:homprod}

The main motivation behind our studies comes from the homological product code construction. In ref. \cite{BravyiHastings} it was suggested that the homological product of the toric code with some fixed CSS code might support anyonic quasiparticles like the toric code itself. Given that the homological product corresponds to the Cartesian product of the underlying topological spaces, and the product of a plane with some other (nontrivial) space is no longer a surface, it is natural to consider the problem in our framework, possibly with $n>1$. Note that a CSS code does not determine a CW complex uniquely, therefore our starting point will be a finite (equivalently: compact) CW complex.

For this construction we need the following ingredients: a compact CW complex $F$, a natural number $n$ and a finite abelian group $G$. The space on which our system lives will be $E=F\times\mathbb{R}^2$ (where $\mathbb{R}^2$ is equipped with an arbitrary locally finite CW structure, e.g. the one given by the square lattice), and we will put a copy of $\ell^2(G)$ on each $n$-cell. The planar (abelian) Kitaev model is the special case when $F$ is a point and $n=1$. The problem of finding frustration free ground states and localized excitations is now reduced to a problem in algebraic topology: we need to understand various homology and cohomology groups of $E$. This can be done using the long exact sequences (eqs. \eqref{eq:longhomology} and \eqref{eq:longcohomology}). The result is stated in \cite[Example 3.10]{ends}, but for the calculations we need an explicit description of the group elements, which we give here.

Since $\mathbb{R}^2$ is contractible, $H_i(E;\hat{G})\simeq H_i(F;\hat{G})$ and $H^i(E;G)\simeq H^i(F;G)$ for any $i$. There is a canonical isomorphism induced by the homotopy equivalence $E=F\times\mathbb{R}^2\to F$ which is the projection onto the first factor. Its homotopy inverse is the inclusion $F\to F\times\{(0,0)\}\subseteq E$.\footnote{Here we assume without loss of generality that the origin is a $0$-cell.}

The inclusion of the cellular chain complex into the locally finite cellular chain complex induces trivial maps on homology. To see this, let $f$ be an $i$-cycle on $F$ with $\hat{G}$-coefficients and $\Gamma$ a semi-infinite path starting at the origin, thought of as an element in the locally finite chain complex of $\mathbb{R}^2$ with coefficients in $\mathbb{Z}$. Then $(-1)^{i}f\otimes\Gamma$ is a locally finite $i+1$-chain on $E$ with $\hat{G}$-coefficients and $\partial((-1)^{i}f\otimes\Gamma)=(-1)^{i}(\partial f)\otimes\Gamma+f\otimes\partial\Gamma=f\otimes\partial\Gamma$. Therefore the image of $f$ is a boundary. The map induced on cohomology groups is also trivial and the proof is similar.

$H^{\text{lf}}_i(E;\hat{G})$ is isomorphic to $H_{i-2}(F;\hat{G})$. If $p$ is the formal sum of all $2$-cells (with the appropriate signs $\pm1\in\mathbb{Z}$ so that $\partial p=0$) and $f$ is an $i-2$-cycle in $F$ with $\hat{G}$-coefficients then $\partial(f\otimes p)=(\partial f)\otimes p+(-1)^{i-2}f\otimes\partial p=0$, i.e. $f\otimes p$ is a locally finite cycle. When $f=\partial g$, $\partial(g\otimes p)=f\otimes p$ is a locally finite boundary, therefore we get a well-defned map $H_{i-2}(F;\hat{G})\to H^{\text{lf}}_i(E;\hat{G})$. It can be proved that this is an isomorphism. There is a similar isomorphism $H_{\text{lf}}^i(E;\hat{G})\simeq H^{i-2}(F;\hat{G})$, under which an $i-2$-cocycle $f$ of $F$ corresponds to $f\otimes p$ with $p$ a single $2$-cell of $\mathbb{R}^2$.

Now we can describe the logical algebra. For this we need $H_n(E;\hat{G})\simeq H_n(F;\hat{G})$ and $H_{\text{lf}}^n(E;G)\simeq H^{n-2}(F;\hat{G})$. The pairing between the two is identically $0$, therefore $\mathfrak{A}_{\mathrm{logical}}$ is commutative. Its dimension is equal to $|H_n(F;\hat{G})|\cdot|H^{n-2}(F;\hat{G})|$, therefore the ground states can store
\begin{equation}
\log_2|H_n(F;\hat{G})|+\log_2|H^{n-2}(F;G)|
\end{equation}
classical bits of information.

Next we need the groups $H^\infty_{i-1}(E;\hat{G})$ and $H_\infty^i(E;G)$. By the long exact sequence for homology, the induced map $H^{\text{lf}}_i(E;\hat{G})\to H^\infty_{i-1}(E;\hat{G})$ is injective and the map $H^\infty_{i-1}(E;\hat{G})\to H_{i-1}(E;\hat{G})$ is surjective, while their composition is $0$. Actually, $H^\infty_{i-1}(E;\hat{G})\simeq H^{\text{lf}}_i(E;\hat{G})\oplus H_{i-1}(E;\hat{G})$. The $H^{\text{lf}}_i(E;\hat{G})$ part of $H^\infty_{i-1}(E;\hat{G})$ is related to the different logical states (when $i=n$), therefore we concentrate on the other summand, which is isomorphic to $H_{i-1}(E;\hat{G})\simeq H_{i-1}(F;\hat{G})$.

Let $f$ be an $i-1$-cycle on $F$ and $\Gamma$ a semi-infinite path on $\mathbb{R}^2$ as before. Then $\partial(f\otimes\Gamma)=(-1)^{i-1}f\otimes\partial\Gamma$ has finite support, therefore $\gamma=f\otimes\Gamma$ represents an element of $H^\infty_{i-1}(E;\hat{G})$. If $f=\partial g$ then $f\otimes\Gamma=\partial(g\otimes\Gamma)-(-1)^i g\otimes\partial\Gamma$ is the sum of a locally finite boundary and a cycle with finite support, therefore it is trivial in $H^\infty_{i-1}(E;\hat{G})$. This gives a well-defined map $H_{i-1}(F;\hat{G})\to H^\infty_{i-1}(E;\hat{G})$. Similarly, if $\Delta$ is a semi-infinite dual path in $\mathbb{R}^2$ (more precisely, the corresponding cochain with coefficients $\pm1\in\mathbb{Z}$), then the map $g\mapsto g\otimes\Delta$ induces an injective map $H^{i-1}(F;\hat{G})\to H^\infty_i(E;\hat{G})$.

To find the possible charges, we need $H^\infty_{n-1}(F\times\mathbb{R}^2;\hat{G})\simeq H_{n-1}(F;\hat{G})\oplus H_{n-2}(F;\hat{G})$ and $H_\infty^n(F\times\mathbb{R}^2;G)\simeq H^{n-1}(F;G)\oplus H^n(F;G)$. The interesting parts are the summands $H_{n-1}(F;\hat{G})$ and $H^{n-1}(F;G)$, which give rise to excited states localized in cone-like regions. Any element from these subgroups is represented by a locally finite chain (cochain) supported inside any specified infinite cone and therefore these are localized and transportable. As usual, the total charge can be detected with polarization operators $P(\homatinfty{c},\cohomatinfty{d})$ with $\homatinfty{c}\in H^\infty_n(F\times\mathbb{R}^2;\hat{G})\simeq H_{n-1}(F;\hat{G})\oplus H_n(F;\hat{G})$ and $\cohomatinfty{d}\in H_\infty^{n-1}(F\times\mathbb{R}^2;G)\simeq H^{n-1}(F;G)\oplus H^{n-2}(F;G)$. Here the first summands are related to the charges considered above, while frustration free ground states can be distinguished using the second summands.

Since $E=F\times\mathbb{R}^2$ is essentially planar, it is also possible to introduce braiding operators in a canonical way. We work with the simplified form
\begin{equation}
\varepsilon_{(\gamma_1,\delta_1),(\gamma_2,\delta_2)}(\homatinfty{p_2},\cohomatinfty{q_2})=e^{-2\pi i(\langle\homatinfty{p_2},\cohomatinfty{\delta_1}\rangle+\langle\homatinfty{\gamma_1},\cohomatinfty{q_2}\rangle)}
\end{equation}
of the braiding (see eq. \eqref{eq:braidingatinfinity}). Let $\Gamma_1,\Gamma_2$ be two semi-infinite paths on $\mathbb{R}^2$ and $\Delta_1,\Delta_2$ two semi-inifinite dual paths on $\mathbb{R}^2$. If $f_1,f_2$ are $n-1$-cycles on $F$ with coefficients in $\hat{G}$ and $g_1,g_2$ are $n-1$-cocycles on $F$ with coefficients in $G$, then $\gamma_i:=f_i\otimes\Gamma_i$ and $\delta_i:=g_i\otimes\Delta_i$ represent homology and cohomology classes at infinity. These can be interpreted as charges labelled by homology and cohomology classes of $F$. We fix the orientations by requiring that edges in $\Gamma_i$ point towards infinity along the path and edges in $\Delta_i$ point to the left when we look in the direction towards infinity. Assume that $\gamma_i,\delta_i$ are localized in infinite cones $C_i$ such that a counterclockwise rotation takes $C_2$ to $C_1$.

We need to find the classes $\homatinfty{p_2},\cohomatinfty{q_2}$ which mean a counterclockwise rotation of the second charge. Let $P$ be the sum of all $2$-cells in $\mathbb{R}^2$ with clockwise orientation and let $Q$ be the sum of all $0$-cells in $\mathbb{R}^2$ with positive orientation (i.e. the coboundary of a term consists of edges pointing away from the vertex in question). Then $p_2=(-1)^{n-1}f_2\otimes P$ and $q_2=(-1)^{n-1}g_2\otimes Q$ are the relevant (co-)homology classes at infinity. We can calculate the exponent as
\begin{equation}
\begin{split}
\langle\homatinfty{p_2},\cohomatinfty{\delta_1}\rangle
 & = \langle p_2,\partial^T\delta_1\rangle-\langle\partial p_2,\delta_1\rangle  \\
 & = \langle (-1)^{n-1}f_2\otimes P,(-1)^{n-1}g_1\otimes\partial^T\Delta_1\rangle-\langle 0,g_1\otimes\Delta_1\rangle  \\
 & = \langle P,\partial^T\Delta_1\rangle\cdot\langle f_2,g_1\rangle = -\langle f_2,g_1\rangle
\end{split}
\end{equation}
and
\begin{equation}
\begin{split}
\langle\homatinfty{\gamma_1},\cohomatinfty{q_2}\rangle
 & = \langle\gamma_1,\partial^T q_2\rangle-\langle\partial\gamma_1,q_2\rangle  \\
 & = \langle f_1\otimes\Gamma_1,0\rangle-\langle (-1)^{n-1}f_1\otimes\partial\Gamma_1,(-1)^{n-1}g_2\otimes Q\rangle  \\
 & = -\langle \partial\Gamma_1,Q\rangle\cdot\langle f_1,g_2\rangle = -\langle f_1,g_2\rangle,
\end{split}
\end{equation}
therefore the phase in the braiding is
\begin{equation}
e^{-2\pi i(\langle f_2,g_1\rangle+\langle f_1,g_2\rangle)}.
\end{equation}
If we started with the opposite ordering, i.e. a clockwise rotation takes $C_2$ to $C_1$, then we can take $p_2=q_2=0$. In this case the braiding is the identity as expected.

The statistics of these excitations is related to the braiding of an endomorphism with itself. For the endomorphism $\rho_{(\gamma,\delta)}$ with $\gamma:=f\otimes\Gamma$ and $\delta:=g\otimes\Delta$ as before, we either need to take $p=(-1)^{n-1}f\otimes P$ and $q=0$ or $p=0$ and $q=(-1)^{n-1}g\otimes Q$, therefore the phase factor in the twist is $e^{2\pi i\langle f,g\rangle}$.

\section{Cellular homology, homology at infinity}\label{sec:homology}

In this section we collect some facts on CW complexes and various homology groups associated with them. The exposition is based on refs. \cite{Hatcher} and \cite{ends}. For the readers' convenience we recall the definition:
\begin{defn}
A cell complex (or CW complex) $E$ is a topological space defined inductively as follows. $E^0$ is a discrete space, whose points are called 0-cells. For $i\ge 1$ let $\mathcal{E}_i$ be an index set and $\varphi_\alpha:S^{i-1}\to E^{i-1}$ ($\alpha\in \mathcal{E}_i$) a family of maps. Here we regard $S^{i-1}$ as the boundary of the closed $i$-ball $D^i_\alpha$. The $i$-skeleton $E^i$ is formed as $(E^{i-1}\sqcup(\bigsqcup_{\alpha\in \mathcal{E}_i}D^i_\alpha))/\sim$ where $\sim$ is the equivalence relation generated by $\varphi_\alpha(x)\sim x$ whenever $x\in S^{i-1}\simeq\partial D^i_\alpha$. The interior $e_\alpha$ of the ball $D^i_\alpha$ in $E^i$ is also called an $i$-dimensional cell or $i$-cell.

Finally, we let $E$ be the union (more precisely, direct limit) of the spaces $E^i$. If $E^n=E$ for some $n$ (i.e. the index sets $\mathcal{E}_i$ are empty for $i>n$) then the smallest such $n$ is called the dimension of the cell complex.
\end{defn}
For example, a $1$-dimensional cell complex is the same as a graph with $0$-cells as vertices and $1$-cells as edges, while surfaces are special 2-dimensional cell complexes.

For any pair $(\alpha,\beta)\in \mathcal{E}_i\times \mathcal{E}_{i-1}$ we can form the map $S^{i-1}\to E^{i-1}\to S^{i-1}$ which is the composition of the attaching map $\varphi_\alpha$ and the map collapsing the complement of the $i-1$-cell $\beta$ in $E^{i-1}$ to a point. We let $d_{\alpha\beta}$ denote the degree of this map (after choosing an orientation for all the appearing spheres). We will always assume that for each $i$-cell $\beta$ the set $\{\alpha\in \mathcal{E}_{i+1}|d_{\alpha\beta}\neq 0\}$ is finite. This is necessary for some of the appearing sums to make sense.

Next we introduce various chain complexes. To this end we will fix a finite abelian group $G$. We use additive notation for abelian groups, i.e. $0$ is the identity element, $ng=g+g+\cdots+g$ etc. Let $\hat{G}$ denote the dual of $G$, i.e. $\hat{G}=\Hom(G,\mathbb{Q}/\mathbb{Z})$, which may also be identified with the set of characters or equivalence classes of irreducible representations of $G$. Note that the dual of $\hat{G}$ is canonically isomorphic to $G$.

For each $i$ we let
\begin{equation}
C^{\text{lf}}_i(E;\hat{G})=\prod_{\alpha\in \mathcal{E}_i}\hat{G},
\end{equation}
where ``lf'' stands for locally finite. Elements in this group can be thought of as formal infinite linear combinations of $i$-cells with coefficients in $\hat{G}$ and accordingly, we will use the notation $\chi e_\alpha$ when referring to an element of the factor with index $\alpha$. The \emph{locally finite cellular chain complex} of $E$ (with coefficients in $\hat{G}$) is the direct sum
\begin{equation}
C^{\text{lf}}_\bullet(E;\hat{G})=\bigoplus_{i\in\mathbb{N}}C^{\text{lf}}_i(E;\hat{G})
\end{equation}
together with the boundary map $\partial:C^{\text{lf}}_\bullet(E;\hat{G})\to C^{\text{lf}}_\bullet(E;\hat{G})$ which is the homomorphism uniquely determined by
\begin{equation}
\partial\Big(\sum_{\alpha\in \mathcal{E}_i}\chi_\alpha e_\alpha\Big)=\sum_{\beta\in \mathcal{E}_{i-1}}\Big(\sum_{\alpha\in \mathcal{E}_i}d_{\alpha\beta}\chi_\alpha\Big)e_\beta.
\end{equation}

The \emph{cellular chain complex} of $E$ is the subgroup
\begin{equation}
C_\bullet(E;\hat{G})=\bigoplus_{i\in\mathbb{N}}C_i(E;\hat{G})\le C^{\text{lf}}_\bullet(E;\hat{G})
\end{equation}
with
\begin{equation}
C_i(E;\hat{G})=\bigoplus_{\alpha\in \mathcal{E}_i}\hat{G}
\end{equation}
together with the restriction of $\partial$.

Elements of $C^{\text{lf}}_\bullet(E;\hat{G})$ are called locally finite (cellular) chains, a cycle is a chain $b$ such that $\partial b=0$ while chains of the form $\partial d$ are called boundaries. A basic property of the chain complex is that $\partial^2=0$. Thus the group of boundaries is a subgroup of the group of cycles. The quotient group is the locally finite homology. Note that $\partial$ is homogeneous of degree $-1$ (i.e. maps $C^{\text{lf}}_i(E;\hat{G})$ into $C^{\text{lf}}_{i-1}(E;\hat{G})$) and therefore \emph{locally finite homology groups} can be defined for each degree as
\begin{equation}
H^{\text{lf}}_i(E;\hat{G})=(\ker\partial\cap C^{\text{lf}}_i(E;\hat{G}))/(\partial(C^{\text{lf}}_{i+1}(E;\hat{G}))).
\end{equation}
Similarly, the \emph{homology groups} are defnined as
\begin{equation}
H_i(E;\hat{G})=(\ker\partial\cap C_i(E;\hat{G}))/(\partial(C_{i+1}(E;\hat{G}))).
\end{equation}

Given a chain $b=\sum_\alpha \chi_\alpha e_\alpha$ we define its support as $\supp b=\{\alpha\in\mathcal{E}|\chi_\alpha\neq 0\}$. Note that $b$ belongs to $C_\bullet(E;\hat{G})$ iff $|\supp b|<\infty$. It is important to realize that it may happen for some $b\in C^{\text{lf}}_\bullet(E;\hat{G})$ that $\partial b\in C_\bullet(E;\hat{G})$ yet there is no chain $b'\in C_\bullet(E;\hat{G})$ with $\partial b=\partial b'$.

Next we introduce cochain complexes. For each $i$ we let
\begin{equation}
C^i(E;G)=\prod_{\alpha\in \mathcal{E}_i}G.
\end{equation}
Again, we will think of the elements of $C^i(E;G)$ as formal infinite linear combinations of the $e_\alpha$ but this time with coefficients from $G$. As before, we take $C^\bullet(E;G)$ to be the direct sum of the $C^i(E;G)$ for all $i\in\mathbb{N}$. We equip $C^\bullet(E;G)$ with the coboundary map $\partial^T:C^\bullet(E;G)\to C^\bullet(E;G)$ satisfying
\begin{equation}
\partial^T\Big(\sum_{\beta\in \mathcal{E}_i}g_\beta e_\beta\Big)=\sum_{\beta\in \mathcal{E}_i}\sum_{\alpha\in \mathcal{E}_{i+1}}\Big(d_{\alpha\beta}g_\beta\Big)e_\alpha.
\end{equation}

Finally, we will make use of the subgroups
\begin{equation}
C^i_{\text{lf}}(E;G)=\bigoplus_{\alpha\in \mathcal{E}_i}G
\end{equation}
of finite linear combinations of $i$-cells and let $C^\bullet_{\text{lf}}(E;G)$ be their direct sum. With the restriction of $\partial^T$ this is a chain complex called the \emph{locally finite cellular cochain complex}.

We will call an element of $C^\bullet(E;G)$ a cochain, a cochain in $\ker\partial^T$ a cocycle and one in $\im\partial^T$ a coboundary. The support of a cochain $a=\sum_\alpha g_\alpha e_\alpha$ is $\supp a=\{\alpha\in\mathcal{E}|g_\alpha\neq 0\}$. Similarly as before, we define the \emph{locally finite cohomology groups} (with coefficients in $G$) to be
\begin{equation}
H_{\text{lf}}^i(E;G)=(\ker\partial^T\cap C_{\text{lf}}^i(E;G))/(\partial^T(C_{\text{lf}}^{i-1}(E;G)))
\end{equation}
and the \emph{cohomology groups} as
\begin{equation}
H^i(E;G)=(\ker\partial^T\cap C^i(E;G))/(\partial^T(C^{i-1}(E;G))).
\end{equation}

Again, it may happen that $\partial^T a\in C_{\text{lf}}^\bullet$ but there is no $a'\in C_{\text{lf}}^\bullet$ with $\partial^Ta=\partial^Ta'$.

There is a canonical map $C_\bullet(E;\hat{G})\times C^\bullet(E;G)\to\mathbb{Q}/\mathbb{Z}$ defined as
\begin{equation}
\langle\sum_\alpha\chi_\alpha e_\alpha,\sum_\beta g_\beta e_\beta\rangle:=\sum_\alpha\chi_\alpha(g_\alpha)
\end{equation}
and similarly for $C^{\text{lf}}_\bullet(E;\hat{G})$ and $C_{\text{lf}}^\bullet(E;G)$. In both cases the appearing sum contains finitely many terms, but in general, it is not possible to extend these to a map $C^{\text{lf}}_\bullet(E;\hat{G})\times C^\bullet(E;G)\to\mathbb{Q}/\mathbb{Z}$ even though the two maps agree on $C_\bullet(E;\hat{G})\times C_{\text{lf}}^\bullet(E;G)$. The boundary and coboundary maps satisfy $\langle b,\partial^T a\rangle=\langle\partial b,a\rangle$ whenever at most one of $a$ and $b$ has infinite support. However, it may happen that $|\supp a|=|\supp b|=\infty$, $|\supp\partial^Ta|<\infty$ and $|\supp\partial b|<\infty$ (in particular, both sides make sense) but $\langle b,\partial^T a\rangle\neq\langle\partial b,a\rangle$. As a simple example, let $E=\mathbb{R}$ with $0$-cells at each integer point and $1$-cells joining $i$ and $i+1$ for $i\in\mathbb{Z}$, take $G=\hat{G}=\mathbb{Z}_2$ and let $a$ be the sum of points corresponding to the nonnegative integers and let $b$ be the sum of all the $1$-cells. Then $\partial^Ta$ is the $1$-cell joining $-1$ and $0$, while $\partial b=0$, therefore $\langle b,\partial^T a\rangle=1/2$ but $\langle\partial b,a\rangle=0$.

Recall that $C_\bullet(E;\hat{G})$ is regarded as a subspace of $C^{\text{lf}}_\bullet(E;\hat{G})$ equipped with the restricion of the boundary map. In other words the inclusion $i:C_\bullet(E;\hat{G})\to C^{\text{lf}}_\bullet(E;\hat{G})$ is a chain map, i.e. $\partial\circ i=i\circ\partial$. In general, the homology and locally finite homology groups (which are homology groups of the respective chain complexes) are different and accordingly, the map $i_*:H_\bullet(E;\hat{G})\to H^{\text{lf}}_\bullet(E;\hat{G})$ induced by $i$ fails to be an isomorphism.

In homological algebra, the standard tools to measure the extent to which a chain map fails to induce isomorphisms on the homology groups are the algebraic mapping cone and the associated long exact sequence (see e.g. in \cite{HiltonStammbach}). In general, for a chain map $f:A_\bullet\to B_\bullet$ we define its algebraic mapping cone to be the graded abelian group $C_\bullet=A_{\bullet-1}\oplus B_{\bullet}$ where the $-1$ refers to a shift in the grading (i.e. $C_n=A_{n-1}\oplus B_n$) equipped with the boundary map $\partial(a,b)=(-\partial a,f(a)+\partial b)$.

The maps $r:B_\bullet\to C_\bullet$ defined as $b\mapsto(0,b)$ and $s:C_\bullet\to A_{\bullet-1}$ given by $(a,b)\mapsto -a$ are chain maps forming a short exact sequence
\begin{equation}
\xymatrix{0\ar[r] & B_\bullet \ar[r]^r & C_\bullet \ar[r]^s & A_{\bullet-1} \ar[r] & 0}
\end{equation}
while the corresponding long exact sequence for homology groups reads
\begin{equation}
\xymatrix{\cdots\ar[r]^{s_*} & H_n(A) \ar[r]^{f_*} & H_n(B) \ar[r]^{r_*} & H_n(C) \ar[r]^{s_*} & H_{n-1}(A) \ar[r]^{f_*} & \cdots}
\end{equation}
i.e. the connecting homomorphism is equal to $f_*$.

Applying this construcion to the inclusion $i:C_\bullet(E;\hat{G})\to C^{\text{lf}}_\bullet(E;\hat{G})$ results in the long exact sequence
\begin{equation}\label{eq:longhomology}
\xymatrix{\cdots H_n(E;\hat{G}) \ar[r]^{i_*} & H^{\text{lf}}_n(E;\hat{G}) \ar[r]^{r_*} & H^{\infty}_{n-1}(E;\hat{G}) \ar[r]^{s_*} & H_{n-1}(E;\hat{G})\cdots},
\end{equation}
where the groups $H^{\infty}_n(E;\hat{G})$ are defined to be the homology groups of the algebraic mapping cone of $i$ up to a shift in the grading:
\begin{equation}
H^{\infty}_n(E;\hat{G})=H_n(C_{\bullet+1})
\end{equation}
with $C$ similarly defined as above. These groups constitute the \emph{homology of $E$ at infinity}.

Injectivity of the chain map $i$ implies that the homology at infinity has a simpler description. Namely, regarding $C_\bullet(E;\hat{G})$ as a subspace of $C^{\text{lf}}_\bullet(E;\hat{G})$ as before we can write $\partial(b,d)=(-\partial b,b+\partial d)$, therefore a pair $(b,d)$ is a cycle iff $b=-\partial d$. This means that an $n$-cycle at infinity is essentially a locally finite $(n+1)$-chain with finite boundary, while an $n$-boundary at infinty is a locally finite $(n+1)$-boundary up to a term of the form $(-\partial b,b)$ with $|\supp b|<\infty$. We will denote by $\homatinfty{b}$ the homology class at infinity represented by $(-\partial b,b)$ when $|\supp \partial b|<\infty$.

Similarly, one can introduce the \emph{cohomology groups of $E$ at infinity}, which fit into the long exact sequence
\begin{equation}\label{eq:longcohomology}
\xymatrix{\cdots H_{\text{lf}}^n(E;G) \ar[r]^{i^*} & H^n(E;G) \ar[r] & H_{\infty}^n(E;G) \ar[r] & H_{\text{lf}}^{n+1}(E;G) \cdots}
\end{equation}
where the connecting homomorphism is equal to $i^*$.

Again, since $C_{\text{lf}}^\bullet(E;G)\le C^\bullet(E;G)$, we can write $\partial^T(a,c)=(-\partial^Ta,a+\partial^Tc)$, therefore an $n$-cocycle at infinity is essentially an $n$-cochain with finite coboundary, and an $n$-boundary at infinity is an $n$-boundary up to a term of the form $(-\partial^T a,a)$ with $|\supp a|<\infty$. The cohomology class at infinity represented by $(-\partial^T a,a)$ will be denoted by $\cohomatinfty{a}$ when $|\supp\partial^T a|<\infty$.

For sufficiently nice spaces there is a more topological way to view homology and cohomology at infinity. Namely, if we define the \emph{end space} $e(E)$ as the set of proper maps $[0,\infty)\to E$ with the compact-open topology then the singular (co-)homology groups of $e(E)$ are the same as (co-)homology groups of $E$ at infinity. For example $\dim H^{\infty}_0(E,\mathbb{Q})$ is the number of path components of $e(E)$, each of which corresponds to a possible ``direction'' in which a moving point can escape to infinity.

It is important to note that while homotopic maps induce the same homomorphisms on homology and cohomology groups, this is not the case for the locally finite variants and (co-)homology groups at infinity. In fact, not every continuous map induces such homomorphisms. Instead, we need to restrict to \emph{proper} maps, i.e. those under which the preimage of any compact set is compact, and \emph{proper homotopies}, i.e. homotopies between proper maps which are themselves proper. As a consequence, homotopy equivalent spaces may have non-isomorphic homology groups at infinity, but the proper homotopy type of the space does determine locally finite (co-)homology groups as well as (co-)homology groups at infinity.

Finally, we introduce a pairing between homology and cohomology groups at infinity. Let $\homatinfty{d}\in H^\infty_n(E;\hat{G})$ and let $\cohomatinfty{a}\in H_\infty^n(E;G)$. This means that $d\in C^{\text{lf}}_{n+1}(E;\hat{G})$, $a\in C^n(E;G)$ and $|\supp\partial d|<\infty$, $|\supp\partial^T a|<\infty$. Then we can form the difference
\begin{equation}
\langle\homatinfty{d},\cohomatinfty{a}\rangle:=\langle d,\partial^T a\rangle-\langle \partial d,a\rangle\in\mathbb{Q}/\mathbb{Z}.
\end{equation}
In order to see that this does not depend on the representatives, it is enough to prove that if either $(-\partial d,d)$ is a boundary at infinity or $(-\partial^T a,a)$ is a coboundary at infinity, the difference is $0$. Indeed, if $d=\partial f+d'$ with $d'\in C_{n+1}(E;\hat{G})$ then
\begin{equation}
\begin{split}
\langle d,\partial^Ta\rangle-\langle \partial d,a\rangle
 & = \langle\partial f+d',\partial^Ta\rangle-\langle \partial (\partial f+d'),a\rangle  \\
 & = \langle\partial f,\partial^Ta\rangle+\langle d',\partial^Ta\rangle-\langle\partial^2f,a\rangle-\langle\partial d',a\rangle  \\
 & = \langle f,(\partial^T)^2a\rangle+\langle d',\partial^Ta\rangle-\langle\partial^2f,a\rangle-\langle d',\partial^T a\rangle  \\
 & = 0
\end{split}
\end{equation}
since $|\supp\partial^T a|<\infty$ and $|\supp d'|<\infty$. The other case is similar. Therefore we get a well defined bilinear map $H^\infty_n(E;\hat{G})\times H_\infty^n(E;G)\to\mathbb{Q}/\mathbb{Z}$.

\section{Kitaev model on cell complexes}\label{sec:model}

In this section we will introduce the model we are working with. Since we are mainly interested in systems with infinitely many sites, we use the C*-algebraic framework, which we briefly recapitulate (see ref. \cite{BR1}). The basic objects in our context are quasi-local algebras and interactions. For simplicity the local Hilbert spaces will be the same and finite dimensional. Let $S$ be a set and $\mathcal{H}$ a Hilbert space with $\dim\mathcal{H}\le\infty$. For each finite subset $\Lambda\subseteq S$ we let $\mathcal{H}_\Lambda\simeq\mathcal{H}^{\otimes|\Lambda|}$ and when $\Lambda\cap\Lambda'=\emptyset$ we identify $\mathcal{H}_{\Lambda\cup\Lambda'}$ with $\mathcal{H}_\Lambda\otimes\mathcal{H}_{\Lambda'}$. We also let $\mathfrak{A}_\Lambda=\boundeds(\mathcal{H}_\Lambda)$ be the C*-algebra of bounded operators acting on the corresponding Hilbert space.

In this way for each pair $\Lambda_1\subseteq\Lambda_2\subseteq S$ of finite subsets one can define an inclusion $\mathfrak{A}_{\Lambda_1}\hookrightarrow\mathfrak{A}_{\Lambda_2}$ by $A\mapsto A\otimes\id_{\mathcal{H}_{\Lambda_2\setminus\Lambda_1}}$. These maps form a direct system (in the category of C*-algebras), the direct limit of which is called the quasi-local algebra and will be denoted $\mathfrak{A}$. Its elements are called (quasi-local) observables.

This object comes equipped with injective homomorphisms $\mathfrak{A}_\Lambda\hookrightarrow\mathfrak{A}$ and we will henceforth identify the algebras $\mathfrak{A}_\Lambda$ with their images in $\mathfrak{A}$. It is then true that the union of the subalgebras $\mathfrak{A}_\Lambda$ for all finite subsets $\Lambda$ is dense in $\mathfrak{A}$. Elements in this dense subset (in fact, subalgebra) are called local, and the support of a local observable $A$ (denoted by $\supp A$) is the smallest $\Lambda$ (with respect to set containment) such that $A\in\mathfrak{A}_\Lambda$.

Similarly, for any subset $S'\subseteq S$ we define the subalgebra $\mathfrak{A}_{S'}$ as the closure of the union of the subalgebras $\mathfrak{A}_\Lambda$ with $\Lambda\subseteq S'$ finite.

Later we will encounter families of points in topological spaces indexed by finite subsets of an infinite set. Since the set of finite subsets is a directed set ordered by inclusion, these families are actually nets. We will use the notation $\lim_{\Lambda\to S}$ to refer to limits of such nets. We will use the fact that if the topological space in question is a complete metric space then Cauchy nets converge.

In the case of infinite systems it is no longer possible to specify a general time evolution in terms of a Hamiltonian generating a 1-parameter group of unitaries, but instead we need to deal with a 1-parameter group of automorphisms of $\mathfrak{A}$. One way to construct such evolutions is to consider local Hamiltonians defined in terms of interactions as follows. To each finite subset $\Lambda\subseteq S$ we take a self-adjoint element $\Phi_\Lambda$ in $\mathfrak{A}_\Lambda$ then let
\begin{equation}
H_\Lambda=\sum_{\Lambda'\subseteq\Lambda}\Phi_{\Lambda'}
\end{equation}
and for $A\in\mathfrak{A}_\Lambda$ we define
\begin{equation}
\delta(A)=\lim_{\Lambda\to S}i[H_\Lambda,A],
\end{equation}
where the limit is taken over the set of finite subsets of $S$ ordered by inclusion. Under certain conditions on the interaction terms $\Phi_\Lambda$, the closure of this operator generates an action $\alpha_t$ of $\mathbb{R}$ on $\mathfrak{A}$.

Recall that Kitaev's toric code model is defined on qubits living on the edges (1-cells) of a graph drawn on a surface. Generalising this idea, we consider cells of arbitrary dimension instead of 1-cells. Correspondingly, the role of the surface will be played by a cell complex.

In what follows we fix a finite abelian group $G$ and a finite dimensional cell complex $E$ satisfying the assumption above. The index set parametrizing the cells of dimension $i$ will be denoted by $\mathcal{E}_i$ and we let $\mathcal{E}=\bigsqcup_{i\in\mathbb{N}}\mathcal{E}_i$ be their disjoint union. This will be the set of sites in our system. The Hilbert space at each site will be $\mathcal{H}=\ell^2(G)$ (with orthonormal basis $\{\ket{g}\}_{g\in G}$). We let $\mathfrak{A}$ be the corresponding quasi-local algebra. For $A\in\boundeds(\mathcal{H})$ we use the notation $(A)_\alpha$ to refer to the corresponding operator acting at the site $\alpha$.

The interaction will be given in terms of generalised Pauli matrices. We let these act on $\ket{h}\in\mathcal{H}$ as
\begin{equation}
X^g\ket{h}=\ket{g+h}\text{ and }Z^\chi\ket{h}=\chi(h)\ket{h},
\end{equation}
where $g\in G$ and $\chi:G\to\mathbb{C}$ is an irreducible character. Note that both types of operators are unitary. Next we take tensor products of these to define local operators corresponding to chains and cochains. For $b=\sum_\alpha \chi_\alpha e_\alpha\in C_\bullet(E;\hat{G})$ we let
\begin{equation}
Z^b=\prod_{\alpha\in\mathcal{E}}(Z^{\chi_\alpha})_\alpha
\end{equation}
and for $a=\sum_\alpha g_\alpha e_\alpha\in C_{\text{lf}}^\bullet(E;G)$ we let
\begin{equation}
X^a=\prod_{\alpha\in\mathcal{E}}(X^{g_\alpha})_\alpha.
\end{equation}
Note that since $X^0=Z^0=I$ (with $0$ the neutral element and the trivial character) in the tensor products above only finitely many non-identity factors are present and hence we get well-defined elements in the local algebra. In fact, $\supp X^a=\supp a$ and $\supp Z^b=\supp b$.

It is easy to see that scalar multiples of operators of the form $X^aZ^b$ form a group and that both $a\mapsto X^a$ and $b\mapsto Z^b$ are group homomorphisms. Moreover, they satisfy the following commutation relations:
\begin{subequations}\label{eq:XZcommutator}
\begin{align}
X^aX^{a'} & =X^{a'}X^a=X^{a+a'}  \\
Z^bZ^{b'} & =Z^{b'}Z^b=Z^{b+b'}  \\
X^aZ^b & =e^{-2\pi i\langle b,a\rangle} Z^bX^a.
\end{align}
\end{subequations}

Finally, the interaction terms will be higher dimensional analogues of the star and plaquette-operators (see Figure \ref{fig:starplaq}). For each $\alpha,\beta\in\mathcal{E}$ we let
\begin{equation}\label{eq:starplaquette}
A_\alpha=\frac{1}{|G|}\sum_{g\in G}X^{\partial^T(g e_\alpha)}\text{ and }B_\beta=\frac{1}{|G|}\sum_{\chi\in \hat{G}}Z^{\partial(\chi e_\beta)}
\end{equation}
and for a finite subset $\Lambda\subseteq\mathcal{E}$ we let
\begin{equation}
H_\Lambda=-\sum_{\substack{\alpha\in\mathcal{E}  \\  \supp A_\alpha\subseteq\Lambda}}A_\alpha-\sum_{\substack{\beta\in\mathcal{E}  \\  \supp B_\beta\subseteq\Lambda}}B_\beta
\end{equation}
be the local Hamiltonian.
\begin{figure}
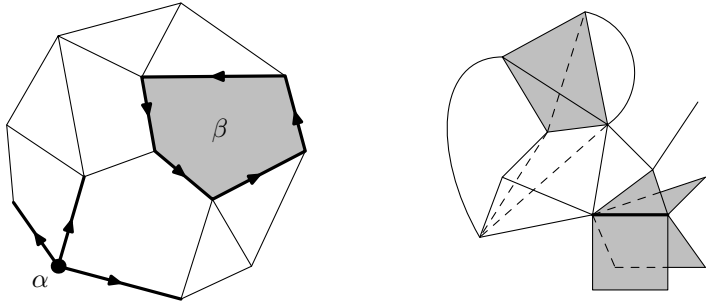

\begin{subfigure}{.45\textwidth}
\centering
\includegraphics{starplaquette1.mps}
\end{subfigure}
\begin{subfigure}{.45\textwidth}
\centering
\includegraphics{starplaquette2.mps}
\end{subfigure}
\caption{Left: A CW structure on the disk. Supports of the star operator $A_\alpha$ and the plaquette operator $B_\beta$ are marked with arrows. Here $n=1$, $\alpha$ is a $0$-chain (vertex) and $\beta$ is a $2$-chain (face). Right: A three dimensional CW complex. Supports of a generalised star operator (bottom) and a generalised plaquette operator (top) highlighted in the $n=2$ case.\label{fig:starplaq}}
\end{figure}

It follows from the commutation relations that the terms appearing in the Hamiltonian commute with each other. Indeed, for any $c\in C_{\text{lf}}^\bullet(E;G)$ and $d\in C_\bullet(E;\hat{G})$ we have
\begin{equation}
X^{\partial^Tc}Z^{\partial d}=e^{-2\pi i\langle \partial d,\partial^T c\rangle}Z^{\partial d}X^{\partial^Tc}=e^{-2\pi i\langle \partial^2 d,c\rangle}Z^{\partial d}X^{\partial^Tc}=Z^{\partial d}X^{\partial^Tc},
\end{equation}
therefore each term in $A_\alpha$ commutes with each term in $B_\beta$.

When $G=\mathbb{Z}_d$ our $X$ and $Z$ operators specialize to the usual qudit Pauli matrices and tensor products of their powers. In the $d=2$ case these are proportional to the spin observables of a spin-$\frac{1}{2}$ particle.

Note that while a derivation $\delta:\mathfrak{A}\to\mathfrak{A}$ as above may always be defined using our interaction, it may or may not generate a time evolution depending on the structure of $E$. As an important example, if both $|\supp\partial(\chi e_\alpha)|$ and $|\supp\partial^T(ge_\alpha)|$ is bounded then a time evolution exists \cite[Proposition 6.2.3]{BR2}.

The set $\mathcal{E}$ of sites is partitioned into subsets according to the dimensions of the cells in the cell complex. The definition makes it clear that the local Hamiltonians do not introduce any interaction between cells of different dimension. This implies that our system is in fact composed of independent subsystems, and hence it is enough to analyze them separately. Mathematically, $\mathfrak{A}$ is the tensor product\footnote{The algebras $\mathfrak{A}_{\mathcal{E}_i}$ are nuclear (being direct limits of finite dimensional algebras), hence they have a unique tensor product.} of the $\mathfrak{A}_{\mathcal{E}_i}$ and the local Hamiltonians $H_\Lambda$ with $\Lambda\subseteq\mathcal{E}_i$ generate separate commuting time evolutions $\alpha^{(i)}_t$ such that $\alpha_t=\alpha^{(0)}_t\circ\cdots\circ\alpha^{(\dim E)}_t$.

For this reason we will single out one of the subsystems and modify the model accordingly. This means that our model now depends on a cell complex $E$, a finite abelian group $G$ and a number $n\in\mathbb{N}$, the C*-algebra is the quasi-local algebra $\mathfrak{A}:=\mathfrak{A}_{\mathcal{E}_n}$ and the time evolution is generated by the local Hamiltonians
\begin{equation}\label{eq:localHamiltonian}
H_\Lambda=-\sum_{\substack{\alpha\in\mathcal{E}_{n-1}  \\  \supp A_\alpha\subseteq\Lambda}}A_\alpha-\sum_{\substack{\beta\in\mathcal{E}_{n+1}  \\  \supp B_\beta\subseteq\Lambda}}B_\beta.
\end{equation}

\section{Ground states}\label{sec:ground}

In this section we determine the set of frustration free ground states. In general, there exist ground states which do not have this property, but we do not have a complete classification of them. In a recent preprint by Cha, Naaijkens and Nachtergaele \cite{CNN}, all ground states of the planar model have been classified. We believe that their method can be adapted to our more general model as well.

Recall that a state on a C*-algebra $\mathfrak{A}$ is a linear functional $\omega:\mathfrak{A}\to\mathbb{C}$ which is positive, i.e. $\omega(A^*A)\ge 0$ for all $A\in\mathfrak{A}$ and normalized, i.e. $\norm{\omega}=1$. If $\mathfrak{A}$ is unital (as in our case), the second condition may be replaced with $\omega(I)=1$ where $I$ is the unit.

Given a time evolution $\alpha_t$ with generator $\delta$, ground states are defined to be states such that $-i\omega(A^*\delta(A))\ge0$ for all finite $\Lambda$ and $A\in\mathfrak{A}_\Lambda$ (and hence for any $A$ in the domain of $\delta$) \cite{BR2}.

We will make use of the following lemma:
\begin{lem}\label{lem:unitary}
Let $\mathfrak{A}$ be a unital C*-algebra, $\omega:\mathfrak{A}\to\mathbb{C}$ a state and $U\in\mathfrak{A}$ a unitary element, i.e. $UU^*=U^*U=I$. Then for any $Y\in\mathfrak{A}$ the inequalities
\begin{equation}
\left|\omega(U)\omega(Y)-\omega(UY)\right|\le\sqrt{1-|\omega(U)|^2}\sqrt{\omega(Y^*Y)}
\end{equation}
and
\begin{equation}
\left|\omega(Y)\omega(U)-\omega(YU)\right|\le\sqrt{1-|\omega(U)|^2}\sqrt{\omega(Y^*Y)}
\end{equation}
hold. In particular, if $|\omega(U)|=1$ then $\omega(U)\omega(Y)=\omega(UY)=\omega(YU)$.
\end{lem}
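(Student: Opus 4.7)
The natural route is the Cauchy–Schwarz inequality for states, $|\omega(A^*B)|^2 \le \omega(A^*A)\,\omega(B^*B)$, which is an immediate consequence of positivity (expand $\omega((\lambda A+B)^*(\lambda A+B))\ge 0$ and optimize over $\lambda\in\mathbb{C}$, or equivalently apply ordinary Cauchy–Schwarz in the GNS Hilbert space with the sesquilinear form $(A,B)\mapsto\omega(A^*B)$). Both differences we want to bound admit an obvious factorisation,
\begin{equation*}
\omega(UY)-\omega(U)\omega(Y)=\omega\bigl((U-\omega(U)I)\,Y\bigr),\qquad \omega(YU)-\omega(Y)\omega(U)=\omega\bigl(Y\,(U-\omega(U)I)\bigr),
\end{equation*}
so each is already of the form $\omega(A^*B)$ for a suitable choice of $A,B$.

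The key input is a short computation of the ``variance'' of $U$ under $\omega$. Using the unitarity relations $U^*U=UU^*=I$ together with $\omega(I)=1$ and $\omega(U^*)=\overline{\omega(U)}$, direct expansion gives
\begin{equation*}
\omega\bigl((U-\omega(U)I)^*(U-\omega(U)I)\bigr)=\omega\bigl((U-\omega(U)I)(U-\omega(U)I)^*\bigr)=1-|\omega(U)|^2.
\end{equation*}
This is the one place where unitarity is essential: without $U^*U=I$ (or $UU^*=I$) the cross terms would not collapse to a single $|\omega(U)|^2$ contribution.

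For the first inequality I would apply Cauchy–Schwarz with $A^*=U-\omega(U)I$ and $B=Y$, which gives $\omega(A^*A)=1-|\omega(U)|^2$ and $\omega(B^*B)=\omega(Y^*Y)$; taking square roots yields the stated bound. For the second inequality, I would take $A=Y^*$ and $B=U-\omega(U)I$, so that $A^*B=Y(U-\omega(U)I)$ and the same variance identity (now via $UU^*=I$) feeds into Cauchy–Schwarz. The concluding assertion is immediate: if $|\omega(U)|=1$ both right-hand sides vanish, so $\omega(UY)-\omega(U)\omega(Y)=0=\omega(YU)-\omega(Y)\omega(U)$, whence $\omega(U)\omega(Y)=\omega(UY)=\omega(YU)$. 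There is no real obstacle here — this is a textbook Cauchy–Schwarz estimate; the only subtlety worth flagging is that both unitarity relations $U^*U=I$ and $UU^*=I$ end up being used, one for each ordering, so the argument would not go through for a mere isometry or co-isometry alone.
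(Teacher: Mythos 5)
Your proof is essentially the paper's proof: factor $\omega(UY)-\omega(U)\omega(Y)=\omega(CY)$ and $\omega(YU)-\omega(Y)\omega(U)=\omega(YC)$ with $C=U-\omega(U)I$, apply Cauchy--Schwarz, and evaluate the variance $\omega(C^*C)=\omega(CC^*)=1-|\omega(U)|^2$ via unitarity. The first inequality comes out cleanly.

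Be careful with the second one, though. With your choice $A=Y^*$, $B=C$, Cauchy--Schwarz produces $\omega(A^*A)=\omega(YY^*)$, not $\omega(Y^*Y)$, so your argument does \emph{not} yield the ``stated bound'' verbatim --- and in fact the second inequality as printed is false. Counterexample: $\mathfrak{A}=M_2(\mathbb{C})$, $\omega(A)=A_{11}$, $U=\begin{pmatrix}0&1\\1&0\end{pmatrix}$, $Y=\begin{pmatrix}0&1\\0&0\end{pmatrix}$; then $|\omega(Y)\omega(U)-\omega(YU)|=1$ while $\sqrt{1-|\omega(U)|^2}\,\sqrt{\omega(Y^*Y)}=0$. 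The correct bound is $\sqrt{1-|\omega(U)|^2}\,\sqrt{\omega(YY^*)}$, which is exactly what your Cauchy--Schwarz gives. This is a typo in the lemma rather than an error in your reasoning, but it should be flagged rather than silently matched. (Also a small bookkeeping slip: for the second inequality the relevant factor $B^*B=(U-\omega(U)I)^*(U-\omega(U)I)$ collapses via $U^*U=I$, not $UU^*=I$ as you wrote; since $C$ is normal when $U$ is unitary the two orderings of $C$ and $C^*$ agree, so this is moot here, but the point matters for your closing isometry remark.) The typo is harmless for the paper's applications, since the lemma is only ever invoked through the ``in particular'' clause (where the right-hand side vanishes regardless) or with $Y$ itself unitary (where $Y^*Y=YY^*=I$).
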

\begin{proof}
By the Cauchy-Schwarz inequality we have
\begin{equation}
\begin{split}
\left|\omega(U)\omega(Y)-\omega(UY)\right|^2
 & = \left|\omega((\omega(U)I-U)Y)\right|^2  \\
 & \le \omega((\omega(U)I-U)^*(\omega(U)I-U))\omega(Y^*Y)  \\
 & = \omega(|\omega(U)|^2I+I-\omega(U)U^*-\overline{\omega(U)}U)\omega(Y^*Y)  \\
 & = (1-|\omega(U)|^2)\omega(Y^*Y).
\end{split}
\end{equation}
The proof of the other inequality is similar.
\end{proof}
This lemma plays a similar role as Lemma 2.1 in ref. \cite{NaaijkensLocalized} (proved in \cite{AFH}). In the $G=\mathbb{Z}_2$ case either lemma can be used since the qubit $X$ and $Z$ operators are both unitary and selfadjoint.

Consider the operators $X^{\partial^Tc}$ and $Z^{\partial d}$ for each $c\in C_{\text{lf}}^{n-1}(E;G)$ and $d\in C_{n+1}(E;\hat{G})$. As noted earlier, these operators commute and the interaction terms are linear combinations of them. Moreover, such operators form a group under multiplication, therefore the closure of the subspace spanned by them is a C*-subalgebra $\mathfrak{A}_{XZ}$ of $\mathfrak{A}$. Let $\omega:\mathfrak{A}_{XZ}\to\mathbb{C}$ be the linear functional determined by $\omega(X^{\partial^Tc}Z^{\partial d})=1$. Here existence is guaranteed by the fact that such operators are linearly independent. It is easy to verify that $\omega$ is in fact a state.\footnote{Use the fact that $(X^{\partial^Ta_1}Z^{\partial b_1})^*(X^{\partial^Ta_2}Z^{\partial b_2})=X^{\partial^T(a_2-a_1)}Z^{\partial (b_2-b_1)}$.} Clearly, a frustration free ground state is a state whose restriction to $\mathfrak{A}_{XZ}$ is $\omega$.

$\omega$ can be extended to a state $\omega_0$ on $\mathfrak{A}$. Using the lemma above one can see that for any local observable $A\in\mathfrak{A}_\Lambda$
\begin{equation}
\begin{split}
-i\omega_0(A^*\delta(A))
 & = \sum_{\alpha\in\mathcal{E}_{n-1}}\frac{1}{|G|}\sum_{g\in G}\left(\omega_0(A^*AX^{\partial^T(g e_\alpha)})-\omega_0(A^*X^{\partial^T(g e_\alpha)}A)\right)  \\
 & \qquad+\sum_{\beta\in\mathcal{E}_{n+1}}\frac{1}{|G|}\sum_{\chi\in \hat{G}}\left(\omega_0(A^*AZ^{\partial(\chi e_\beta)})-\omega_0(A^*Z^{\partial(\chi e_\beta)}A)\right)  \\
 & = \sum_{\alpha\in\mathcal{E}_{n-1}}\left(\omega_0(A^*A)-\omega_0\left(A^*\frac{1}{|G|}\sum_{g\in G}X^{\partial^T(g e_\alpha)}A\right)\right)  \\
 & \qquad+\sum_{\beta\in\mathcal{E}_{n+1}}\left(\omega_0(A^*A)-\omega_0\left(A^*\frac{1}{|G|}\sum_{\chi\in \hat{G}}Z^{\partial(\chi e_\beta)}A\right)\right) \ge 0
\end{split}
\end{equation}
since for any projection $P$ we have $\omega_0(A^*(I-P)A)\ge 0$. Therefore any extension $\omega_0$ is a ground state.

We would like to see how much freedom is there in the choice of $\omega_0$. Let $a\in C_{\text{lf}}^n(E;G)$ and $b\in C_n(E;\hat{G})$ and consider the product $X^aZ^b$. Such observables form a basis of the algebra of local observables, which is dense in $\mathfrak{A}$, therefore $\omega_0$ is completely determined by the values $\omega_0(X^aZ^b)$. First note that for any $c\in C_{\text{lf}}^{n-1}(E;G)$
\begin{equation}
\begin{split}
\omega_0(X^aZ^b)
 & = \omega_0(X^{\partial^T c}X^aZ^bX^{-\partial^T c})  \\
 & = e^{-2\pi i\langle b,\partial^T c\rangle}\omega_0(X^{\partial^T c}X^aX^{-\partial^T c}Z^b)  \\
 & = e^{-2\pi i\langle \partial b,c\rangle}\omega_0(X^aZ^b)
\end{split}
\end{equation}
by Lemma \ref{lem:unitary} and similarly, for any $d\in C_{n+1}(E;\hat{G})$
\begin{equation}
\begin{split}
\omega_0(X^aZ^b)
 & = \omega_0(Z^{\partial d}X^aZ^bZ^{-\partial d})  \\
 & = e^{2\pi i\langle \partial d,a\rangle}\omega_0(X^aZ^{\partial d}Z^bZ^{-\partial d})  \\
 & = e^{2\pi i\langle d,\partial^T a\rangle}\omega_0(X^aZ^b),
\end{split}
\end{equation}
therefore $\omega_0(X^aZ^b)=0$ unless $\partial^Ta=0$ and $\partial b=0$.

The closure of the linear span of the elements $X^aZ^b$ with $\partial^Ta=0$ and $\partial b=0$ is the same as the commutant of $\mathfrak{A}_{XZ}$. Frustration free ground states are therefore in one-to-one correspondence with those states on $\mathfrak{A}_{XZ}'$ which restrict to $\omega$ on $\mathfrak{A}_{XZ}$.

Let $c\in C_{\text{lf}}^{n-1}(E;G)$ and $d\in C_{n+1}(E;\hat{G})$. Then
\begin{equation}
\omega_0(X^aZ^b)=\omega_0(X^{\partial^Tc}X^aZ^bZ^{\partial d})=\omega_0(X^{a+\partial^Tc}Z^{b+\partial d})
\end{equation}
using Lemma \ref{lem:unitary} again. Therefore when $\partial^Ta=0$ and $\partial b=0$ then $\omega_0(X^aZ^b)$ depends only on the homology class represented by $b$ and the locally finite cohomology class represented by $a$. Let $J$ denote the closure of the subspace spanned by elements of the form $X^aZ^b-X^{a+\partial^Tc}Z^{b+\partial d}$ where $\partial^Ta=0$ and $\partial b=0$. Then $J$ is a closed two-sided ideal in $\mathfrak{A}_{XZ}'$. Let $p:\mathfrak{A}_{XZ}'\to\mathfrak{A}_{XZ}'/J$ be the quotient map.

If $\tilde{\omega}:\mathfrak{A}_{XZ}'/J\to\mathbb{C}$ is a state then $\tilde{\omega}\circ p$ is a state on $\mathfrak{A}_{XZ}'$. Moreover, since $I-X^{\partial^Tc}Z^{\partial d}\in J$ we have $1=(\tilde{\omega}\circ p)(I)=(\tilde{\omega}\circ p)(X^{\partial^Tc}Z^{\partial d})$, i.e. $\tilde{\omega}\circ p$ extends to a frustration free ground state $\omega_0$. Conversely, if $\omega_0$ restricts to $\omega$ on $\mathfrak{A}_{XZ}$ then it is identically $0$ on $J$ and therefore determines a state $\tilde{\omega}$ on $\mathfrak{A}_{XZ}'/J$. It is not difficult to see that the two constructions are inverses of each other and hence give a bijection between the set of frustration free ground states on $\mathfrak{A}$ and the set of states on $\mathfrak{A}_{XZ}'/J$.

In the usual CSS code setting $\mathfrak{A}$ is finite dimensional, therefore every ground state is frustration free and these correspond to vectors in the code subspace and their mixtures. These vectors represent the possible pure logical states encoded in a larger Hilbert space. By analogy, we may interpret states on $\mathfrak{A}_{XZ}'/J$ as the logical states and $\mathfrak{A}_{\mathrm{logical}}:=\mathfrak{A}_{XZ}'/J$ as the algebra of logical observables.

Under the bijection above pure states correspond to pure ones:
\begin{prop}\label{prop:purestrongground}
A state $\tilde{\omega}:\mathfrak{A}_{\mathrm{logical}}\to\mathbb{C}$ is pure iff the extension $\omega_0:\mathfrak{A}\to\mathbb{C}$ of $\tilde{\omega}\circ p:\mathfrak{A}_{XZ}'\to\mathbb{C}$ is pure.
\end{prop}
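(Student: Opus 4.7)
The plan is to leverage the bijection between frustration free ground states and states on $\mathfrak{A}_{\mathrm{logical}}$ that was established in the discussion preceding the proposition, and show that it is an affine bijection. Once affinity is in hand, extreme points correspond to extreme points, which (after one small verification) is precisely the statement about purity.

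First I would observe that the inverse map $\omega_0\mapsto\tilde{\omega}$ admits a simple description: restrict $\omega_0$ to $\mathfrak{A}_{XZ}'$ (which kills $J$, since $\omega_0$ assigns value $1$ to each unitary $X^{\partial^Tc}Z^{\partial d}$ and hence value $0$ to every generator $X^aZ^b-X^{a+\partial^Tc}Z^{b+\partial d}$ of $J$) and then compose with the canonical projection onto the quotient $\mathfrak{A}_{XZ}'/J=\mathfrak{A}_{\mathrm{logical}}$. Both restriction and composition with a homomorphism are affine operations on states, so $\omega_0\mapsto\tilde{\omega}$ is affine. Because it is a bijection (as already established), its inverse $\tilde{\omega}\mapsto\omega_0$ is automatically affine as well. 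Affine bijections between convex sets carry extreme points to extreme points in both directions.

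Next I would remove the only potential gap: being extreme within the convex set of frustration free ground states is a priori weaker than purity, i.e.\ extremality in the full state space. To close this gap, suppose $\omega_0=\lambda\eta_1+(1-\lambda)\eta_2$ is any decomposition of the frustration free ground state $\omega_0$ into states $\eta_1,\eta_2$ on $\mathfrak{A}$ with $\lambda\in(0,1)$. For each unitary $U=X^{\partial^Tc}Z^{\partial d}$ we have $\omega_0(U)=1$, so $1=\lambda\eta_1(U)+(1-\lambda)\eta_2(U)$ with $|\eta_i(U)|\le 1$; taking real parts and using $\Re\eta_i(U)\le|\eta_i(U)|\le 1$ forces $\eta_1(U)=\eta_2(U)=1$. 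Thus both $\eta_i$ are themselves frustration free ground states, showing that purity of $\omega_0$ is equivalent to its being extreme among frustration free ground states.

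The proof then concludes by chaining these equivalences: $\omega_0$ is pure iff it is extreme in the convex set of frustration free ground states, iff (by the affine bijection) $\tilde{\omega}$ is extreme in the state space of $\mathfrak{A}_{\mathrm{logical}}$, iff $\tilde{\omega}$ is pure. I do not anticipate a serious obstacle; the only step that might require a moment's care is checking that the generators of $J$ really do lie in the kernel of $\omega_0|_{\mathfrak{A}_{XZ}'}$, which follows from Lemma~\ref{lem:unitary} applied to the unitary $X^{\partial^Tc}Z^{\partial d}$ exactly as in the derivation preceding the proposition.
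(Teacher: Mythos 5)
Your proposal is correct and follows essentially the same approach as the paper. The paper's two-way argument is exactly your pair of observations in a less explicitly packaged form: the forward direction uses the affinity and injectivity of $\tilde{\omega}\mapsto\omega_0$, while the reverse direction hinges on the same key computation you give in your gap-closing step, namely that any convex decomposition of a frustration free ground state into states on $\mathfrak{A}$ automatically consists of frustration free ground states, because $\omega_0(U)=1$ for the unitaries $U=X^{\partial^Tc}Z^{\partial d}$ forces both summands to take the value $1$ there. Your reorganization around "affine bijection preserves extreme points" is a clean way to present it but introduces no new idea.
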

\begin{proof}
If $\tilde{\omega}$ is mixed then it can be written as $\lambda\tilde{\omega}_1+(1-\lambda)\tilde{\omega}_2$ for some $\lambda\in(0,1)$ and states $\tilde{\omega}_1\neq\tilde{\omega}_2$. Composing both sides with $p$ from the right and then taking the unique extensions to $\mathfrak{A}$ gives a nontrivial convex decomposition of $\omega_0$.

In the other direction, if $\omega_0$ is mixed then it can be written as $\lambda\omega_1+(1-\lambda)\omega_2$ for some $\lambda\in(0,1)$ and states $\omega_1\neq\omega_2$. We claim that these states must also be frustration free. Indeed, since $X^{\partial^Tc}Z^{\partial d}$ is unitary for any $c$, $d$, the inequality $|\omega_i(X^{\partial^Tc}Z^{\partial d})|\le 1$ must hold for $i=1,2$. But
\begin{equation}
(\lambda\omega_1+(1-\lambda)\omega_2)(X^{\partial^Tc}Z^{\partial d})=\omega(X^{\partial^Tc}Z^{\partial d})=1,
\end{equation}
and this is only possible if $\omega_1(X^{\partial^Tc}Z^{\partial d})=\omega_2(X^{\partial^Tc}Z^{\partial d})=1$.
\end{proof}

To shed more light on the structure of the logical C*-algebra $\mathfrak{A}_{\mathrm{logical}}$, let us note first that the products $X^aZ^b$ with $\partial^Ta=0$, $\partial b=0$ form a basis of a dense subalgebra in $\mathfrak{A}_{XZ}$. By the definition of $J$, it follows that two such products $X^aZ^b$ and $X^{a'}Z^{b'}$ are mapped to the same element in the quotient iff $a-a'$ is a boundary in $C_n(E;\hat{G})$ and $b-b'$ is a locally finite coboundary in $C_{\text{lf}}^n(E;G)$, i.e. iff $a$ and $a'$ represent the same homology class $[a]$ and similarly, $b$ and $b'$ represent the same locally finite cohomology class $[b]$. Let us denote the corresponding element in $\mathfrak{A}_{\mathrm{logical}}$ by $X^{[a]}Z^{[b]}$. Clearly, the set
\begin{equation}
\{X^{[a]}Z^{[b]}|[a]\in H_n(E;\hat{G})\text{ and }[b]\in H_{\text{lf}}^n(E;G)\}
\end{equation}
is linearly independent. Scalar multiples of its elements form a group under multiplication and generate a dense subalgebra. It is easy to work out the multiplication table using eq. \eqref{eq:XZcommutator}:
\begin{equation}
(X^{[a]}Z^{[b]})(X^{[a']}Z^{[b']})=e^{2\pi i\langle b,a'\rangle}X^{[a]+[a']}Z^{[b]+[b']},
\end{equation}
which is well defined since the pairing between $C_n(E;\hat{G})$ and $C_{\text{lf}}^n(E;G)$ descends to a bilinear map $H_n(E;\hat{G})\times H_{\text{lf}}^n(E;G)\to\mathbb{Q}/\mathbb{Z}$ (e.g. $\langle b,\partial^T c\rangle=\langle\partial b,c\rangle=0$ if $b$ is a cycle).

Suppose that $K\le H_n(E;\hat{G})\times H_{\text{lf}}^n(E;G)$ is a subgroup. Then the closure of the linear span of $\{X^{[a]}Z^{[b]}|([a],[b])\in K\}$ is a C*-subalgebra of dimension $|K|$. Since the order of any element in $H_n(E;\hat{G})$ and $H_{\text{lf}}^n(E;G)$ is finite, their product can be written as the union of its finite subgroups. In this way we can find a net of subalgebras of $\mathfrak{A}_{\mathrm{logical}}$ with dense union. One such net is indexed by finite subgroups $K$ as above, and therefore $\mathfrak{A}_{\mathrm{logical}}$ can be identified with the direct limit of this net \cite[Proposition 11.4.1]{KadisonRingrose}.

These properties make it possible to realize $\mathfrak{A}_{\mathrm{logical}}$ as a variant of the CCR algebra without referring to $\mathfrak{A}$, thus we have the following:
\begin{prop}\label{prop:logicalalgebra}
Let $H=H_n(E;\hat{G})\times H_{\text{lf}}^n(E;G)$ and equip it with the bilinear form $(([a],[b]),([a'],[b'])):=\langle b,a'\rangle$. To each finite subgroup $K\le H$ we can associate a finite dimensional C*-algebra acting on the Hilbert space $\mathcal{H}_K:=\ell^2(K)$ with $\{\ket{k}\}_{k\in K}$ an orthonormal basis. Namely, for each $k\in K$ we introduce the unitary $W_K(k)$ acting as $W_K(k)\ket{k'}=e^{2\pi i(k,k')}\ket{k+k'}$ and our C*-algebra is the subalgebra $\mathfrak{A}_K$ of $\boundeds(\mathcal{H}_K)$ generated by such operators. When $K\le K'\le H$ are two subgroups, there is an injective unital homomorphism $\mathfrak{A}_K\to\mathfrak{A}_{K'}$ defined by $W_K(k)\mapsto W_{K'}(k)$. These form a directed system of C*-algebras and its inductive limit is isomorphic to $\mathfrak{A}_{\mathrm{logical}}$. An isomorphism is induced by the maps $W_K(([a],[b]))\mapsto X^{[a]}Z^{[b]}$.
\end{prop}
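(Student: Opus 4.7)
The plan is to verify that the operators $W_K(k)$ satisfy Weyl-type commutation relations making the family $(\mathfrak{A}_K)$ into a directed system of finite-dimensional C*-algebras, and then to identify this system with the analogous system of subalgebras inside $\mathfrak{A}_{\mathrm{logical}}$ whose union is already known to be dense.

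First I would compute directly, using the bilinearity of $(\cdot,\cdot)$:
\begin{equation}
W_K(k)W_K(k')\ket{k''}=e^{2\pi i[(k',k'')+(k,k'+k'')]}\ket{k+k'+k''}=e^{2\pi i(k,k')}W_K(k+k')\ket{k''},
\end{equation}
so $W_K(k)W_K(k')=e^{2\pi i(k,k')}W_K(k+k')$. Scalar multiples of these operators therefore form a group; since $W_K(0)=I$, each $W_K(k)$ is unitary with adjoint a scalar multiple of $W_K(-k)$. Because $W_K(k)$ is a monomial matrix translating the basis by $k$, distinct $k$ yield linearly independent $W_K(k)$, so $\mathfrak{A}_K$ is a finite-dimensional $*$-subalgebra with basis $\{W_K(k)\}_{k\in K}$ and dimension $|K|$. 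For $K\le K'\le H$ the product formula for $W_{K'}$ restricted to $K$ agrees with the formula for $W_K$ because the bilinear form on $H$ restricts consistently, so $W_K(k)\mapsto W_{K'}(k)$ extends to a well-defined unital $*$-homomorphism $\phi_{KK'}$, injective by linear independence of $\{W_{K'}(k)\}_{k\in K}$; clearly $\phi_{K'K''}\circ\phi_{KK'}=\phi_{KK''}$.

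To match this with $\mathfrak{A}_{\mathrm{logical}}$, for each finite subgroup $K\le H$ let $\mathfrak{B}_K$ be the linear span of $\{X^{[a]}Z^{[b]}\mid([a],[b])\in K\}$. The product law $(X^{[a]}Z^{[b]})(X^{[a']}Z^{[b']})=e^{2\pi i\langle b,a'\rangle}X^{[a]+[a']}Z^{[b]+[b']}$ derived just before the proposition statement, together with the linear independence of the $X^{[a]}Z^{[b]}$ noted there, shows that $\mathfrak{B}_K$ is a finite-dimensional $*$-subalgebra with the same structure constants as $\mathfrak{A}_K$, and that $W_K(([a],[b]))\mapsto X^{[a]}Z^{[b]}$ extends to an isomorphism $\psi_K\colon\mathfrak{A}_K\to\mathfrak{B}_K$ compatible with the two directed systems. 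By the universal property one obtains an injective unital $*$-homomorphism $\varinjlim_K\mathfrak{A}_K\to\mathfrak{A}_{\mathrm{logical}}$ with image equal to the norm-closure of $\bigcup_K\mathfrak{B}_K$; by the density of the span of all the $X^{[a]}Z^{[b]}$ in $\mathfrak{A}_{\mathrm{logical}}$ and \cite[Proposition 11.4.1]{KadisonRingrose}, this image is all of $\mathfrak{A}_{\mathrm{logical}}$.

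The one delicate point is the cofinality of finite subgroups: every element and every finite subset of $H$ must lie in some finite subgroup. This is the engine of the argument, and it holds because $\hat{G}$ and $G$ are annihilated by $|G|$, hence so are $H_n(E;\hat{G})$ and $H_{\text{lf}}^n(E;G)$. Consequently every cyclic subgroup of $H$ is finite, the subgroup generated by any finite collection is finite, and the finite subgroups form a cofinal directed family, making the limit computation legitimate.
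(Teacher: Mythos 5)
Your proposal is correct and follows essentially the same route as the paper, which establishes (in the discussion immediately preceding the proposition) the linear independence of the classes $X^{[a]}Z^{[b]}$ in $\mathfrak{A}_{\mathrm{logical}}$, their multiplication law, the finite-dimensional subalgebras indexed by finite subgroups $K\le H$, and the identification of the inductive limit via \cite[Proposition 11.4.1]{KadisonRingrose}. You are somewhat more explicit about a few points the paper leaves implicit — the linear independence of the $W_K(k)$ as monomial operators, the injectivity and compatibility of the connecting maps $\phi_{KK'}$, and the cofinality of finite subgroups (which, as you note, rests on $H$ being annihilated by $|G|$) — but the structure of the argument is the same.
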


This description is particularly transparent when $H_n(E;\hat{G})\times H_{\text{lf}}^n(E;G)$ is finite and therefore the algebra $\mathfrak{A}_{\mathrm{logical}}$ is finite dimensional. In this case it is enough to take $K=H$ to get the direct limit, and the algebra itself becomes a finite dimensional (generalized) CCR algebra. The following lemma gives the structure of such algebras in a more general setting.
\begin{lem}\label{thm:ccr}
Let $H$ be a finite group, $(\cdot,\cdot):H\times H\to\mathbb{Q}/\mathbb{Z}$ any bilinear map. For each $h\in H$ let $W(h)$ be the operator acting on $\mathcal{H}=\ell^2(H)$ as $W(h)\ket{h'}=e^{2\pi i(h,h')}\ket{h+h'}$ and let $\mathfrak{A}\le\boundeds(\mathcal{H})$ be the C*-algebra generated by these operators. Let $H_0=\{h\in H|\forall h'\in H:(h,h')-(h',h)=0\}$, $c=\log_2|H_0|$ and $q=\frac{1}{2}\log_2\frac{|H|}{|H_0|}$. Then
\begin{equation}
\mathfrak{A}\simeq\mathbb{C}^{2^c}\otimes\boundeds(\mathbb{C}^{2^q}),
\end{equation}
where the first tensor factor is regarded as a commutative C*-algebra with the pointwise operations.
\end{lem}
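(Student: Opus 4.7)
My plan is to realize $\mathfrak{A}$ as a twisted group algebra of $H$ with cocycle $\sigma(h,h')=e^{2\pi i(h,h')}$, identify its center via the commutator relation, and then decompose each simple summand using the non-degenerate antisymmetric form induced on $H/H_0$.

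First I would check by a direct calculation on the basis $\{\ket{k}\}_{k\in H}$ that $W(h)W(h')=e^{2\pi i(h,h')}W(h+h')$, which implies the commutator identity $W(h)W(h')=e^{2\pi i((h,h')-(h',h))}W(h')W(h)$. Since $W(h)\ket{0}=\ket{h}$, the unitaries $\{W(h)\}_{h\in H}$ are linearly independent, so they form a basis of $\mathfrak{A}$ and $\dim\mathfrak{A}=|H|$. Writing a general element as $A=\sum_h c_h W(h)$ and imposing $AW(h')=W(h')A$ for every $h'$, the commutator identity forces $c_h=0$ unless $h\in H_0$. Hence $Z(\mathfrak{A})=\mathrm{span}\{W(h_0):h_0\in H_0\}$ is a commutative C*-algebra of dimension $|H_0|=2^c$ and therefore isomorphic to $\mathbb{C}^{2^c}$.

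Let $P_1,\dots,P_{2^c}$ be the minimal central projections, so that $\mathfrak{A}=\bigoplus_j \mathfrak{A}P_j$ with each $\mathfrak{A}P_j$ a simple finite-dimensional C*-algebra, i.e.\ a full matrix algebra. Because $W(h_0)$ is central for $h_0\in H_0$ and $P_j$ is minimal, $W(h_0)P_j=\chi_j(h_0)P_j$ for some $\chi_j:H_0\to\mathbb{C}^*$ (a twisted character for the symmetric cocycle $\sigma|_{H_0\times H_0}$). Fixing a transversal $R\subseteq H$ of $H_0$ in $H$ and using the multiplication rule to reduce each $W(h)P_j$ to a scalar multiple of some $W(r)P_j$ with $r\in R$, one sees that $\mathfrak{A}P_j$ is spanned by $\{W(r)P_j:r\in R\}$ and carries the structure of a twisted group algebra of $H/H_0$ whose antisymmetric part is the form $\tilde{B}([h],[h']):=(h,h')-(h',h)\in\mathbb{Q}/\mathbb{Z}$, which is non-degenerate by the very definition of $H_0$.

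The key remaining input is the finite-abelian analogue of the Stone--von Neumann theorem: a twisted group algebra of a finite abelian group with non-degenerate antisymmetric part is a full matrix algebra of size $\sqrt{|H/H_0|}$, which simultaneously forces $|H/H_0|$ to be a perfect square $2^{2q}$. I would prove this by picking a maximal isotropic subgroup $L\le H/H_0$ for $\tilde{B}$, noting that $\{W(\ell)P_j\}_{\ell\in L}$ generates a maximal abelian subalgebra, and inducing a one-dimensional character of it to obtain an irreducible representation of dimension $[H/H_0:L]=\sqrt{|H/H_0|}$. This gives $\mathfrak{A}P_j\simeq M_{2^q}(\mathbb{C})$ for every $j$ and hence $\mathfrak{A}\simeq\bigoplus_{j=1}^{2^c}M_{2^q}(\mathbb{C})\simeq\mathbb{C}^{2^c}\otimes\boundeds(\mathbb{C}^{2^q})$. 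The main obstacle I anticipate is the Lagrangian step: producing a maximal isotropic subgroup in a symplectic finite abelian group and verifying that induction yields an irreducible representation of the expected dimension is the heart of the argument, though it can alternatively be handled by first reducing to hyperbolic summands $\mathbb{Z}_n\oplus\mathbb{Z}_n$ via the structure theorem for symplectic finite abelian groups.
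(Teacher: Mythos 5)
Your argument is correct, but it follows a genuinely different route from the paper's. The paper works with a second family of operators $U(h)\ket{h'}=e^{2\pi i(h',h)}\ket{h+h'}$ living in the commutant of $\mathfrak{A}$, observes that $W$ and $U$ agree on $H_0$, builds the central projections $P_\chi$ from characters of $H_0$, and then pins down each block by explicit Hilbert--Schmidt trace computations establishing orthogonality of the $W(h_1)U(h_2)$ on $\mathcal{H}_\chi$; this shows $P_\chi\mathfrak{A}\cup P_\chi\mathfrak{A}'$ spans $\boundeds(\mathcal{H}_\chi)$, so $P_\chi\mathfrak{A}$ is a factor, and the trace/dimension identities force $n=m=\sqrt{|H|/|H_0|}$. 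You instead identify the center $\mathrm{span}\{W(h_0):h_0\in H_0\}$ directly from the commutator relation, split $\mathfrak{A}$ along minimal central projections, realize each summand as a twisted group algebra of $H/H_0$ with non-degenerate antisymmetrized cocycle, and then invoke (or propose to prove via a Lagrangian subgroup) a finite Stone--von Neumann theorem. Both are valid; the paper's approach is more elementary and self-contained, while yours connects to standard twisted-group-algebra/symplectic machinery. A small point worth noting: once you know $\dim\mathfrak{A}=|H|$, that there are exactly $|H_0|$ minimal central projections (since $Z(\mathfrak{A})\simeq\mathbb{C}^{|H_0|}$), and that each $\mathfrak{A}P_j$ is spanned by $\{W(r)P_j:r\in R\}$ so $\dim\mathfrak{A}P_j\le|H/H_0|$, the equality $\sum_j\dim\mathfrak{A}P_j=|H|=|H_0|\cdot|H/H_0|$ already forces $\dim\mathfrak{A}P_j=|H/H_0|$ for every $j$, hence $n_j=\sqrt{|H/H_0|}=2^q$; this dimension count lets you bypass the Lagrangian/induction step that you flag as the main obstacle. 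You are also right to call $\chi_j$ a twisted character: the restriction $\sigma|_{H_0\times H_0}$ is symmetric but not necessarily trivial, a subtlety that should be kept in mind.
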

\begin{proof}
Consider the operators $U(h)\ket{h'}=e^{2\pi i(h',h)}\ket{h+h'}$. Then
\begin{align}
\begin{split}
W(h_1)U(h_2)\ket{h} & = W(h_1)e^{2\pi i(h,h_2)}\ket{h+h_2}  \\
 & =e^{2\pi i((h_1,h+h_2)+(h,h_2))}\ket{h+h_1+h_2}
\end{split}
\intertext{and}
\begin{split}
U(h_2)W(h_1)\ket{h} & = U(h_2)e^{2\pi i(h_1,h)}\ket{h+h_1}  \\
 & =e^{2\pi i((h_1,h)+(h+h_1,h_2))}\ket{h+h_1+h_2},
\end{split}
\end{align}
therefore $W(h_1)$ and $U(h_2)$ commute for any $h_1,h_2\in H$. Observe that if $h\in H_0$ then $W(h)=U(h)$. It is not difficult to see that both the linear span of $\{W(h)\}_{h\in H}$ and that of $\{U(h)\}_{h\in H}$ are C*-algebras.

Let $\mathfrak{A}'$ be the commutant of $\mathfrak{A}$ in $\boundeds(\mathcal{H})$. Then $\mathfrak{A}\cap\mathfrak{A}'$ contains $\{U(h)\}_{h\in H_0}$. Using these we can form central projections as follows. Let $\chi:H_0\to\mathbb{C}$ be an irreducible character and let
\begin{equation}
P_\chi=\frac{1}{|H_0|}\sum_{h\in H_0}\overline{\chi(h)}W(h).
\end{equation}
Since $W:H\to\boundeds(\mathcal{H})$ restricts to a representation of $H_0$, these are pairwise orthogonal projections, therefore
\begin{equation}
\mathfrak{A}=\bigoplus_{\chi\in\hat{H_0}}P_\chi\mathfrak{A}.
\end{equation}
Here we have $|H_0|=2^c$ terms, therefore it is enough to show that $P_\chi\mathfrak{A}\simeq\boundeds(\mathbb{C}^{2^q})$. Note that $\Tr W(h)=|H|\delta_{h,0}$, so $\Tr P_\chi=|H|/|H_0|$.

Let $\mathcal{H}_\chi$ be the range of $P_\chi$, then the operators $W(h_1)U(h_2)$ restrict to $\mathcal{H}_\chi$. We calculate the Hilbert-Schmidt inner products of these restrictions. Note that for $h\in H_0$ $U(h)=W(h)$ acts as multiplication by $\chi(h)$ on $\mathcal{H}_\chi$, therefore $W(h_1)U(h_2)$ and $W(h'_1)U(h'_2)$ are the same up to a phase factor if $h_1-h'_1\in H_0$ and $h_2-h'_2\in H_0$. Otherwise
\begin{equation}
\begin{split}
& \Tr(P_\chi W(h'_1)U(h'_2))^*(P_\chi W(h_1)U(h_2))  \\
 = & \frac{1}{|H_0|}\sum_{\substack{h\in H  \\ h'\in H_0}}\overline{\chi(h')}e^{2\pi i((h_1,h+h_2)+(h,h_2)-(h'_1,h+h'_2)-(h,h'_2)+(h',h+h_1+h_2))}  \\
 & \cdot\braket{h+h'_1+h'_2}{h+h_1+h_2+h'},
\intertext{which is $0$ unless $h'_1-h_1+h'_2-h_2\in H_0$. If this holds, we let $C=|H_0|^{-1}\chi(h_1-h'_1+h_2-h'_2)e^{2\pi i((h_1,h_2)-(h'_1,h'_2)+(h'_1-h_1+h'_2-h_2,h_1+h_2))}$ and continue as}
& = C\sum_{h\in H}e^{2\pi i((h_1-h'_1,h)+(h,h_2-h'_2)+(h'_1-h_1+h'_2-h_2,h))}  \\
& = C\sum_{h\in H}e^{2\pi i((h,h_2-h'_2)-(h_2-h'_2,h))}.
\end{split}
\end{equation}
This is equal to $0$ iff $h_2-h_2'\notin H_0$. Therefore if $h_1-h'_1\notin H_0$ or $h_2-h'_2\notin H_0$ then $W(h_1)U(h_2)$ is orthogonal to $W(h'_1)U(h'_2)$ (when restricted to $\mathcal{H}_\chi$). It follows that $P_\chi\mathfrak{A}\cup P_\chi\mathfrak{A}'$ spans $\boundeds(\mathcal{H}_\chi)$, therefore the intersection of their commutants (in $\boundeds(\mathcal{H}_\chi)$) consists of multiplies of the idenitity. This means that $P_\chi\mathfrak{A}$ is a factor on a finite dimensional Hilbert space, i.e. it is unitarily equivalent to $\boundeds(\mathbb{C}^n)\otimes I_m$ with $I_m$ the identity on $\mathbb{C}^m$ \cite[Theorem 11.9]{Takesaki}. The appearing dimensions satisfy
\begin{align}
n^2=\dim P_\chi\mathfrak{A}=\frac{|H|}{|H_0|}  \\
m^2=\dim P_\chi\mathfrak{A}'\ge\frac{|H|}{|H_0|}
\intertext{and}
nm=\dim\mathcal{H}_\chi=\Tr P_\chi=\frac{|H|}{|H_0|}
\end{align}
by orthogonality of the generators. But this is only possible if $n=m=\sqrt{\frac{|H|}{|H_0|}}$ holds as claimed.
\end{proof}

This lemma together with the preceding considerations imply the following:
\begin{thm}\label{thm:logicalstates}
Suppose that $H_n(E;\hat{G})$ and $H_{\text{lf}}^n(E;G)$ are finite and let
\begin{align}
\begin{split}
c = & \log_2|\{[a]\in H_n(E;\hat{G})|\forall [b]\in H_\text{lf}^n(E;G):\langle b,a\rangle=0\}|  \\
  & +\log_2|\{[b]\in H_\text{lf}^n(E;G)|\forall [a]\in H_n(E;\hat{G}):\langle b,a\rangle=0\}|
\end{split}
\intertext{and}
q = & \frac{\log_2|H_n(E;\hat{G})\times H_{\text{lf}}^n(E;G)|-c}{2}.
\end{align}
Then
\begin{equation*}
\mathfrak{A}_{\mathrm{logical}}\simeq\mathbb{C}^{2^c}\otimes\boundeds(\mathbb{C}^{2^q}).
\end{equation*}
\end{thm}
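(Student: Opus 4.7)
The plan is to reduce directly to the two structural results already in the excerpt: Proposition~\ref{prop:logicalalgebra} realizes $\mathfrak{A}_{\mathrm{logical}}$ as a direct limit of the CCR-type algebras $\mathfrak{A}_K$ attached to finite subgroups $K\le H=H_n(E;\hat{G})\times H_{\text{lf}}^n(E;G)$, and Lemma~\ref{thm:ccr} identifies such algebras with $\mathbb{C}^{2^c}\otimes\boundeds(\mathbb{C}^{2^q})$ for an appropriate splitting of $H$. Under the hypothesis that both $H_n(E;\hat{G})$ and $H_\text{lf}^n(E;G)$ are finite, $H$ itself is a finite group, so the directed system in Proposition~\ref{prop:logicalalgebra} is cofinal at $K=H$; the direct limit therefore collapses to $\mathfrak{A}_H$, and one can apply Lemma~\ref{thm:ccr} to this single algebra.

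First I would invoke Proposition~\ref{prop:logicalalgebra} to produce an isomorphism $\mathfrak{A}_{\mathrm{logical}}\simeq\mathfrak{A}_H$ carrying $X^{[a]}Z^{[b]}$ to $W_H(([a],[b]))$, equipped with the bilinear form $(([a],[b]),([a'],[b']))=\langle b,a'\rangle$ on $H$. Then I would apply Lemma~\ref{thm:ccr} with this $H$ and this bilinear form, which immediately gives $\mathfrak{A}_H\simeq\mathbb{C}^{2^{c'}}\otimes\boundeds(\mathbb{C}^{2^{q'}})$, where $c'=\log_2|H_0|$ and $q'=\tfrac12\log_2(|H|/|H_0|)$ for
\begin{equation*}
H_0=\{h\in H\mid\forall h'\in H:(h,h')-(h',h)=0\}.
\end{equation*}

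The only genuine content left is to match $c'$ with the $c$ defined in the theorem statement. I would unpack the defining condition of $H_0$: writing $h=([a],[b])$ and $h'=([a'],[b'])$, the requirement becomes $\langle b,a'\rangle-\langle b',a\rangle=0$ for every $([a'],[b'])\in H$. Specializing to $[b']=0$ forces $\langle b,a'\rangle=0$ for every $[a']\in H_\text{lf}^n(E;G)$, and specializing to $[a']=0$ forces $\langle b',a\rangle=0$ for every $[b']\in H_\text{lf}^n(E;G)$; the converse direction is immediate by adding the two specializations. Therefore $H_0$ splits as the direct product of the left radical of the pairing in $H_n(E;\hat{G})$ and the right radical in $H_\text{lf}^n(E;G)$, and taking $\log_2$ gives exactly the sum of the two logarithms in the definition of $c$. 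Finally $q'=\tfrac12(\log_2|H|-c)=q$, completing the identification.

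I expect no real obstacle: the substance is already concentrated in Lemma~\ref{thm:ccr}, and the remaining work is just the bookkeeping of the radical decomposition of $H_0$ under the product structure of $H$. The only subtle point worth double-checking is that the pairing $\langle\cdot,\cdot\rangle\colon H_n(E;\hat{G})\times H_\text{lf}^n(E;G)\to\mathbb{Q}/\mathbb{Z}$ really does descend from the chain-level pairing (which is used in Proposition~\ref{prop:logicalalgebra}); this is already observed in the paragraph preceding that proposition, since $\langle\partial b,c\rangle=\langle b,\partial^Tc\rangle=0$ whenever $b$ is a cycle and $c$ a locally finite cochain, so both representatives may be changed by (co)boundaries without affecting the value.
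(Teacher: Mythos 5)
Your proof is correct and follows essentially the same route as the paper's: both reduce the statement to Lemma~\ref{thm:ccr} via Proposition~\ref{prop:logicalalgebra} (using that the direct limit collapses when $H$ is finite, as the paper notes just before the theorem) and then identify $H_0$ with the product of the two radicals of the pairing. Your explicit unpacking of the condition defining $H_0$ by specializing to $[b']=0$ and $[a']=0$ is exactly the observation the paper compresses into its one-line proof.
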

\begin{proof}
We only need to observe that (using the notation of Lemma \ref{thm:ccr}) $H_0=\{[a]\in H_n(E;\hat{G})|\forall [b]\in H_\text{lf}^n(E;G):\langle b,a\rangle=0\}\times\{[b]\in H_\text{lf}^n(E;G)|\forall [a]\in H_n(E;\hat{G}):\langle b,a\rangle=0\}$.
\end{proof}
The interpretation of Theorem \ref{thm:logicalstates} is that one can encode $c$ classical bits and $q$ qubits in the ground states.

We would like to stress that in general one should not expect that every ground state is frustration free. Indeed, in section \ref{sec:endomorphisms} we will construct some ground states which are not.

\section{Endomorphisms}\label{sec:endomorphisms}

Our next aim is to look for localized excitations of the system. Recall that in the toric code model on the plane one introduces string operators \cite{Kitaev} for any path (dual path) between pairs of vertices (faces), i.e. the product of $Z$ ($X$) operators acting at the edges on the path. The inner automorphism corresponding to these string operators turns the ground state into a state which can be interpreted as a pair of localized excitations at the endpoints of the path (dual path). A single excitation is obtained by moving one end to infinity \cite{NaaijkensLocalized}.

To make a similar stategy work in our more general model, we need to find analogues of string operators as well as of the process of moving one end to infinity. First observe that a string operator of type $Z$ is actually an operator $Z^b$ for a special $1$-chain $b$, namely the formal sum of edges along a path with coefficients $\pm 1$ depending on the relative orientation of the given $1$-cell and the path. Similarly, a string operator of type $X$ is the same as an operator of the form $X^a$ where $a$ is the sum of edges along a dual path with coefficients $\pm 1$.

Given a path $\gamma$ extending to infinity one can define the sequence of its initial segments $\gamma_k$ and take the limit
\begin{equation}
\omega_\gamma(A):=\lim_{k\to\infty}\omega_0(Z^{\gamma_k}AZ^{-\gamma_k})
\end{equation}
for any local observable $A$. This limit always exists because the sequence on the right hand side is eventually constant. It might seem that it is a crucial property of a path that it is composed of a sequence of edges such that neighbouring ones share an endpoint. However, it is not at all clear what the higher dimensional analogue of this property should be. Fortunately, the only thing which is important is that for any local observable $A$ and large $k,k'$ the equality $\omega_0(Z^{\gamma_k}AZ^{-\gamma_k})=\omega_0(Z^{\gamma_{k'}}AZ^{-\gamma_{k'}})$ holds. This motivates the following definition.
\begin{defn}\label{def:endomorphisms}
Given a locally finite chain $\gamma=\sum_\alpha\chi_\alpha e_\alpha\in C^{\text{lf}}_n(E;\hat{G})$ and a finite subset $K\subseteq\mathcal{E}_n$ we define
\begin{equation}
\gamma_K=\sum_{\alpha\in K}\chi_\alpha e_\alpha.
\end{equation}
This is an element of $C_n(E;\hat{G})$, therefore determines a unitary $Z^{\gamma_K}$. We let $\rho_\gamma:\mathfrak{A}\to\mathfrak{A}$ be the map defined as
\begin{equation}
\rho_\gamma(A):=\lim_{K\to\mathcal{E}_n}Z^{\gamma_K}AZ^{-\gamma_K}.
\end{equation}

Similarly, if $\delta=\sum_\alpha g_\alpha e_\alpha\in C^n(E;G)$ we define
\begin{equation}
\delta_K=\sum_{\alpha\in K}g_\alpha e_\alpha\in C_{\text{lf}}^n(E;G)
\end{equation}
and $\rho_\delta:\mathfrak{A}\to\mathfrak{A}$ as
\begin{equation}
\rho_\delta(A):=\lim_{K\to\mathcal{E}_n}X^{\delta_K}AX^{-\delta_K}.
\end{equation}
\end{defn}

Note that the definiton allows $\gamma$ and $\delta$ to have finite support, therefore it does not really capture the notion of a semi-infinite path. Nevertheless, this construction is well suited for the study of localized excitations and charges. The reason is that if e.g. $|\supp\gamma|<\infty$ then $\rho_\gamma$ turns out to be an inner automorphism, and hence does not change the ``total charge'' of a state.

\begin{prop}\label{prop:endomorphisms}
The limits in definition \ref{def:endomorphisms} exist in the norm topology and $\rho_\gamma$, $\rho_\delta$ are automorphisms. These automorphisms are inner iff $|\supp\gamma|<\infty$ ($|\supp\delta|<\infty$).

The map $(\gamma,\delta)\mapsto\rho_\gamma\circ\rho_\delta$ is a group homomorphism. In particular, $\rho_\gamma$ and $\rho_\delta$ commute.
\end{prop}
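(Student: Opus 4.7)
The plan is to handle the four assertions in the order: existence of the limits, automorphism property, group homomorphism (including commutativity), and characterization of innerness.

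For existence of the limits, the key observation is that on a local observable $A\in\mathfrak{A}_\Lambda$ with $\Lambda\subseteq\mathcal{E}_n$ finite, once $K\supseteq\Lambda$ we can decompose $\gamma_K=\gamma_\Lambda+\gamma_{K\setminus\Lambda}$, so $Z^{\gamma_K}=Z^{\gamma_\Lambda}Z^{\gamma_{K\setminus\Lambda}}$, and $Z^{\gamma_{K\setminus\Lambda}}$ acts on sites disjoint from $\Lambda$ and therefore commutes with $A$. Thus $Z^{\gamma_K}AZ^{-\gamma_K}=Z^{\gamma_\Lambda}AZ^{-\gamma_\Lambda}$ for all $K\supseteq\Lambda$, i.e.\ the net is eventually constant on local observables. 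Since the local subalgebra is dense and since conjugation by unitaries preserves the norm, a standard $\varepsilon/3$ argument shows the net is Cauchy on general $A\in\mathfrak{A}$, so the limit exists by completeness. The argument for $\rho_\delta$ is identical with $X^{\delta_K}$ in place of $Z^{\gamma_K}$.

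For the automorphism property, each $\rho_\gamma$ (resp.\ $\rho_\delta$) is a pointwise limit in norm of inner automorphisms $\mathrm{Ad}(Z^{\gamma_K})$ (resp.\ $\mathrm{Ad}(X^{\delta_K})$), hence a $*$-homomorphism. Bijectivity follows at once from the (to be established) identity $\rho_\gamma\circ\rho_{-\gamma}=\rho_0=\id$, which itself reduces on local observables to the trivial equality $Z^{\gamma_\Lambda}Z^{-\gamma_\Lambda}=I$. For the last claim of the proposition it is convenient to prove commutativity of $\rho_\gamma$ and $\rho_\delta$ first: on $A\in\mathfrak{A}_\Lambda$ we have $\rho_\gamma\rho_\delta(A)=Z^{\gamma_\Lambda}X^{\delta_\Lambda}AX^{-\delta_\Lambda}Z^{-\gamma_\Lambda}$ and $\rho_\delta\rho_\gamma(A)=X^{\delta_\Lambda}Z^{\gamma_\Lambda}AZ^{-\gamma_\Lambda}X^{-\delta_\Lambda}$, and using the commutation relation \eqref{eq:XZcommutator} the two differ by the conjugated scalar $e^{2\pi i\langle\gamma_\Lambda,\delta_\Lambda\rangle}e^{-2\pi i\langle\gamma_\Lambda,\delta_\Lambda\rangle}=1$. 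Together with the obvious identities $\rho_{\gamma+\gamma'}=\rho_\gamma\circ\rho_{\gamma'}$ and $\rho_{\delta+\delta'}=\rho_\delta\circ\rho_{\delta'}$ coming from $Z^{(\gamma+\gamma')_K}=Z^{\gamma_K}Z^{\gamma'_K}$, the homomorphism property $(\gamma,\delta)+(\gamma',\delta')\mapsto(\rho_\gamma\rho_\delta)(\rho_{\gamma'}\rho_{\delta'})$ follows by sliding $\rho_\delta$ past $\rho_{\gamma'}$.

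The main obstacle is the ``only if'' direction of the innerness characterization, which I will handle by an approximation argument. The ``if'' direction is immediate: if $|\supp\gamma|<\infty$ then $\gamma\in C_n(E;\hat G)$, $Z^\gamma\in\mathfrak{A}$ is unitary, and $\rho_\gamma=\mathrm{Ad}(Z^\gamma)$. Conversely, suppose $\rho_\gamma=\mathrm{Ad}(U)$ for some unitary $U\in\mathfrak{A}$, and assume for contradiction that $|\supp\gamma|=\infty$. Using the relation $Z^{\chi_\alpha}X^g=e^{2\pi i\chi_\alpha(g)}X^gZ^{\chi_\alpha}$ one computes $\rho_\gamma(X^{ge_\alpha})=e^{2\pi i\chi_\alpha(g)}X^{ge_\alpha}$, so
\begin{equation}
\bigl\|UX^{ge_\alpha}-X^{ge_\alpha}U\bigr\|=\bigl|e^{2\pi i\chi_\alpha(g)}-1\bigr|.
\end{equation}
Approximate $U$ to within $\varepsilon$ by $U'\in\mathfrak{A}_{\Lambda}$ with $\Lambda$ finite; for any $\alpha\notin\Lambda$ the observable $X^{ge_\alpha}$ commutes with $U'$, so the left-hand side is at most $2\varepsilon$. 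Since $\supp\gamma$ is infinite and $G$ is finite, there are infinitely many $\alpha\notin\Lambda$ with $\chi_\alpha\ne 0$, and for each such $\alpha$ we may pick $g\in G$ with $\chi_\alpha(g)\not\equiv 0\pmod 1$; the values $\chi(g)\bmod 1$ range over a finite subset of $\mathbb{Q}/\mathbb{Z}$, so the nonzero ones give a uniform lower bound $c>0$ on $|e^{2\pi i\chi_\alpha(g)}-1|$. Choosing $\varepsilon<c/2$ yields the contradiction. The argument for $\rho_\delta$ is strictly analogous, with $X$ and $Z$ swapped and the pairing applied in the other variable.
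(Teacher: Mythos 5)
Your proof is correct, and on most points it follows essentially the same route as the paper: both establish convergence by noting that conjugation by $Z^{\gamma_K}$ is eventually constant on local observables and then pass to a Cauchy/density argument, both observe that $\rho_\gamma$ is a $*$-endomorphism as a pointwise norm limit of inner automorphisms (you are slightly more careful in explicitly producing the inverse $\rho_{-\gamma}$), and both verify the homomorphism/commutativity claim by computing the scalar action on the basis elements $X^aZ^b$. The genuine divergence is in the ``only if'' direction of the innerness characterization. The paper dispatches this by quoting a general theorem (Evans--Kawahigashi, Theorem 6.3) to the effect that the limit automorphism is inner iff the net of implementing unitaries $Z^{\gamma_K}$ itself converges, which it clearly does iff $|\supp\gamma|<\infty$. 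You instead give a direct, self-contained argument: from $\rho_\gamma(X^{ge_\alpha})=e^{2\pi i\chi_\alpha(g)}X^{ge_\alpha}$ you derive $\|UX^{ge_\alpha}-X^{ge_\alpha}U\|=|e^{2\pi i\chi_\alpha(g)}-1|$, and since the nonzero values of $\chi(g)$ for $\chi\in\hat G$, $g\in G$ are finitely many and hence bounded away from $0$ in $\mathbb{Q}/\mathbb{Z}$, the asymptotic commutativity of $U$ with distant local observables is violated as soon as $\supp\gamma$ escapes any finite $\Lambda$. This is a sound argument; what it buys you is independence from the external reference and an explicit, quantitative reason why an approximating local unitary cannot implement $\rho_\gamma$, at the cost of being specific to the present quasi-local setting rather than invoking a reusable general criterion.
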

\begin{proof}
Let $A$ be an observable and $\varepsilon>0$. Then there is a finite subset $\Lambda_\varepsilon\subseteq\mathcal{E}_n$ and $A_\varepsilon\in\mathfrak{A}_{\Lambda_\varepsilon}$ such that $\norm{A-A_\varepsilon}<\frac{\varepsilon}{2}$. If $\Lambda_\varepsilon\subseteq K\subseteq K'$ then
\begin{equation}
\supp(\gamma_{K'}-\gamma_K)\cap\Lambda_\varepsilon\subseteq (K'\setminus K)\cap\Lambda_\varepsilon=\emptyset,
\end{equation}
therefore $Z^{\gamma_{K'}-\gamma_K}$ and $A_\varepsilon$ commute. Hence
\begin{equation}
\begin{split}
\norm{Z^{\gamma_K}AZ^{-\gamma_K}-Z^{\gamma_{K'}}AZ^{-\gamma_{K'}}}
 & = \norm{A-Z^{\gamma_{K'}-\gamma_K}AZ^{-(\gamma_{K'}-\gamma_K)}}  \\
 & \le\norm{A-A_\varepsilon-Z^{\gamma_{K'}-\gamma_K}(A-A_\varepsilon)Z^{-(\gamma_{K'}-\gamma_K)}}  \\
 & \le 2\norm{A-A_\varepsilon}<\varepsilon,
\end{split}
\end{equation}
estabilishing that the net is Cauchy. $\rho_\gamma$ is an endomorphism since it is a pointwise limit of automorphisms in the norm topology.

The net $K\mapsto Z^{\gamma_K}$ converges iff $|\supp\gamma|<\infty$. By \cite[Theorem 6.3]{EvansKawahigashi} $\rho_\gamma$ is inner iff this happens.

The proof for dual paths is similar.

To verify that we get a homomorphism, it is enough to find the action of the endomorphisms on local observables of the form $X^aZ^b$. For a locally finite chain $\gamma$ we get
\begin{equation}
\begin{split}
\rho_\gamma(X^aZ^b)
 & = \lim_{K\to\mathcal{E}_n}Z^{\gamma_K}X^aZ^bZ^{-\gamma_K}  \\
 & = \lim_{K\to\mathcal{E}_n}e^{2\pi i\langle\gamma_K,a\rangle}X^aZ^bZ^{\gamma_K}Z^{-\gamma_K}  \\
 & = e^{2\pi i\langle\gamma,a\rangle}X^aZ^b
\end{split}
\end{equation}
and similarly, $\rho_\delta(X^aZ^b)=e^{-2\pi i\langle b,\delta\rangle}X^aZ^b$. Therefore $\delta=0$, $\gamma=0$ gives the identity endomorphism and for a pair of locally finite chains $\gamma_1,\gamma_2$ and cochains $\delta_1,\delta_2$ we have
\begin{equation}
\begin{split}
(\rho_{\gamma_1}\circ\rho_{\delta_1})\circ(\rho_{\gamma_2}\circ\rho_{\delta_2})(X^aZ^b)
 & = (\rho_{\gamma_1}\circ\rho_{\delta_1})(e^{2\pi i\langle\gamma_2,a\rangle-2\pi i\langle b, \delta_2\rangle}X^aZ^b)  \\
 & = e^{2\pi i\langle\gamma_1+\gamma_2,a\rangle-2\pi i\langle b,\delta_1+\delta_2\rangle}X^aZ^b  \\
 & = (\rho_{\gamma_1+\gamma_2}\circ\rho_{\delta_1+\delta_2})(X^aZ^b)
\end{split}
\end{equation}
as required.
\end{proof}

We can write the composition of the two types of endomorphisms in the following way, which is sometimes useful:
\begin{equation}
(\rho_\gamma\circ\rho_\delta)(A)=\lim_{K\to\mathcal{E}_n}Z^{\gamma_K}X^{\delta_K}AX^{-\delta_K}Z^{-\gamma_K}.
\end{equation}
We will also use the notation $\rho_{(\gamma,\delta)}=\rho_\gamma\circ\rho_\delta$.

Endomorphisms can be used to create new states from old ones by composition. Since our endomorphisms are in fact invertible, a simple observation is that this construction preserves purity, i.e. $\omega$ is pure iff $\omega\circ\rho_{(\gamma,\delta)}$ is pure.

Suppose that we are given a fixed frustration free ground state $\omega_0$. For any $\gamma\in C^{\text{lf}}_n(E;\hat{G})$ and $\delta\in C^n(E;G)$ we can define a state $\omega_{(\gamma,\delta)}=\omega_0\circ\rho_{(\gamma,\delta)}$.\footnote{Note that this state also depends on the chosen state $\omega_0$ even though the notation does not reflect this.} The expected value of an observable of the form $X^aZ^b$ can be computed as
\begin{equation}\label{eq:transformedstate}
\omega_{(\gamma,\delta)}(X^aZ^b) = e^{2\pi i(\langle\gamma,a\rangle-\langle b,\delta\rangle)}\omega_0(X^aZ^b).
\end{equation}
In particular, if an observable $A$ is localized in $\mathcal{E}_n\setminus(\supp\gamma\cup\supp\delta)$ then $\omega_{(\gamma,\delta)}(A)=\omega_0(A)$.

For the toric code it is known that the resulting state only depends on the starting point and not on the path along which the other endpoint is moved away \cite[Lemma 3.6]{NaaijkensLocalized}. Therefore we expect that also in our model different chains and cochains may give rise to the same state. The following proposition confirms this expectation:
\begin{prop}\label{prop:sameexcitation}
For a pair of locally finite chains $\gamma,\gamma'$ and a pair of cochains $\delta,\delta'$ the following conditions are equivalent:
\begin{enumerate}
\item\label{itm:difforthogonal} for any $n$-cycle $b$ we have $\langle b,\delta'-\delta\rangle=0$ and for any locally finite $n$-cocycle $a$ we have $\langle \gamma'-\gamma,a\rangle=0$.
\item\label{itm:statesequal} for any frustration free ground state $\omega_0:\mathfrak{A}\to\mathbb{C}$ the equality $\omega_0\circ\rho_{(\gamma,\delta)}=\omega_0\circ\rho_{(\gamma',\delta')}$ holds.
\end{enumerate}
\end{prop}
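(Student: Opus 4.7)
The plan is to reduce condition (\ref{itm:statesequal}) to a computation on the dense subalgebra of $\mathfrak{A}$ spanned by products $X^aZ^b$ with $a\in C^n_{\text{lf}}(E;G)$ and $b\in C_n(E;\hat{G})$. Since states are determined by their values on any dense set, (\ref{itm:statesequal}) is equivalent (via \eqref{eq:transformedstate}) to
\begin{equation*}
\left( e^{2\pi i(\langle\gamma,a\rangle-\langle b,\delta\rangle)} - e^{2\pi i(\langle\gamma',a\rangle-\langle b,\delta'\rangle)} \right) \omega_0(X^aZ^b) = 0
\end{equation*}
for all such $a,b$ and all frustration free $\omega_0$. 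The additional piece of information carried over from Section~\ref{sec:ground} is that $\omega_0(X^aZ^b)=0$ unless $\partial^Ta=0$ and $\partial b=0$, which is what links the analytical condition to the algebraic pairing in (\ref{itm:difforthogonal}).

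For (\ref{itm:difforthogonal}) $\Rightarrow$ (\ref{itm:statesequal}), I would first discard the indices $(a,b)$ on which $\omega_0(X^aZ^b)=0$. On the remaining indices, $a$ is a locally finite cocycle and $b$ is a cycle, so (\ref{itm:difforthogonal}) gives both $\langle\gamma'-\gamma,a\rangle=0$ and $\langle b,\delta'-\delta\rangle=0$ in $\mathbb{Q}/\mathbb{Z}$, hence the two exponentials coincide.

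For the converse, I would decouple the two conditions in (\ref{itm:difforthogonal}) by specializing the test observable. Fixing a locally finite cocycle $a$ and setting $b=0$, the displayed equation becomes
\begin{equation*}
\left( e^{2\pi i \langle\gamma'-\gamma,a\rangle} - 1 \right) \omega_0(X^a) = 0
\end{equation*}
for every frustration free ground state $\omega_0$. It then suffices to exhibit, for each such $a$, one frustration free ground state with $\omega_0(X^a)\neq 0$. Here I would invoke the bijection of Section~\ref{sec:ground} together with Proposition~\ref{prop:logicalalgebra}: $\omega_0(X^a)=\tilde\omega(X^{[a]})$ where $\tilde\omega$ ranges over all states on $\mathfrak{A}_{\mathrm{logical}}$. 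Since $X^{[a]}$ is a unitary in a unital C*-algebra its spectrum is a nonempty subset of $S^1$, so evaluating a character of $C^*(X^{[a]},I)\cong C(\sigma(X^{[a]}))$ at any spectral value and extending it by Hahn--Banach to a state on $\mathfrak{A}_{\mathrm{logical}}$ yields $|\tilde\omega(X^{[a]})|=1\neq 0$. This forces $\langle\gamma'-\gamma,a\rangle=0$ in $\mathbb{Q}/\mathbb{Z}$. The symmetric specialization $a=0$ with $b$ a cycle produces a state with $\omega_0(Z^b)\neq 0$ by the same spectral argument, and yields $\langle b,\delta'-\delta\rangle=0$.

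No single step is a serious obstacle; the entire proof is essentially bookkeeping once \eqref{eq:transformedstate} is in hand. The only mildly nontrivial point is the existence of ground states with nonzero expectation on any prescribed unitary logical observable, and this is a standard C*-algebraic fact about the numerical range of a unitary.
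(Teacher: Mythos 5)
Your proof is correct and follows essentially the same route as the paper: both directions reduce to eq.~\eqref{eq:transformedstate} together with the observation that $\omega_0(X^aZ^b)=0$ unless $\partial^Ta=0$ and $\partial b=0$, and the converse hinges on exhibiting a frustration free ground state with nonzero expectation on the relevant cocycle/cycle product. Your version merely makes two steps more explicit than the paper does -- decoupling via $b=0$ and $a=0$, and justifying the existence claim through the spectrum of the unitary $X^{[a]}$ (resp.\ $Z^{[b]}$) in $\mathfrak{A}_{\mathrm{logical}}$ plus a Hahn--Banach extension -- where the paper simply asserts existence.
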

\begin{proof}
Suppose that condition \ref{itm:difforthogonal} holds and let $\omega_0$ be any frustration free ground state. Then for any $a\in C^n_{\text{lf}}(E;G)$ and $b\in C_n(E;\hat{G})$ the equality
\begin{equation}
\begin{split}
\omega_{(\gamma',\delta')}(X^aZ^b)
 & = e^{2\pi i(\langle\gamma',a\rangle-\langle b,\delta'\rangle)}\omega_0(X^aZ^b)  \\
 & = e^{2\pi i(\langle\gamma',a\rangle-\langle b,\delta'\rangle)}e^{-2\pi i(\langle\gamma,a\rangle-\langle b,\delta\rangle)}\omega_{(\gamma,\delta)}(X^aZ^b)  \\
 & = e^{2\pi i(\langle\gamma'-\gamma,a\rangle-\langle b,\delta'-\delta\rangle)}\omega_{(\gamma,\delta)}(X^aZ^b)
\end{split}
\end{equation}
holds. If $\partial^T a=0$ and $\partial b=0$ then by assumption the exponent is $0$, therefore $\omega_{(\gamma',\delta')}(X^aZ^b)=\omega_{(\gamma,\delta)}(X^aZ^b)$. Otherwise $\omega_0(X^aZ^b)=0$ and hence also $\omega_{(\gamma',\delta')}(X^aZ^b)=\omega_{(\gamma,\delta)}(X^aZ^b)$. We can conclude that the two states agree on local observables, and by continuity, they coincide.

Conversely, for any $a\in C^n_{\text{lf}}(E;G)$ and $b\in C_n(E;\hat{G})$ satisfying $\partial^T a=0$ and $\partial b=0$ there exists a frustration free ground state $\omega_0$ such that $\omega_0(X^aZ^b)\neq 0$. By the equality above one must have $\langle\gamma'-\gamma,a\rangle-\langle b,\delta'-\delta\rangle=0$, but this is only possible if condition \ref{itm:difforthogonal} is true.
\end{proof}

It is instructive to see how this characterization boils down in the toric code case. Take two paths $\gamma,\gamma'$ with common starting point and extending to infinity (see Figure \ref{fig:twopaths}). Since $\partial(\gamma'-\gamma)=0$ and the first locally finite homology of the plane is $0$ (with arbitrary coefficients), there is a locally finite $2$-chain $p$ such that $\gamma'-\gamma=\partial p$. But then for any locally finite $1$-cocycle $a$ we have $\langle \gamma'-\gamma,a\rangle=\langle\partial p,a\rangle=\langle p,\partial^Ta\rangle=0$. On the other hand if the two starting points differ then $\partial(\gamma'-\gamma)\neq 0$ and therefore one can find a locally finite $n-1$-cochain $c$ such that $\langle\partial(\gamma'-\gamma),c\rangle\neq 0$. In this case choose $a=\partial^Tc$ so that $\langle\gamma'-\gamma,a\rangle=\langle\partial(\gamma'-\gamma),c\rangle\neq 0$.
\begin{figure}
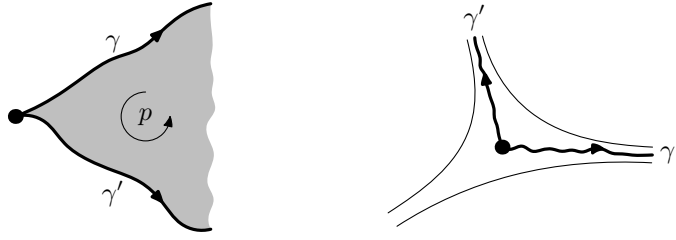

\begin{subfigure}{.45\textwidth}
\centering
\includegraphics{twopaths1.mps}
\end{subfigure}
\begin{subfigure}{.45\textwidth}
\centering
\includegraphics{twopaths2.mps}
\end{subfigure}
\caption{Two semi-infinite paths starting at the same point. On the plane (left) we can always find a locally finite $2$-chain $p$ such that $\partial p=\gamma'-\gamma$. If the end space of the CW complex is disconnected (right), then this may fail to be true, even when the CW complex is contractible. ($n=1$)\label{fig:twopaths}}
\end{figure}

For the toric code the starting point of an infinite path has a distinguished role, since observables supported near this point can detect the presence of the corresponding excitation, but interaction terms corresponding to distant vertices or faces cannot. In our model a similar role is played by the supports of $\partial\gamma$ and $\partial^T\delta$. More precisely, from the proof of Proposition \ref{prop:sameexcitation} one can see that when $a=\partial^Tc$ and $b=\partial d$ where $\supp c\cap\supp\partial(\gamma'-\gamma)=\supp d\cap\supp\partial^T(\delta'-\delta)=\emptyset$ then $(\omega_0\circ\rho_{(\gamma,\delta)})(X^aZ^b)=(\omega_0\circ\rho_{(\gamma',\delta')})(X^aZ^b)$.

In particular, if $\partial\gamma=0$ and $\partial^T\delta=0$ then no observable localized in any compact contractible subcomplex changes its expected value compared to the starting state corresponding to $\gamma'=\delta'=0$ (see Figure \ref{fig:hole}). This condition is satisfied by the interaction terms, so in this case the resulting state is still frustration free.
\begin{prop}\label{prop:tostrongground}
Let $\omega_0$ be a frustration free ground state, $\gamma$ a locally finite $n$-chain and $\delta$ an $n$-cochain. Then $\omega_{(\gamma,\delta)}$ is a frustration free iff $\partial\gamma=0$ and $\partial^T\delta=0$.
\end{prop}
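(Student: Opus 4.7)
The plan is to reduce the frustration-free condition on $\omega_{(\gamma,\delta)}$ to an arithmetic condition involving the pairing, and then turn that condition into the claimed boundary/coboundary equations by invoking an adjoint relation which is safe in this particular regime. First, I recall from section \ref{sec:ground} that a state is frustration free exactly when its restriction to $\mathfrak{A}_{XZ}$ equals $\omega$, i.e.\ when it assigns value $1$ to every unitary of the form $X^{\partial^T c}Z^{\partial d}$ with $c\in C_{\text{lf}}^{n-1}(E;G)$ and $d\in C_{n+1}(E;\hat{G})$. Plugging such an element into equation \eqref{eq:transformedstate} and using that $\omega_0(X^{\partial^T c}Z^{\partial d})=1$ yields
\begin{equation*}
\omega_{(\gamma,\delta)}(X^{\partial^Tc}Z^{\partial d})=e^{2\pi i(\langle\gamma,\partial^T c\rangle-\langle\partial d,\delta\rangle)}.
\end{equation*}
Thus $\omega_{(\gamma,\delta)}$ is frustration free iff $\langle\gamma,\partial^T c\rangle-\langle\partial d,\delta\rangle=0$ in $\mathbb{Q}/\mathbb{Z}$ for all such $c,d$, and by taking $c=0$ or $d=0$ independently this splits into the two separate conditions $\langle\gamma,\partial^T c\rangle=0$ for all $c\in C_{\text{lf}}^{n-1}(E;G)$ and $\langle\partial d,\delta\rangle=0$ for all $d\in C_{n+1}(E;\hat{G})$.

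Next, I establish the adjoint relations $\langle\gamma,\partial^T c\rangle=\langle\partial\gamma,c\rangle$ and $\langle\partial d,\delta\rangle=\langle d,\partial^T\delta\rangle$ in our setting. The paper explicitly warns that the general adjunction can fail when both arguments have infinite support, so this step must be justified carefully. Writing the pairing as the double sum $\sum_{\alpha,\beta}d_{\alpha\beta}\chi_\alpha(g_\beta)$ over incidence coefficients, the key observation is that the test element has finite support in both cases ($c$ because it is a locally finite cochain, $d$ because it is an ordinary chain). Combined with the local finiteness assumption on the cell complex (for each fixed $\beta$ the set of $\alpha$ with $d_{\alpha\beta}\neq 0$ is finite), this makes the relevant double sum absolutely finite, so rearrangement is legal and the adjoint relation holds.

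With these relations in hand, both directions fall out quickly. If $\partial\gamma=0$ and $\partial^T\delta=0$, the two pairings vanish for every $c,d$, the phase is $1$, and $\omega_{(\gamma,\delta)}$ is frustration free. Conversely, if $\omega_{(\gamma,\delta)}$ is frustration free, then $\langle\partial\gamma,c\rangle=0$ for every finitely supported $c$ and $\langle d,\partial^T\delta\rangle=0$ for every finitely supported $d$; specializing $c=g\,e_\beta$ for each $\beta\in\mathcal{E}_{n-1}$ and $g\in G$ forces the character $(\partial\gamma)_\beta$ to kill all of $G$, and since $\hat{G}$ separates points we conclude $(\partial\gamma)_\beta=0$ for all $\beta$, i.e.\ $\partial\gamma=0$. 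The dual argument with $d=\chi\,e_\alpha$ for $\chi\in\hat{G}$ and the fact that characters separate points of $G$ yields $\partial^T\delta=0$.

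The main subtlety, and the only real obstacle, is precisely the temptation to invoke the adjoint relation without checking hypotheses. The counterexample recorded in section \ref{sec:homology} (the real line with $G=\mathbb{Z}_2$) shows that the equality $\langle b,\partial^T a\rangle=\langle\partial b,a\rangle$ can genuinely fail when both $a$ and $b$ are infinite; the argument above works only because in the sums we need to control, the cofactor $c$ or $d$ is always finitely supported, which is what makes the double sum collapse to finitely many nonzero terms. Everything else is bookkeeping.
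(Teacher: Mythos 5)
Your proof is correct and follows the same route as the paper's: reduce frustration-freeness to the phase condition via eq.~\eqref{eq:transformedstate}, move $\partial$ and $\partial^T$ across the pairing, and conclude by testing against finitely supported $c$ and $d$. The paper compresses the adjunction step into one line (relying on its earlier remark in section~\ref{sec:homology} that $\langle b,\partial^T a\rangle=\langle\partial b,a\rangle$ holds whenever at most one of $a,b$ has infinite support), whereas you spell out why the finite support of $c$ and $d$ makes the rearrangement legal and why specializing to single-cell test elements together with the separating property of characters forces $\partial\gamma=0$ and $\partial^T\delta=0$; these are welcome clarifications but not a different argument.
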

\begin{proof}
Recall that $\omega_{(\gamma,\delta)}$ is a frustration free ground state iff $\omega_{(\gamma,\delta)}(X^{\partial^Tc}Z^{\partial d})=1$ for every $c\in C_{\text{lf}}^{n-1}(E;G)$ and $d\in C_{n+1}(E;\hat{G})$. But since
\begin{equation}
\omega_{(\gamma,\delta)}(X^{\partial^Tc}Z^{\partial d})=e^{2\pi i(\langle\gamma,\partial^Tc\rangle-\langle\partial d,\delta\rangle)}\omega_0(X^{\partial^Tc}Z^{\partial d})=e^{2\pi i(\langle\partial\gamma,c\rangle-\langle d,\partial^T\delta\rangle)},
\end{equation}
this holds iff $\partial\gamma=0$ and $\partial^T\delta=0$.
\end{proof}
Proposition \ref{prop:sameexcitation} implies that if $\gamma$ is a locally finite $n$-cycle and $\delta$ is an $n$-cocycle then $\omega_{(\gamma,\delta)}$ only depends on $[\gamma]\in H_n^{\text{lf}}(E;\hat{G})$ and $[\delta]\in H^n(E;G)$.
\begin{figure}
\centering
\includegraphics{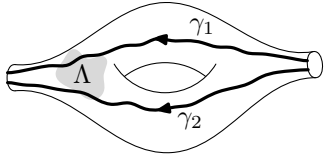}
\caption{On this open surface the two strings $\gamma_1$ and $\gamma_2$ are both locally finite cycles, but represent different locally finite homology classes. An observable $A$ localized in the compact contractible region $\Lambda$ cannot detect the difference: $\omega_0(A)=\omega_{(\gamma_1,0)}(A)=\omega_{(\gamma_2,0)}(A)$. ($n=1$)\label{fig:hole}}
\end{figure}

Initially, we introduced the endomorphisms $\rho_\gamma$ and $\rho_\delta$ in the hope of finding states describing elementary excitations. Interestingly, they are also useful for finding ground states that are not frustration free.
\begin{prop}\label{prop:toground}
Let $\omega_0$ be a frustration free ground state, $\gamma\in C_n^{\text{lf}}(E;\hat{G})$ and $\delta\in C^n(E;G)$. Then $\omega_{(\gamma,\delta)}$ is a ground state iff for all $b\in C_n(E;\hat{G})$
\begin{equation}
\lim_{K_-\to\mathcal{E}_{n-1}} |\supp(\partial(\gamma+b))_{K_-}|-|(\supp\partial\gamma)_{K_-}|>0
\end{equation}
and for all $a\in C^n_{\text{lf}}(E;G)$
\begin{equation}
\lim_{K_+\to\mathcal{E}_{n+1}} |\supp(\partial^T(\delta+a))_{K_+}|-|(\supp\partial^T\delta)_{K_+}|>0
\end{equation}
holds.
\end{prop}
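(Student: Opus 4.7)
The plan is to reduce the ground state condition on $\omega_{(\gamma,\delta)}$ to the stated combinatorial inequalities, with the key input being the evaluation of $\omega_{(\gamma,\delta)}$ on star and plaquette operators. Applying eq.~\eqref{eq:transformedstate} to each summand in $A_\alpha=\tfrac{1}{|G|}\sum_g X^{\partial^T(ge_\alpha)}$ and using orthogonality of characters gives $\omega_{(\gamma,\delta)}(A_\alpha)=\mathbf{1}[\alpha\notin\supp\partial\gamma]$, and analogously $\omega_{(\gamma,\delta)}(B_\beta)=\mathbf{1}[\beta\notin\supp\partial^T\delta]$. Writing $K_{-}(\Lambda):=\{\alpha\in\mathcal{E}_{n-1}:\supp A_\alpha\subseteq\Lambda\}$ and $K_{+}(\Lambda)$ analogously, this yields
\[
\omega_{(\gamma,\delta)}(H_\Lambda)=-|K_{-}(\Lambda)|-|K_{+}(\Lambda)|+|\supp\partial\gamma\cap K_{-}(\Lambda)|+|\supp\partial^T\delta\cap K_{+}(\Lambda)|.
\]

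For the forward direction I test the ground state condition against the finite-support unitaries $V=Z^bX^a$, where $b\in C_n(E;\hat G)$ and $a\in C^n_{\mathrm{lf}}(E;G)$. By Proposition~\ref{prop:endomorphisms} the endomorphisms form a group and conjugation by $V^*$ equals $\rho_{(-b,-a)}$, so $\omega_{(\gamma,\delta)}(V^*H_{\Lambda'}V)=\omega_{(\gamma-b,\delta-a)}(H_{\Lambda'})$. Since $V$ lies in the domain of the derivation, the ground state inequality $-i\omega_{(\gamma,\delta)}(V^*\delta(V))=\omega_{(\gamma-b,\delta-a)}(H_{\Lambda'})-\omega_{(\gamma,\delta)}(H_{\Lambda'})\ge 0$ becomes, for every sufficiently large $\Lambda'$,
\[
|\supp\partial(\gamma-b)\cap K_{-}(\Lambda')|+|\supp\partial^T(\delta-a)\cap K_{+}(\Lambda')|\ge|\supp\partial\gamma\cap K_{-}(\Lambda')|+|\supp\partial^T\delta\cap K_{+}(\Lambda')|.
\]
Relabelling $b\mapsto-b$, $a\mapsto-a$ and specialising to $a=0$ or $b=0$ separately then yields the two stated inequalities.

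The converse requires extending the ground state inequality from the test unitaries $V=Z^bX^a$ to an arbitrary local observable. For this I pass to the conjugated dynamics $K_\Lambda:=\rho_{(\gamma,\delta)}(H_\Lambda)=-\sum\tilde A_\alpha-\sum\tilde B_\beta$, whose terms $\tilde A_\alpha:=\rho_{(\gamma,\delta)}(A_\alpha)$ and $\tilde B_\beta:=\rho_{(\gamma,\delta)}(B_\beta)$ remain mutually commuting projections; the ground state condition for $\omega_{(\gamma,\delta)}$ is then equivalent to the same condition for $\omega_0$ under the $K_\Lambda$-dynamics. The frustration-freeness of $\omega_0$ together with Lemma~\ref{lem:unitary} yields the absorption identity $\omega_0(X^{\partial^Tc}Z^{\partial d}Y)=\omega_0(Y)$ for every $Y$, which in turn gives $\omega_0(\tilde A_\alpha Y)=\mathbf{1}[\alpha\notin\supp\partial\gamma]\,\omega_0(Y)$. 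Expanding a general local $B=\sum_j c_jV_j$ with $V_j=Z^{b_j}X^{a_j}$ and applying this identity to each mixed expectation $\omega_0(V_j^*\tilde A_\alpha V_k)$, the inequality $\omega_0(B^*[K_\Lambda,B])\ge 0$ reduces to a sum of contributions governed by the combinatorial quantities $c_{\Lambda'}(\gamma-(b_j-b_k),\delta-(a_j-a_k))-c_{\Lambda'}(\gamma,\delta)$, which are nonnegative by the hypothesis applied to the differences $b_j-b_k$ and $a_j-a_k$. This reduction is the main obstacle: carefully tracking the phases from commuting $Z$'s through $X$'s and regrouping the resulting double sum so that nonnegativity is manifest term by term is the delicate step, and the frustration-freeness of $\omega_0$ is precisely what makes both the absorption identity available and the quadratic inequality tractable.
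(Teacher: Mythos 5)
Your forward direction is sound and is essentially the single--term case of the paper's computation: testing the ground-state inequality against the unitary $V=Z^bX^a$ amounts to comparing $\omega_{(\gamma-b,\delta-a)}(H_{\Lambda'})$ with $\omega_{(\gamma,\delta)}(H_{\Lambda'})$, and for $\Lambda'$ large enough this eventually-constant difference is exactly the combinatorial quantity in the statement. The absorption identity you derive from Lemma~\ref{lem:unitary} is also correct and is the same tool the paper uses.

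The converse, however, is not actually proved, and the place where you stop is precisely the step that carries the theorem. You write that the double sum ``reduces to a sum of contributions governed by'' the combinatorial quantities and that these are ``nonnegative by the hypothesis''; but a sum of the form $\sum_{j,k}\overline{c_j}c_k\,m_{jk}\,\omega_0(V_j^*V_k)$ with nonnegative scalars $m_{jk}$ is \emph{not} automatically nonnegative, because the Hadamard product of a positive semidefinite matrix with an arbitrary nonnegative matrix need not be positive semidefinite. What makes the paper's argument close is a structural observation you do not make: the Gram matrix $\omega_0(Z^{-b_j}X^{-a_j}X^{a_i}Z^{b_i})$ vanishes off the blocks determined by the equivalence $\partial^Ta_i=\partial^Ta_j$, $\partial b_i=\partial b_j$, and \emph{within} a block the combinatorial multiplier is the same constant (because $\partial(\gamma-b_i)=\partial(\gamma-b_j)$ and $\partial^T(\delta-a_i)=\partial^T(\delta-a_j)$ there). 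The quadratic form is therefore a block sum of a positive-semidefinite form times a single block constant, and nonnegativity is equivalent to every block constant being nonnegative. Without this you cannot conclude.

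There is also a concrete error in the reduction you sketch: you claim the combinatorial quantities that appear are $c_{\Lambda'}(\gamma-(b_j-b_k),\delta-(a_j-a_k))$. If you carry out the absorption on $\omega_0(V_j^*\tilde A_\alpha V_k)$ you in fact get the indicator $\mathbf{1}[\alpha\notin\supp\partial(\gamma-b_k)]$, so the multiplier depends on $b_k$ alone (equivalently $b_j$, since only diagonal blocks survive), not on the difference $b_j-b_k$. The hypothesis must therefore be invoked for each $b$ and $a$ individually, which matches the statement of the proposition; the difference-based version you wrote would not. Fixing these two points — the block-diagonal/block-constant structure and the correct argument of the combinatorial quantity — yields essentially the paper's proof.
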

\begin{proof}
We need to compute $-i\omega_{(\gamma,\delta)}(A^*\delta(A))$ for an arbitrary local $A$. Let $A=\sum_ic_iX^{a_i}Z^{b_i}$ where the sum is over some finite index set. Then for $\alpha\in\mathcal{E}_{n-1}$ we have
\begin{equation}
\begin{split}
&-\frac{1}{|G|}\sum_{g\in G}\omega_{(\gamma,\delta)}(Z^{-b_j}X^{-a_j}[X^{\partial^T(g e_\alpha)},X^{a_i}Z^{b_i}])  \\
 & = -\frac{1}{|G|}\sum_{g\in G}e^{2\pi i(\langle\gamma,a_i-a_j+\partial^T(g e_\alpha)\rangle-\langle b_i-b_j,\delta\rangle)}\omega_0(Z^{-b_j}X^{-a_j}[X^{\partial^T(g e_\alpha)},X^{a_i}Z^{b_i}])  \\
 & = -\frac{1}{|G|}\sum_{g\in G}e^{2\pi i(\langle\gamma,a_i-a_j+\partial^T(g e_\alpha)\rangle-\langle b_i-b_j,\delta\rangle)}\omega_0(Z^{-b_j}X^{-a_j}X^{a_i}Z^{b_i})(e^{-2\pi i\langle b_i,\partial^T(g e_\alpha)\rangle}-1)  \\
 & = -\frac{1}{|G|}\sum_{g\in G}e^{2\pi i(\langle\gamma,a_i-a_j\rangle-\langle b_i-b_j,\delta\rangle)}\omega_0(Z^{-b_j}X^{-a_j}X^{a_i}Z^{b_i})(e^{-2\pi i\langle\partial(\gamma-b_i),g e_\alpha\rangle}-e^{2\pi i\langle\partial\gamma,g e_\alpha})  \\
 & = -e^{2\pi i(\langle\gamma,a_i-a_j\rangle-\langle b_i-b_j,\delta\rangle)}\omega_0(Z^{-b_j}X^{-a_j}X^{a_i}Z^{b_i})\left((1-1_{\supp\partial(\gamma-b_i)}(\alpha))-(1-1_{\supp\partial\gamma}(\alpha))\right)  \\
 & = e^{2\pi i(\langle\gamma,a_i-a_j\rangle-\langle b_i-b_j,\delta\rangle)}\omega_0(Z^{-b_j}X^{-a_j}X^{a_i}Z^{b_i})\left(1_{\supp\partial(\gamma-b_i)}(\alpha)-1_{\supp\partial\gamma}(\alpha)\right),
\end{split}
\end{equation}
where $1_A$ is the indicator function of the set $A$, and similarly for $\beta\in\mathcal{E}_{n+1}$ the equality
\begin{multline}
-\frac{1}{|G|}\sum_{\chi\in \hat{G}}\omega_{(\gamma,\delta)}(Z^{-b_j}X^{-a_j}[Z^{\partial(\chi e_\beta)},X^{a_i}Z^{b_i}])  \\
= e^{2\pi i(\langle\gamma,a_i-a_j\rangle-\langle b_i-b_j,\delta\rangle)}\omega_0(Z^{-b_j}X^{-a_j}X^{a_i}Z^{b_i})\left(1_{\supp\partial^T(\delta-a_i)}(\beta)-1_{\supp\partial^T\delta}(\beta)\right)
\end{multline}
holds. Let us introduce the notation $c'_i=e^{2\pi i(\langle\gamma,a_i\rangle-\langle b_i,\delta\rangle)}c_i$. Using these we compute
\begin{equation}
\begin{split}
& -i\omega_{(\gamma,\delta)}(A^*\delta(A))  \\
 = & -i\omega_{(\gamma,\delta)}(A^*\lim_{\Lambda\to\mathcal{E}_n}i[H_\Lambda,A])  \\
 = & -\sum_{\substack{\alpha\in\mathcal{E}_{n-1}  \\  \beta\in\mathcal{E}_{n+1}}}\frac{1}{|G|}\sum_{\substack{g\in G  \\  \chi\in\hat{G}}}\omega_{(\gamma,\delta)}(A^*[X^{\partial^T(g e_\alpha)},A])+\omega_{(\gamma,\delta)}(A^*[Z^{\partial(\chi e_\beta)},A])  \\
 = & \sum_{i,j}\overline{c'_j}c'_i\omega_0(Z^{-b_j}X^{-a_j}X^{a_i}Z^{b_i})\Big(\lim_{K_-\to\mathcal{E}_{n-1}}
|\supp(\partial(\gamma-b_i))_{K_-}|-|\supp(\partial\gamma)_{K_-}|\\
 & +\lim_{K_+\to\mathcal{E}_{n+1}}|\supp(\partial^T(\delta-a_i))_{K_+}|-|\supp(\partial^T\delta)_{K_+}|\Big),
\end{split}
\end{equation}
which is a quadratic form associated to a matrix. This matrix is actually block-diagonal in the sense that its $i,j$ entry is $0$ when $\partial^T(a_i-a_j)\neq 0$ or $\partial(b_i-b_j)\neq 0$. Each block is a nonzero positive semidefinite matrix with entries $\omega_0(Z^{-b_j}X^{-a_j}X^{a_i}Z^{b_i})$ multiplied by the constant (depending only on the block but otherwise not on $i$ and $j$)
\begin{multline}
\lim_{K_-\to\mathcal{E}_{n-1}}
|\supp(\partial(\gamma-b_i))_{K_-}|-|\supp(\partial\gamma)_{K_-}|  \\  +\lim_{K_+\to\mathcal{E}_{n+1}}|\supp(\partial^T(\delta-a_i))_{K_+}|-|\supp(\partial^T\delta)_{K_+}|,
\end{multline}
therefore positivity of this expression for any $a_i$ and $b_i$ is a necessary and sufficient condition for $\omega_{(\gamma,\delta)}$ to be a ground state. Taking separately $a_i=0$ or $b_i=0$ gives the statement.
\end{proof}

This result implies that in general, our model has many pure ground states. In fact, already Kitaev's toric code model on the plane has infinitely many pure ground states. Namely, if we take $\gamma$ and $\delta$ to be such that $|\supp\partial\gamma|=|\supp\partial^T\delta|=1$ then it is not possible to find $c$ and $d$ with finite support such that $\supp\partial(\gamma+c)=\supp\partial^T(\delta+d)=\emptyset$, since $|\supp\partial c|\ge 2$ and $|\supp\partial^T d|\ge 2$ whenever $c\neq 0$ and $d\neq 0$. Once such $\gamma$ and $\delta$ are chosen, one can translate them to any part of the plane giving rise to distinct ground states.

More generally, if $|\supp\partial\gamma|<\infty$ and $|\supp\partial^T\delta|<\infty$ then the limits in Proposition \ref{prop:toground} reduce to $|\supp\partial(\gamma+b)|-|\supp\partial\gamma|$ and $|\supp\partial^T(\delta+a)|-|\supp\partial^T\delta|$. Since the supports are natural numbers, it is possible to find a $b$ ($a$) such that $|\supp\partial(\gamma+b)|$ ($|\supp\partial^T(\delta+a)|$) is minimal. Then $\omega_{(\gamma+b,\delta+a)}$ is a ground state and unitarily equivalent to $\omega_{(\gamma,\delta)}$ (see Proposition \ref{thm:equivalent}).

\section{GNS representations}\label{sec:GNSreps}

In this section we study GNS representations corresponding to the states encountered in the preceding sections. Recall that given a state $\omega$ on a C*-algebra $\mathfrak{A}$ the GNS construction yields a representation $\pi_\omega:\mathfrak{A}\to\boundeds(\mathcal{H}_\omega)$ together with a cyclic vector $\Omega_\omega\in\mathcal{H}_\omega$ such that $\omega(A)=\langle\Omega_\omega,\pi_\omega(A)\Omega_\omega\rangle$. Such a triple $(\mathcal{H}_\omega,\pi_\omega,\Omega_\omega)$ is unique up to unitary equivalence.

Two states $\omega_1$ and $\omega_2$ are said to be quasi-equivalent if their GNS representations $\pi_{\omega_1}$ and $\pi_{\omega_2}$ are quasi-equivalent.

Once we have a representation of $\mathfrak{A}$ we can take the weak closure (equivalently: double commutant) of the set of representing operators to get a von~Neumann algebra. The state in question is called a factor state if this von~Neumann algebra is a factor.

Our first aim is to study GNS representations of frustration free ground states. Recall that such a state $\omega:\mathfrak{A}\to\mathbb{C}$ is uniquely determined by a state $\tilde{\omega}:\mathfrak{A}_{\mathrm{logical}}\to\mathbb{C}$ satisfying $\omega|_{\mathfrak{A}_{XZ}'}=\tilde{\omega}\circ p$ with $p:\mathfrak{A}_{XZ}'\to\mathfrak{A}_{XZ}'/J=\mathfrak{A}_{\mathrm{logical}}$ the canonical projection. The following theorem clarifies the relation between the GNS representations of $\omega$ and $\tilde{\omega}$.
\begin{thm}\label{thm:stronggroundGNS}\leavevmode
\begin{enumerate}
\item $\pi_\omega$ is irreducible iff $\pi_{\tilde{\omega}}$ is irreducible.
\item $\pi_\omega(\mathfrak{A})''\cap\pi_\omega(\mathfrak{A})'\simeq\pi_{\tilde{\omega}}(\mathfrak{A}_{XZ}'/J)''\cap\pi_{\tilde{\omega}}(\mathfrak{A}_{XZ}'/J)'$. In particular, $\omega$ is a factor state iff $\tilde{\omega}$ is a factor state.
\item two factor states $\omega_1$ and $\omega_2$ are quasi-equivalent iff the corresponding states $\tilde{\omega}_1$ and $\tilde{\omega}_2$ are quasi-equivalent.
\end{enumerate}
\end{thm}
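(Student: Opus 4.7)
The backbone of the plan is to identify the GNS data of $\tilde{\omega}$ as a canonical subobject of that of $\omega$, and then to reduce each of the three claims to a statement about this subobject using the decomposition of $\mathcal{H}_\omega$ into isotypic sectors for the abelian stabiliser group $\mathfrak{S}=\{X^{\partial^Tc}Z^{\partial d}\}$.

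First, I would set $\mathcal{K}:=\overline{\pi_\omega(\mathfrak{A}_{XZ}')\Omega_\omega}$. Because $\omega(u)=1$ for every stabiliser $u\in\mathfrak{S}$, Lemma~\ref{lem:unitary} gives $\pi_\omega(u)\Omega_\omega=\Omega_\omega$, and since $u$ commutes with $\mathfrak{A}_{XZ}'$ this propagates to $\pi_\omega(u)|_\mathcal{K}=I$. Using the identity $X^{a+\partial^Tc}Z^{b+\partial d}=X^aZ^b\cdot X^{\partial^Tc}Z^{\partial d}$ (valid when $\partial^Ta=0,\partial b=0$ by \eqref{eq:XZcommutator}), the ideal $J$ acts as zero on $\mathcal{K}$, so $\pi_\omega|_{\mathfrak{A}_{XZ}'}$ restricted to $\mathcal{K}$ descends through $p$ to a representation $\tilde\pi$ of $\mathfrak{A}_{\mathrm{logical}}$ with cyclic vector $\Omega_\omega$ and vector state $\tilde{\omega}$. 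By uniqueness of the GNS triple, $(\mathcal{K},\tilde\pi,\Omega_\omega)\cong(\mathcal{H}_{\tilde{\omega}},\pi_{\tilde{\omega}},\Omega_{\tilde{\omega}})$. Claim~1 is then immediate from Proposition~\ref{prop:purestrongground}, since purity of a state is equivalent to irreducibility of its GNS representation.

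For claim~2, decompose $\mathcal{H}_\omega=\bigoplus_\chi\mathcal{K}_\chi$ into isotypic components for $\mathfrak{S}$, indexed by the defect characters $\chi(c,d)=e^{2\pi i(\langle d,\partial^Ta\rangle-\langle\partial b,c\rangle)}$ arising from the various $X^aZ^b$; for each $\chi$ pick $X^{a_\chi}Z^{b_\chi}$ with defect $\chi$ and put $U_\chi:=\pi_\omega(X^{a_\chi}Z^{b_\chi})$, which restricts to a unitary $\mathcal{K}\to\mathcal{K}_\chi$ so that $\mathcal{H}_\omega=\bigoplus_\chi U_\chi\mathcal{K}$. A central $Z\in\pi_\omega(\mathfrak{A})''\cap\pi_\omega(\mathfrak{A})'$ commutes with $\pi_\omega(\mathfrak{S})$, hence preserves each $\mathcal{K}_\chi$, and commutes with each $U_\chi\in\pi_\omega(\mathfrak{A})$, hence satisfies $Z|_{\mathcal{K}_\chi}=U_\chi Z|_\mathcal{K} U_\chi^{-1}$. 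Restriction $\Psi(Z):=Z|_\mathcal{K}$ is thus injective; it lies in $\tilde\pi(\mathfrak{A}_{\mathrm{logical}})'$ from commutation of $Z$ with $\mathfrak{A}_{XZ}'$, and in $\tilde\pi(\mathfrak{A}_{\mathrm{logical}})''$ by writing $Z$ as a weak limit of operators $\pi_\omega(A_n)$ and invoking the averaging conditional expectation $E:\mathfrak{A}\to\mathfrak{A}_{XZ}'$ (which preserves $\omega$ since $\omega$ vanishes on the defect part) to get $P_\mathcal{K}\pi_\omega(A_n)P_\mathcal{K}=\tilde\pi(p(E(A_n)))$ converging weakly to $\Psi(Z)$. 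For surjectivity, define $Z$ blockwise by $Z|_{\mathcal{K}_\chi}:=U_\chi\tilde Z U_\chi^{-1}$; commutation with every $\pi_\omega(X^aZ^b)$ reduces, via the multiplication rule $U_{\chi'}U_\chi=(\text{scalar})\cdot B_{\chi',\chi}U_{\chi+\chi'}$ with $B_{\chi',\chi}\in\pi_\omega(\mathfrak{A}_{XZ}')$, to the centrality of $\tilde Z$. The "factor iff factor" addendum is just the case when the centre equals $\mathbb{C}I$.

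For claim~3, I would invoke the standard fact that two factor states on a unital C*-algebra are quasi-equivalent precisely when $\|\omega_1-\omega_2\|<2$, and establish $\|\omega_1-\omega_2\|_{\mathfrak{A}}=\|\tilde{\omega}_1-\tilde{\omega}_2\|_{\mathfrak{A}_{\mathrm{logical}}}$. The inequality $\ge$ is restriction to $\mathfrak{A}_{XZ}'$ combined with $\|\phi\circ p\|=\|\phi\|$ (since $p$ is a surjective $*$-homomorphism). The inequality $\le$ uses $\omega_i=\omega_i\circ E$, so $(\omega_1-\omega_2)(A)=(\tilde{\omega}_1-\tilde{\omega}_2)(p(E(A)))$ and $\|E\|=1$. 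Combined with claim~2 (which says factor corresponds to factor), this gives claim~3. The hardest part, I expect, is the surjectivity clause in claim~2: verifying that the blockwise formula $U_\chi\tilde Z U_\chi^{-1}$ is independent of the choice of representative $(a_\chi,b_\chi)$ (there is a scalar/cocycle ambiguity from \eqref{eq:XZcommutator}) and actually lies in $\pi_\omega(\mathfrak{A})''$, which requires controlling the phase-twists $e^{2\pi i(\langle b',a_\chi\rangle-\langle b_\chi,a'\rangle)}$ by which the action of $B=X^{a'}Z^{b'}\in\mathfrak{A}_{XZ}'$ on $\mathcal{K}_\chi$ differs from that on $\mathcal{K}$; centrality of $\tilde Z$ is precisely what makes these twists cancel.
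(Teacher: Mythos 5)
Your overall architecture matches the paper's closely: both proofs realise $\mathcal{H}_{\tilde{\omega}}$ as a subspace of $\mathcal{H}_\omega$ (your $\mathcal{K}$, the paper's range of $P$), decompose $\mathcal{H}_\omega$ into orthogonal blocks indexed by the syndrome/defect data of $(a,b)$ (your isotypic components $\mathcal{K}_\chi$ are exactly the paper's subspaces $\overline{\pi_\omega(X^aZ^b\mathfrak{A}_{XZ}')\Omega_\omega}$), and transfer central elements by compression to $\mathcal{K}$ in one direction and by blockwise conjugation $U_\chi(\cdot)U_\chi^{-1}$ in the other. Claim~1 is handled identically via Proposition~\ref{prop:purestrongground}.

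For Claim~3 you take a genuinely different route: the paper uses the purely convex characterisation (two factor states are quasi-equivalent iff their average is a factor state, combined with Claim~2 and the observation that $\omega\mapsto\tilde\omega$ preserves convex combinations), whereas you use the norm criterion $\|\omega_1-\omega_2\|<2$ together with a norm-preservation argument $\|\omega_1-\omega_2\|=\|\tilde\omega_1-\tilde\omega_2\|$. Your route is also valid, and it has the virtue of being a very clean arithmetic identity, but it carries a hidden cost: you invoke a conditional expectation $E:\mathfrak{A}\to\mathfrak{A}_{XZ}'$ as though it were freely available. Since $\mathfrak{S}$ is an infinite group, you would need to construct $E$ as a limit of averages over the finite subgroups $\mathfrak{S}_\Lambda$ and verify that this limit exists on local observables and extends by continuity; this is true but requires a paragraph of its own. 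The paper's convexity route avoids this entirely, needing only Claim~2.

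The one real gap is in the surjectivity half of Claim~2. You define $Z$ blockwise by $Z|_{\mathcal{K}_\chi}:=U_\chi\tilde{Z}U_\chi^{-1}$ and argue that $Z\in\pi_\omega(\mathfrak{A})'$ follows from centrality of $\tilde{Z}$ (and, implicitly, that the definition does not depend on the choice of representative $(a_\chi,b_\chi)$: this is fine because two representatives of the same $\chi$ differ by an element of $\mathfrak{A}_{XZ}'$, with which $\tilde{Z}$ commutes). But you never establish $Z\in\pi_\omega(\mathfrak{A})''$, which is the other half of membership in the centre. This is not automatic: a priori $Z$ is only defined as a bounded operator on $\mathcal{H}_\omega$. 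The paper addresses exactly this point after the line ``By construction, $C\in\pi_\omega(\mathfrak{A})'$'': it shows the blockwise sum $\sum_{a,b}C(a+Z_n,b+Z_{\text{lf}}^n)$ has mutually orthogonal summands of uniformly bounded norm, hence converges in the strong operator topology, and each block lies in $\pi_\omega(\mathfrak{A})''$ because $\tilde{C}P$ does. You should add this convergence/membership argument; with it in place, your proof of Claim~2 coincides with the paper's in all essentials.
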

\begin{proof}
We have seen that $\omega$ is pure iff $\tilde{\omega}$ is pure. The first statement follows because a state is pure iff its GNS representation is irreducible.

The third statement can be proved using the second one as follows. Recall that the map $\omega\mapsto\tilde{\omega}$ preserves convex combinations. The factor states $\omega_1$ and $\omega_2$ are quasi-equivalent iff $\frac{1}{2}(\omega_1+\omega_2)$ is a factor state as well. This, in turn, is equivalent to $\frac{1}{2}(\tilde{\omega}_1+\tilde{\omega}_2)$ being a factor state, which is true iff $\tilde{\omega_1}$ and $\tilde{\omega_2}$ are quasi-equivalent.

Now we prove the second statement. Let $\omega$ be a frustration free ground state and $\omega|_{\mathfrak{A}_{XZ}'}=\tilde{\omega}\circ p$. First we show that $\mathcal{H}_{\tilde{\omega}}$ can be viewed as a subspace of $\mathcal{H}_{\omega}$. To see this, recall that in the GNS construction one introduces a semidefinite inner product on $\mathfrak{A}$ as $\langle A,B\rangle_\omega=\omega(A^*B)$. The set $J_{\omega}=\{A\in\mathfrak{A}|\langle A,A\rangle_\omega=0\}$ is the Gelfand ideal corresponding to $\omega$ and $\mathcal{H}_\omega$ is the completion of $\mathfrak{A}/J_\omega$ with respect to the inner product induced by $\langle\cdot,\cdot\rangle_\omega$. The cyclic vector is $[I]$ and the representation is given by (the continuous extension of) $\pi_\omega(A)[B]=[AB]$.

Similarly, for $A,B\in\mathfrak{A}_{XZ}'$ we have
\begin{equation}
\langle p(A),p(B)\rangle_{\tilde{\omega}}=\tilde{\omega}(p(A)^*p(B))=\tilde{\omega}(p(A^*B))=\omega(A^*B)
\end{equation}
and $J_{\tilde{\omega}}=\{p(A)|A\in\mathfrak{A}_{XZ}', \omega(A^*A)=0\}$. $\mathcal{H}_{\tilde{\omega}}$ is the completion of $(\mathfrak{A}_{XZ}'/J)/J_{\tilde{\omega}}$. Note that $J_{\tilde{\omega}}=p(J_\omega\cap\mathfrak{A}_{XZ}')$.

Observe that $J\le J_\omega$, as can be seen from the fact that $J$ is a *-algebra and $\omega|_J=0$. This implies that
\begin{equation}
(\mathfrak{A}_{XZ}'/J)/J_{\tilde{\omega}}\simeq(\mathfrak{A}_{XZ}'/J)/(J_\omega\cap\mathfrak{A}_{XZ}'/J)\simeq\mathfrak{A}_{XZ}'/(J_\omega\cap\mathfrak{A}_{XZ}')\le\mathfrak{A}/J_\omega
\end{equation}
as claimed. With this identification we also have $\pi_{\tilde{\omega}}(p(A))[B]=[AB]=\pi_\omega(A)[B]$ for any $A,B\in\mathfrak{A}_{XZ}'$.

Let $P$ be the orthogonal projection to $\mathcal{H}_{\tilde{\omega}}$. Then $P$ can be written as the product of all the star and plaquette operators. More precisely, we now show that
\begin{equation}
P=\lim_{\Lambda_\pm\to\mathcal{E}_{n\pm 1}}\prod_{\alpha\in\Lambda_-}\pi_\omega(A_\alpha)\prod_{\beta\in\Lambda_+}\pi_\omega(B_\beta),
\end{equation}
where the limit exists in the strong operator topology. Since the $A_\alpha$ and $B_\beta$ are projections, the norm of products of them does not exceed $1$. It is therefore enough to verify convergence on a dense set. Let $a\in C_{\text{lf}}^n(E;G)$ and $b\in C_n(E;\hat{G})$. If $\alpha\in\mathcal{E}_{n-1}$ then
\begin{equation}
\begin{split}
\pi_\omega(A_\alpha)[X^aZ^b]
 & = \frac{1}{|G|}\sum_{g\in G}[X^{\partial^T(ge_\alpha)}X^aZ^b]  \\
 & = \frac{1}{|G|}\sum_{g\in G}e^{-2\pi i\langle b,\partial^T(ge_\alpha)\rangle}[X^aZ^bX^{\partial^T(ge_\alpha)}]  \\
 & = \frac{1}{|G|}\sum_{g\in G}e^{-2\pi i\langle\partial b,ge_\alpha\rangle}[X^aZ^b]  \\
 & = \begin{cases}
[X^aZ^b] & \text{if }(\partial b)_\alpha=0  \\
0 & \text{otherwise}
\end{cases}
\end{split}
\end{equation}
and similarly for $\beta\in\mathcal{E}_{n+1}$. These imply that if $\supp\partial^Ta\subseteq\Lambda_+$ and $\supp\partial b\subseteq\Lambda_-$ then
\begin{equation}
\prod_{\alpha\in\Lambda_-}\pi_\omega(A_\alpha)\prod_{\beta\in\Lambda_+}\pi_\omega(B_\beta)[X^aZ^b]=\begin{cases}
[X^aZ^b] & \text{if }\partial b=\partial^Ta=0  \\
0 & \text{otherwise}
\end{cases},
\end{equation}
which is the same as $P[X^aZ^b]$. We can conclude that $P\in\pi_{\omega}(\mathfrak{A})''\cap\pi_{\omega}(\mathfrak{A}_{XZ}')'$.

For any $A\in\mathfrak{A}$ the operator $P\pi_\omega(A)P$ maps $\mathcal{H}_{\tilde{\omega}}$ into itself. In fact, such operators are in the range of $\pi_{\tilde{\omega}}$. To see this, it is enough to compute the action of such an operator with $A=X^aZ^b$ on a vector $[X^{a'}Z^{b'}]$ with $\partial^Ta'=\partial b'=0$:
\begin{equation}
\begin{split}
P\pi_\omega(X^aZ^b)P[X^{a'}Z^{b'}]
 & = P[X^aZ^bX^{a'}Z^{b'}]  \\
 & = e^{2\pi i\langle b,a'\rangle}P[X^{a+a'}Z^{b+b'}]  \\
 & = \begin{cases}
\pi_{\tilde{\omega}}(p(X^aZ^b))[X^{a'}Z^{b'}] & \text{if }\partial^Ta=\partial b=0  \\
0 & \text{otherwise}
\end{cases}.
\end{split}
\end{equation}

By taking limits we can conclude that if $C\in\pi_{\omega}(\mathfrak{A})''\cap\pi_{\omega}(\mathfrak{A})'$, then $\tilde{C}:=PCP\in\pi_{\tilde{\omega}}(\mathfrak{A}_{XZ}'/J)''\cap\pi_{\tilde{\omega}}(\mathfrak{A}_{XZ}'/J)'$. Moreover, the map $C\mapsto\tilde{C}$ is a homomorphism since $(PC_1P)(PC_2P)=P^2C_1C_2P^2=PC_1C_2P$.

In the other direction, suppose that $\tilde{C}\in\pi_{\tilde{\omega}}(\mathfrak{A}_{XZ}'/J)''\cap\pi_{\tilde{\omega}}(\mathfrak{A}_{XZ}'/J)'$. Then we claim that $C[A]:=\pi_\omega(A)\tilde{C}[I]$ gives a well-defined operator on $\mathcal{H}_\omega$. To see this, suppose that $A\in J_\omega$. For any $\varepsilon>0$ there exists a local observable $A_\varepsilon$ such that $\norm{A-A_\varepsilon}<\varepsilon$. Then
\begin{equation}
\begin{split}
\omega(A_\varepsilon^*A_\varepsilon)
 & = \omega((A-(A-A_\varepsilon))^*(A-(A-A_\varepsilon)))  \\
 & = \omega(A^*A)+\omega((A-A_\varepsilon)^*(A-A_\varepsilon))+2\Re\omega(A^*(A-A_\varepsilon))  \\
 & \le \varepsilon^2+2\norm{A}\varepsilon.
\end{split}
\end{equation}
On the other hand, if we write $A_\varepsilon=\sum_{i\in I}\alpha_iX^{a_i}Z^{b_i}$ for some finite index set $I$ then
\begin{equation}
\omega(A_\varepsilon^*A_\varepsilon)
= \sum_{i,j\in I}\alpha_i\overline{\alpha_j} e^{2\pi i\langle -b_j,a_i-a_j\rangle}\omega(X^{a_i-a_j}Z^{b_i-b_j}).
\end{equation}
Here the last factor is $0$ unless $\partial^T(a_i-a_j)=\partial(b_i-b_j)=0$. Let us group the terms according to the values of $\partial^Ta_i$ and $\partial b_i$ and write $A_\varepsilon=\sum_{k\in K}A_{\varepsilon,k}$ with $A_{\varepsilon,k}=\sum_{i\in I_k}\alpha_iX^{a_i}Z^{b_i}$. Then
\begin{equation}
\omega(A_\varepsilon^*A_\varepsilon)=\sum_{k\in K}\omega(A_{\varepsilon,k}^*A_{\varepsilon,k}).
\end{equation}
For each $k\in K$ let us fix some index $i_k\in I_k$. Then $Z^{-b_{i_k}}X^{-a_{i_k}}A_{\varepsilon,k}\in\mathfrak{A}_{XZ}'$ commutes with $\tilde{C}$, therefore we can write
\begin{equation}
\begin{split}
\pi_\omega(A_\varepsilon)\tilde{C}[I]
 & = \sum_{k\in K}\pi_\omega(X^{a_{i_k}}Z^{b_{i_k}})\pi_\omega(Z^{-b_{i_k}}X^{-a_{i_k}}A_{\varepsilon,k})\tilde{C}[I]  \\
 & = \sum_{k\in K}\pi_\omega(X^{a_{i_k}}Z^{b_{i_k}})\tilde{C}\pi_{\tilde{\omega}}(p(Z^{-b_{i_k}}X^{-a_{i_k}}A_{\varepsilon,k}))[I].
\end{split}
\end{equation}

Let $(C_n)_{n\in N}$ be a net in $\mathfrak{A}_{XZ}'$ such that $\pi_{\tilde{\omega}}(p(C_n))\to\tilde{C}$ in the strong operator topology. Using continuity of multiplication we have
\begin{equation}
\begin{split}
\norm{\pi_\omega(A_\varepsilon)\tilde{C}[I]}^2
 & = \lim_n\norm{\sum_{k\in K}\pi_\omega(X^{a_{i_k}}Z^{b_{i_k}})\pi_\omega(C_n)\pi_{\tilde{\omega}}(p(Z^{-b_{i_k}}X^{-a_{i_k}}A_{\varepsilon,k}))[I]}^2  \\
 & = \lim_n\sum_{k,k'\in K}\omega\left(A_{\varepsilon,k'}X^{a_{i_{k'}}}Z^{b_{i_{k'}}}C_n^*Z^{-b_{i_{k'}}}X^{-a_{i_{k'}}}X^{a_{i_{k}}}Z^{b_{i_{k}}}C_nZ^{-b_{i_{k}}}X^{-a_{i_{k}}}A_{\varepsilon,k}\right)  \\
 & = \lim_n\sum_{k\in K}\omega\left(A_{\varepsilon,k}X^{a_{i_k}}Z^{b_{i_k}}C_n^*C_nZ^{-b_{i_k}}X^{-a_{i_k}}A_{\varepsilon,k}\right)  \\
 & = \sum_{k\in K}\lim_n\norm{\pi_\omega(C_n)Z^{-b_{i_k}}X^{-a_{i_k}}A_{\varepsilon,k}[I]}^2  \\
 & = \sum_{k\in K}\norm{\tilde{C}Z^{-b_{i_k}}X^{-a_{i_k}}A_{\varepsilon,k}[I]}^2  \\
 & \le \sum_{k\in K}\norm{\tilde{C}}^2\norm{Z^{-b_{i_k}}X^{-a_{i_k}}A_{\varepsilon,k}[I]}^2  \\
 & = \norm{\tilde{C}}^2\sum_{k\in K}\omega(A_{\varepsilon,k}^*A_{\varepsilon,k})\le \norm{\tilde{C}}^2(\varepsilon^2+2\norm{A}\varepsilon).
\end{split}
\end{equation}
This can be made arbitrarily small as $\varepsilon\to 0$, therefore $\pi_\omega(A)\tilde{C}[I]=0$.

By construction, $C\in\pi_\omega(\mathfrak{A})'$. We show that it is in the strong closure as well by exhibiting a net converging to $C$. To this end consider the subspaces $\overline{\pi_\omega(X^aZ^b\mathfrak{A}_{XZ}')}\le\mathcal{H}_\omega$. It is not difficult to see that two such subspaces are either identical or orthogonal depending on the boundary of $b$ and the coboundary of $a$. Moreover, with $\tilde{C}$ as before the operator $\pi_\omega(X^aZ^b)\tilde{C}P\pi_\omega(Z^{-b}X^{-a})$ is supported on $\overline{\pi_\omega(X^aZ^b\mathfrak{A}_{XZ}')}\le\mathcal{H}_\omega$ and remains unchanged if we add a locally finite cocycle $a'$ to $a$ or a cycle $b'$ to $b$:
\begin{multline}
\pi_\omega(X^{a+a'}Z^{b+b'})\tilde{C}P\pi_\omega(Z^{-b-b'}X^{-a-a'})
\\ = e^{-2\pi i\langle b,a'\rangle}X^aZ^bX^{a'}Z^{b'}\tilde{C}Pe^{2\pi i\langle -b,-a'\rangle}Z^{-b'}X^{-a'}Z^{-b}X^{-a}\\
\\ = \pi_\omega(X^aZ^b)\tilde{C}P\pi_\omega(Z^{-b}X^{-a})
\end{multline}
since $\tilde{C}P$ commutes with $X^{a'}Z^{b'}$. Let us denote the corresponding operator by $C(a+Z_n,b+Z_\text{lf}^n)$, where $Z_n$ and $Z_\text{lf}^n$ stands for the group of $n$-cycles and locally finite $n$-cocycles, respectively. Since $\tilde{C}\in\pi_\omega(\mathfrak{A})$, the same holds for $C(a+Z_n,b+Z_\text{lf}^n)$. We will show that
\begin{equation}
\sum_{\substack{a+Z_n\in C_n/Z_n  \\  b+Z_{\text{lf}}^n\in C_{\text{lf}}^n/Z_{\text{lf}}^n}}C(a+Z_n,b+Z_\text{lf}^n)=C,
\end{equation}
where the limit exists in the strong operator topology. First note that different terms have orthogonal supports and each term has norm at most $\norm{C}$, therefore the sum over any finite subset also has norm at most $\norm{C}$. This means that it is enough to verify convergence on a set of vectors which span a dense subspace. For $[X^{a'}Z^{b'}]$ we have
\begin{multline}
C(a+Z_n,b+Z_\text{lf}^n)[X^{a'}Z^{b'}]
  = \pi_\omega(X^aZ^b)\tilde{C}P\pi_\omega(Z^{-b}Z^{-a})[X^{a'}Z^{b'}]  \\
  = \begin{cases}
0 & \text{if }\partial^Ta'\neq \partial^Ta \text{ or } \partial b'\neq \partial b  \\
\pi_\omega(X^{a'}Z^{b'})\tilde{C}[I] & \text{otherwise}
\end{cases}
\end{multline}
after replacing $a$ and $b$ by $a'$ and $b'$ when they belong to the same class. Therefore the sum over the classes contains only one term and it is equal to $C[X^{a'}Z^{b'}]$.

Finally, it is clear from the constructions that the maps $C\mapsto \tilde{C}$ and $\tilde{C}\mapsto C$ are inverses of each other, therefore $\pi_\omega(\mathfrak{A})''\cap\pi_\omega(\mathfrak{A})'\simeq\pi_{\tilde{\omega}}(\mathfrak{A}_{XZ}'/J)''\cap\pi_{\tilde{\omega}}(\mathfrak{A}_{XZ}'/J)'$.
\end{proof}

Next we fix a pure frustration free ground state $\omega_0$ and ask the following question: When do two states $\omega_{(\gamma,\delta)}=\omega_0\circ\rho_{(\gamma,\delta)}$ and $\omega_{(\gamma',\delta')}$ give rise to equivalent GNS representations? Here we will assume that $\partial\gamma\in C_{n-1}(E;\hat{G})$ and $\partial^T\delta\in C_{\text{lf}}^{n+1}(E;G)$. Recall that in this case $(-\partial\gamma,\gamma)$ and $(-\partial^T\delta,\delta)$ represent elements of $H^\infty_{n-1}(E;\hat{G})$ and $H_\infty^n(E;G)$, respectively. The following proposition shows that the equivalence class of the GNS representation depends only on these homology and cohomology classes:
\begin{prop}\label{thm:equivalent}
Suppose that $(-\partial\gamma,\gamma)$ and $(-\partial\gamma',\gamma')$ differ in a boundary at infinity and $(-\partial^T\delta,\delta)$ and $(-\partial^T\delta',\delta')$ differ in a coboundary at infinity. Then $\pi_{\omega_{(\gamma,\delta)}}$ is unitary equivalent to $\pi_{\omega_{(\gamma',\delta')}}$.
\end{prop}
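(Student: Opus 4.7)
The plan is to decompose the difference between $(\gamma,\delta)$ and $(\gamma',\delta')$ into a finite-support piece, which will produce an inner automorphism, and a (co)boundary piece that every frustration free ground state sees as the identity; the unitary intertwiner can then be read off directly. Unpacking the hypothesis using the explicit description of (co)boundaries at infinity from Section~\ref{sec:homology}, one obtains $\hat\gamma\in C_n(E;\hat G)$, $p\in C^{\text{lf}}_{n+1}(E;\hat G)$, $\hat\delta\in C^n_{\text{lf}}(E;G)$ and $q\in C^{n-1}(E;G)$ with
\[
\gamma'=\gamma-\hat\gamma-\partial p,\qquad \delta'=\delta-\hat\delta-\partial^T q.
\]
The homomorphism property from Proposition~\ref{prop:endomorphisms} then factors
\[
\rho_{(\gamma',\delta')}=\rho_{(\gamma,\delta)}\circ\rho_{(-\hat\gamma,-\hat\delta)}\circ\rho_{(-\partial p,-\partial^Tq)},
\]
with all three factors well-defined automorphisms since $\partial p\in C^{\text{lf}}_n(E;\hat G)$ and $\partial^Tq\in C^n(E;G)$.

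The key observation is then that every frustration free ground state $\omega_0$ is fixed by both $\rho_{-\partial p}$ and $\rho_{-\partial^Tq}$. On a local basis element $X^aZ^b$ the formulas in the proof of Proposition~\ref{prop:endomorphisms} give multiplication by the phases $e^{-2\pi i\langle p,\partial^Ta\rangle}$ and $e^{2\pi i\langle\partial b,q\rangle}$ respectively, where the adjointness identities $\langle\partial p,a\rangle=\langle p,\partial^Ta\rangle$ and $\langle b,\partial^Tq\rangle=\langle\partial b,q\rangle$ are legitimate because $a$ and $b$ have finite support. Both phases are trivial whenever $\partial^Ta=0$ and $\partial b=0$, and the analysis of Section~\ref{sec:ground} shows that $\omega_0(X^aZ^b)=0$ outside this set. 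Hence $\omega_0\circ\rho_{-\partial p}=\omega_0=\omega_0\circ\rho_{-\partial^Tq}$ on the dense local subalgebra and, by continuity, on all of $\mathfrak A$. Consequently
\[
\omega_{(\gamma',\delta')}=\omega_{(\gamma,\delta)}\circ\rho_{(-\hat\gamma,-\hat\delta)}.
\]

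Because $\hat\gamma$ and $\hat\delta$ have finite support, Proposition~\ref{prop:endomorphisms} guarantees that $\rho_{(-\hat\gamma,-\hat\delta)}$ is inner: the defining nets in Definition~\ref{def:endomorphisms} are eventually constant, so $\rho_{(-\hat\gamma,-\hat\delta)}$ is the automorphism $A\mapsto U^*AU$ with $U:=X^{\hat\delta}Z^{\hat\gamma}\in\mathfrak A$ a local unitary. Therefore $\omega_{(\gamma',\delta')}(A)=\omega_{(\gamma,\delta)}(U^*AU)$ for every $A\in\mathfrak A$, and the triple
\[
\bigl(\mathcal H_{\omega_{(\gamma,\delta)}},\,\pi_{\omega_{(\gamma,\delta)}},\,\pi_{\omega_{(\gamma,\delta)}}(U)\Omega_{\omega_{(\gamma,\delta)}}\bigr)
\]
is a GNS representation of $\omega_{(\gamma',\delta')}$; uniqueness of the GNS construction then delivers the desired unitary equivalence with $\pi_{\omega_{(\gamma',\delta')}}$. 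The only real obstacle is a bookkeeping one: checking that the mapping-cone descriptions of boundaries in $H^\infty_{n-1}(E;\hat G)$ and coboundaries in $H_\infty^n(E;G)$ yield representatives $\hat\gamma,\hat\delta,p,q$ in exactly the groups listed above with the sign conventions under which the adjointness identities go through. After that point every step is immediate from Proposition~\ref{prop:endomorphisms} and the structure of frustration free ground states established in Section~\ref{sec:ground}.
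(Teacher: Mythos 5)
Your proof is correct and takes essentially the same route as the paper's: you split the difference between the two representatives into a finite-support piece and a (locally finite) (co)boundary piece, show that the latter does not change any frustration free ground state, and then handle the finite-support piece as an inner automorphism, which the GNS uniqueness theorem turns into a unitary equivalence. The only difference is in presentation: the paper simply invokes Proposition~\ref{prop:sameexcitation} to absorb the (co)boundary piece and is terser about the last step, whereas you re-derive that special case directly and write out the GNS bookkeeping (including the explicit unitary $U = X^{\hat\delta}Z^{\hat\gamma}$ and the commutativity of the endomorphisms from Proposition~\ref{prop:endomorphisms}, which is needed to move $\rho_{(-\partial p, -\partial^T q)}$ next to $\omega_0$).
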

\begin{proof}
If $\gamma-\gamma'$ is a locally finite boundary and $\delta-\delta'$ is a coboundary then $\omega_{(\gamma,\delta)}=\omega_{(\gamma',\delta')}$ by Proposition \ref{prop:sameexcitation}. Therefore we can assume that $|\supp(\gamma-\gamma')|<\infty$ and $|\supp(\delta-\delta')|<\infty$.

Since $\rho_{\gamma'}=\rho_{\gamma}\circ\rho_{\gamma'-\gamma}$ and $\rho_{\delta'}=\rho_{\delta}\circ\rho_{\delta'-\delta}$ we have
\begin{equation}
\omega_{(\gamma',\delta')}=\omega_{(\gamma,\delta)}\circ\rho_{(\gamma'-\gamma,\delta'-\delta)}.
\end{equation}
By the assumption on $\gamma-\gamma'$ and $\delta-\delta'$ the map $\rho_{(\gamma'-\gamma,\delta'-\delta)}$ is an inner automorphism, which implies that the GNS representations of the states $\omega_{(\gamma,\delta)}$ and $\omega_{(\gamma',\delta')}$ are unitary equivalent.
\end{proof}

Next we would like to find a sufficient condition for such states to have inequivalent GNS representations. We do this by exhibiting elements in the center of the von~Neumann algebra generated by the representations such that their expected value differs in the two states. For this construction we only need the weaker condition on the state that it resembles a frustration free ground state when we move away far enough towards infinity. Such states will be called asymptotic ground states.
\begin{defn}\label{def:astrong}
A state $\omega:\mathfrak{A}\to\mathbb{C}$ is an asymptotic ground state if for any $\varepsilon>0$ there exist finite subsets $K_{\pm}(\varepsilon)\subseteq\mathcal{E}_{n\pm 1}$ such that for any $c\in C_{\text{lf}}^{n-1}(E;G)$ and $d\in C_{n+1}(E;\hat{G})$ with $\supp c\cap K_-(\varepsilon)=\emptyset$ and $\supp d\cap K_+(\varepsilon)=\emptyset$ the inequality
\begin{equation}
|\omega(X^{\partial^T c}Z^{\partial d})-1|<\varepsilon
\end{equation}
holds.
\end{defn}

Clearly, the set of asymptotic ground states is convex, contains the set of frustration free ground states and is closed in the uniform topology.

\begin{thm}\label{thm:polarization}
Suppose that $\omega:\mathfrak{A}\to\mathbb{C}$ is an asymptotic ground state. Then
\begin{enumerate}
\item for any $c\in C^{n-1}(E;G)$ and $d\in C^{\text{lf}}_{n+1}(E;\hat{G})$ with $|\supp\partial^T c|<\infty$ and $|\supp\partial d|<\infty$ the limit
\begin{equation}
\lim_{\substack{K_-\to\mathcal{E}_{n-1}  \\  K_+\to\mathcal{E}_{n+1}}}\pi_{\omega}\left(X^{\partial^T(c-c_{K_-})}Z^{\partial(d-d_{K_+})}\right)
\end{equation}
exists in the strong operator topology and it is unitary
\item the limit is in $\pi_\omega(\mathfrak{A})'\cap\pi_\omega(\mathfrak{A})''$
\item the limit only depends on $\homatinfty{d}$ and $\cohomatinfty{c}$
\item the map $P_\omega:H^\infty_n(E;\hat{G})\times H_\infty^{n-1}(E;G)\to\pi_\omega(\mathfrak{A})'\cap\pi_\omega(\mathfrak{A})''\hookrightarrow\boundeds(\mathcal{H}_{\omega})$ defined by this limit is a group homomorphism into the subset of unitary elements (i.e. a unitary representation on $\mathcal{H}_\omega$).
\end{enumerate}
\end{thm}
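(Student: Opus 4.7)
The plan dispatches the four parts in order, with a common backbone: commute operators past each other using $\partial^2=0$, and invoke Definition \ref{def:astrong} after arranging that supports avoid $K_\pm(\varepsilon)$. For part (1), first note that the two factors in $U_K:=X^{\partial^T(c-c_{K_-})}Z^{\partial(d-d_{K_+})}$ commute, since the commutation phase equals $e^{-2\pi i\langle\partial(d-d_{K_+}),\partial^T(c-c_{K_-})\rangle}=e^{-2\pi i\langle\partial^2(d-d_{K_+}),c-c_{K_-}\rangle}=1$ (the $\partial$--$\partial^T$ adjunction applies because $\partial(d-d_{K_+})$ has finite support). To prove SOT convergence of $\pi_\omega(U_K)\Omega_\omega$, compute for $K\subseteq K'$ that $U_K^*U_{K'}=X^{-\partial^T c_{K_-'\setminus K_-}}Z^{\partial d_{K_+'\setminus K_+}}$, where cross phases again vanish by $\partial^2=0$. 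If $K_\pm\supseteq K_\pm(\varepsilon)$ the supports of the annular restrictions avoid $K_\pm(\varepsilon)$, so Definition \ref{def:astrong} gives $|\omega(U_K^*U_{K'})-1|<\varepsilon$, hence $\|\pi_\omega(U_K-U_{K'})\Omega_\omega\|^2\le 2\varepsilon$. Convergence then extends from $\Omega_\omega$ to all of $\mathcal H_\omega$ by uniform boundedness ($\|U_K\|=1$) together with the eventual commutation $[U_K,A]=0$ for local $A\in\mathfrak A_\Lambda$ (since $\supp\partial^T(c-c_{K_-})\cap\Lambda$ and $\supp\partial(d-d_{K_+})\cap\Lambda$ are empty once $K_\pm$ contain the cells adjacent to $\Lambda$, using finiteness of $\{\alpha:d_{\alpha\beta}\ne 0\}$ for each $\beta\in\Lambda$). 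Running the same argument on the adjoints and invoking joint SOT-continuity of multiplication on norm-bounded sets yields unitarity.

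Part (2) is immediate: the eventual commutation of $U_K$ with local observables passes to the SOT-limit, giving $P_\omega\in\pi_\omega(\mathfrak A)'$, and the double commutant theorem places $P_\omega\in\pi_\omega(\mathfrak A)''$.

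The main work is part (3). Modifying $d$ by a finite chain $a'\in C_{n+1}(E;\hat G)$ changes nothing, since $(d+a')_{K_+}=d_{K_+}+a'$ once $K_+\supseteq\supp a'$. The substantive invariance is under $d\mapsto d+\partial g$ for $g\in C^{\text{lf}}_{n+2}(E;\hat G)$: using $\partial^2 g=0$ a direct calculation gives
\[
Z^{\partial((d+\partial g)-(d+\partial g)_{K_+})}=Z^{\partial(d-d_{K_+})}\cdot Z^{-\partial(\partial g)_{K_+}},
\]
so the task reduces to showing $W(g):=\lim_{K_+}\pi_\omega(Z^{\partial(\partial g)_{K_+}})=I$ in SOT. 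By the Cauchy argument from part (1) this limit exists (its consecutive differences involve $(\partial g)_{K_+'\setminus K_+}$, whose support is outside $K_+(\varepsilon)$), and $W(g)$ is a unitary in the center. The crucial point is $\omega(W(g))=1$: since $\partial(\partial g)=0$ as a locally finite chain, $\partial(\partial g)_{K_+}$ can be rewritten as $\partial d'$ for a \emph{finite} chain $d'\in C_{n+1}(E;\hat G)$ with $\supp d'\cap K_+(\varepsilon)=\emptyset$, obtained by ``closing off'' the infinite tail $(\partial g)_{K_+^{\mathrm c}}$ to a finite chain having the same boundary (using $\partial(\partial g)_{K_+}=-\partial(\partial g)_{K_+^{\mathrm c}}$ together with the local finiteness of the CW complex); Definition \ref{def:astrong} then yields $|\omega(Z^{\partial(\partial g)_{K_+}})-1|<\varepsilon$. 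Since $\Omega_\omega$ is cyclic for $\pi_\omega(\mathfrak A)''$ it is separating for its commutant, and \emph{a fortiori} for the center, so $W(g)\Omega_\omega=\Omega_\omega$ forces $W(g)=I$. The symmetric argument with $(\partial^T)^2=0$ handles $c\mapsto c+\partial^T q$.

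Part (4) follows from the Pauli identity $X^{a_1}Z^{b_1}X^{a_2}Z^{b_2}=e^{2\pi i\langle b_1,a_2\rangle}X^{a_1+a_2}Z^{b_1+b_2}$: applied with $a_i=\partial^T(c_i-(c_i)_{K_-})$ and $b_i=\partial(d_i-(d_i)_{K_+})$, the phase vanishes by $\partial^2=0$, and joint SOT-continuity on the bounded net delivers the homomorphism property; unitarity of the image is inherited from part (1). The hardest step throughout is the finite-chain construction hidden in part (3), which combines $\partial^2 g=0$ with the local finiteness of the CW complex to realize $\partial(\partial g)_{K_+}$ as the boundary of a finite chain supported far from the bad region.
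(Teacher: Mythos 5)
Your proposal is correct and takes essentially the same route as the paper. The one genuine variation is in part~(3): the paper controls matrix elements $\langle[X^{a_1}Z^{b_1}],\,\cdot\,[X^{a_2}Z^{b_2}]\rangle$ and uses Lemma~\ref{lem:unitary} to absorb the extra factor $X^{-\partial^T((\partial^T e)_{K_-})}Z^{-\partial((\partial f)_{K_+})}$, whereas you package that factor as a central unitary $W(g)$, show $\omega(W(g))=1$, and conclude $W(g)=I$ because $\Omega_\omega$ separates $\pi_\omega(\mathfrak A)'$. Both rest on the same technical step, which you correctly flag as the hardest: $\partial((\partial g)_{K_+})$ is the boundary of a chain supported \emph{inside} $K_+$, to which Definition~\ref{def:astrong} does not apply directly, so one must rewrite it as $\pm\partial d'$ with $d'$ a \emph{finite} $(n{+}1)$-chain supported \emph{outside} $K_+$. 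Your ``closing off the infinite tail'' phrase captures the idea but elides the construction; the paper makes it precise by taking $L_+=\bigcup_{\alpha\in K_+,\,g\in G}\supp(\partial^T(ge_\alpha))$, the finite set of $(n{+}2)$-cells adjacent to $K_+$, and then $d'=(\partial g_{L_+})_{\mathcal E_{n+1}\setminus K_+}$ (finite because $L_+$ is, and supported outside $K_+$ by construction). That step should be spelled out. A minor further reorganization: in part~(1) the paper proves the Cauchy property directly on the dense set $\{[X^aZ^b]\}$ by enlarging $K_\pm(\varepsilon)$ so that $\supp\partial^Ta\subseteq K_+(\varepsilon)$ and $\supp\partial b\subseteq K_-(\varepsilon)$, whereas you verify it on $\Omega_\omega$ and extend via the eventual vanishing of $[U_K,A]$ for local $A$; these are equivalent.
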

\begin{proof}
Since the net in question is norm-bounded, it is enough to prove pointwise convergence in norm on a set spanning a dense subspace. Fix some $a\in C_{\text{lf}}^n(E;G)$ and $b\in C_n(E;\hat{G})$. For any $\varepsilon>0$ choose $K_\pm(\varepsilon)\subseteq\mathcal{E}_{n\pm 1}$ as in the definition, but with the additional property $\supp\partial^T a\subseteq K_+(\varepsilon)$ and $\supp\partial b\subseteq K_-(\varepsilon)$. Finally, let $K_\pm,K'_\pm\subseteq\mathcal{E}_{n\pm 1}$ be finite subsets such that $K_\pm(\varepsilon)\subseteq K_\pm$ and $K_\pm(\varepsilon)\subseteq K'_\pm$. Then
\begin{equation}
\begin{split}
& \norm{\pi_\omega(X^{\partial^T(c-c_{K_-})}Z^{\partial(d-d_{K_+})})[X^aZ^b]-\pi_\omega(X^{\partial^T(c-c_{K'_-})}Z^{\partial(d-d_{K'_+})})[X^aZ^b]}^2  \\
& = \norm{\pi_\omega(X^{\partial^T(c-c_{K_-})}Z^{\partial(d-d_{K_+})})\left(I-\pi_\omega(X^{\partial^T(c_{K_-}-c_{K'_-})}Z^{\partial(d_{K_+}-d_{K'_+})})\right)[X^aZ^b]}^2  \\
& = \omega\left(Z^{-b}X^{-a}\left(I-X^{\partial^T(c_{K_-}-c_{K'_-})}Z^{\partial(d_{K_+}-d_{K'_+})}\right)^*\left(I-X^{\partial^T(c_{K_-}-c_{K'_-})}Z^{\partial(d_{K_+}-d_{K'_+})}\right)X^aZ^b\right)  \\
& = 2-2\Re e^{2\pi i(\langle d_{K_+}-d_{K'_+},\partial^T a\rangle+\langle \partial b,c_{K_-}-c_{K'_-}\rangle)}\omega(X^{\partial^T(c_{K_-}-c_{K'_-})}Z^{\partial(d_{K_+}-d_{K'_+})})  \\
& = 2\left(1-\Re\omega\left(X^{\partial^T(c_{K_-}-c_{K'_-})}Z^{\partial(d_{K_+}-d_{K'_+})}\right)\right)<2\varepsilon,
\end{split}
\end{equation}
estabilishing that the net of such vectors is Cauchy.

Since the net goes in $\pi_\omega(\mathfrak{A})$ and the limit exists in the strong operator topology, it is contained in the strong closure $\pi_\omega(\mathfrak{A})''$. To see that it is also in the commutant it is enough to compute the strong limit of commutators with local algebra elements, because multiplication is continuous separately in its variables in the strong operator topology. Actually, even more is true: the net of such commutators is eventually $0$. To see this, let $a\in C_{\text{lf}}^n(E;G)$ and $b\in C_n(E;\hat{G})$. Then
\begin{equation}
\begin{split}
& \left[\pi_\omega\left(X^{\partial^T(c-c_{K_-})}Z^{\partial(d-d_{K_+})}\right),\pi_\omega(X^aZ^b)\right]  \\
& = \pi_\omega(X^{\partial^T(c-c_{K_-})}Z^{\partial(d-d_{K_+})}X^aZ^b-X^aZ^bX^{\partial^T(c-c_{K_-})}Z^{\partial(d-d_{K_+})})  \\
& = (1-e^{2\pi i(\langle b,\partial^T(c-c_{K_-})\rangle-\langle \partial(d-d_{K_+}),a\rangle)})\pi_\omega(X^{\partial^T(c-c_{K_-})}Z^{\partial(d-d_{K_+})}X^aZ^b)
\end{split}
\end{equation}
and here the first factor is $0$ if the sets $K_\pm$ contain $\supp\partial b$ and $\supp\partial^T a$.

The limit of a net of unitaries converging in the strong operator topology is an isometry. Taking the inverse of each element we get the net
\begin{equation}
\pi_{\omega}\left(X^{\partial^T((-c)-(-c)_{K_-})}Z^{\partial((-d)-(-d)_{K_+})}\right),
\end{equation}
which is of the same form, therefore it also converges strongly and we can conclude that the limit is unitary.

In the following part we prefer to work with weak convergence, using the fact that the weak and strong limits of a strongly convergent net agree.
If we add an $n+1$-chain to $d$ and a locally finite $n-1$-cochain to $c$ then for large enough $K_\pm$ the differences $d-d_{K_+}$ and $c-c_{K_-}$ remain unchanged. Suppose now that we add a locally finite $n+1$-boundary $\partial f$ to $d$ and an $n-1$-coboundary $\partial^Te$ to $c$. Then for any $a_1,a_2,b_1,b_2$ as before and for large enough $K_\pm$ we have
\begin{multline}\label{eq:boundaryadded}
\left\langle[X^{a_1}Z^{b_1}],\pi_\omega\left(X^{\partial^T((c+\partial^Te)-(c+\partial^Te)_{K_-})}Z^{\partial((d+\partial f)-(d+\partial f)_{K_+})}\right)[X^{a_2}Z^{b_2}]\right\rangle  \\
=\omega\left(Z^{-b_1}X^{-a_1}X^{a_2}Z^{b_2}X^{\partial^T((c+\partial^Te)-(c+\partial^Te)_{K_-})}Z^{\partial((d+\partial f)-(d+\partial f)_{K_+})}\right)  \\
=\omega\left(Z^{-b_1}X^{-a_1}X^{a_2}Z^{b_2}X^{\partial^T(c-c_{K_-})}Z^{\partial(d-d_{K_+})}X^{-\partial^T((\partial^Te)_{K_-})}Z^{-\partial((\partial f)_{K_+})}\right).
\end{multline}

\begin{figure}
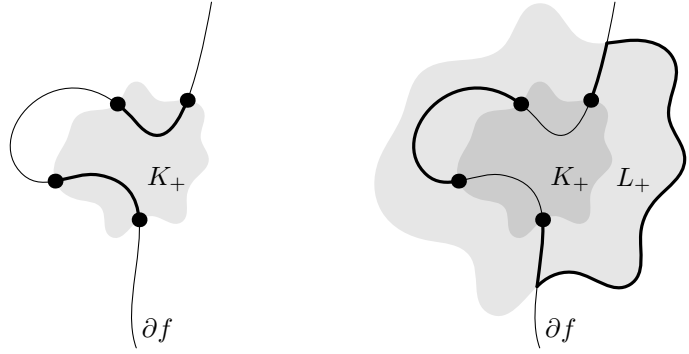

\begin{subfigure}{.45\textwidth}
\centering
\includegraphics{finiteboundary1.mps}
\end{subfigure}
\begin{subfigure}{.45\textwidth}
\centering
\includegraphics{finiteboundary2.mps}
\end{subfigure}
\caption{Left: If $f$ is a locally finite $n+2$-chain, we can take its boundary, truncate it to the finite set $K_+\subseteq\mathcal{E}_{n+1}$ then take the boundary again. Right: The same can be obtained as the boundary of an $n+1$-chain supported \emph{outside} $K_+$. To do this, first truncate $-f$ to a slightly larger set $L_+\subseteq\mathcal{E}_{n+2}$, take the boundary, truncate to the complement of $K_+$ and take the boundary again. ($n=0$)\label{fig:finiteboundary}}
\end{figure}
The key observation here is that $-\partial^T((\partial^Te)_{K_-})$ can also be written as the coboundary of a locally finite $n-1$-cochain supported outside $K_-$ and similarly, $-\partial((\partial f)_{K_+})$ can be written as the boundary of an $n+1$ chain supported outside $K_+$ (Figure \ref{fig:finiteboundary}). To see this, let $L_+=\bigcup_{\alpha\in K_+,g\in G}\supp(\partial^T(ge_{\alpha}))$, i.e. the set of those $n+2$-cells whose boundary intersects $K_+$. Then
\begin{equation}
\begin{split}
0
 & = \partial^2(f_{L_+})  \\
 & = \partial\left((\partial f_{L_+})_{K_+}+(\partial f_{L_+})_{\mathcal{E}_{n+1}\setminus K_+}\right)  \\
 & = \partial\left((\partial(f-f_{\mathcal{E}_{n+2}\setminus L_+}))_{K_+}\right)+\partial\left((\partial f_{L_+})_{\mathcal{E}_{n+1}\setminus K_+}\right)  \\
 & = \partial\left((\partial f)_{K_+}\right)-\partial\left((\partial f_{\mathcal{E}_{n+2}\setminus L_+})_{K_+}\right)+\partial\left((\partial f_{L_+})_{\mathcal{E}_{n+1}\setminus K_+}\right).
\end{split}
\end{equation}
Here the second term is $0$ because of the way $L_+$ was chosen and the last term is the boundary of an $n+1$-chain supported outside $K_+$, since $|L_+|<\infty$. The proof for $-\partial^T((\partial^Te)_{K_-})$ is similar with $L_-=\bigcup_{\beta\in K_-,\chi\in \hat{G}}\supp(\partial (\chi e_{\beta}))$. Since $\omega$ is an asymptotic ground state, for any $\varepsilon>0$ and for large enough $K_{\pm}$ we have
\begin{equation}
\left|\omega\left(X^{-\partial^T((\partial^Te)_{K_-})}Z^{-\partial((\partial f)_{K_+})}\right)-1\right|<\varepsilon,
\end{equation}
therefore by Lemma \ref{lem:unitary} its contribution in eq. \eqref{eq:boundaryadded} becomes negligible.

Finally, for any $d,d'\in C^{\text{lf}}_{n+1}(E;\hat{G})$ and $c,c'\in C^{n-1}(E;G)$ as before we have
\begin{equation}
\begin{split}
& X^{\partial^T(c-c_{K_-})}Z^{\partial(d-d_{K_+})}X^{\partial^T(c'-c'_{K_-})}Z^{\partial(d'-d'_{K_+})}  \\
 & = e^{2\pi i\langle\partial(d-d_{K_+}), \partial^T(c'-c'_{K_-})\rangle}X^{\partial^T((c+c')-(c+c')_{K_-})}Z^{\partial((d+d')-(d+d')_{K_+})}  \\
 & = e^{2\pi i\langle\partial^2(d-d_{K_+}), c'-c'_{K_-}\rangle}X^{\partial^T((c+c')-(c+c')_{K_-})}Z^{\partial((d+d')-(d+d')_{K_+})}  \\
 & = X^{\partial^T((c+c')-(c+c')_{K_-})}Z^{\partial((d+d')-(d+d')_{K_+})}.
\end{split}
\end{equation}
A net of unitaries is bounded and multiplication is jointly continuous in the strong operator topology when restricted to a norm-bounded set, so our map is a homomorphism as claimed.
\end{proof}

\begin{defn}\label{def:polarization}
We call the representation $P_\omega:H^\infty_n(E;\hat{G})\times H_\infty^{n-1}(E;G)\to\boundeds(\mathcal{H}_{\omega})$ in the theorem above the polarization of $\omega$.
\end{defn}

As an example, take $n=0$, $G=\mathbb{Z}_2$ (i.e. the ferromagnetic Ising model) on the half-line. Then there are two frustration free ground states with two different magnetic polarizations and the value of $P_\omega$ on the nontrivial element is $\pm1$ times the identity in the corresponding representations.

\begin{prop}\label{prop:polarization-endomorphism}
Let $\omega:\mathfrak{A}\to\mathbb{C}$ be an asymptotic ground state and let $\gamma\in C^{\text{lf}}_n(E;\hat{G})$ and $\delta\in C^n(E;G)$ with $|\supp\partial\gamma|<\infty$ and $|\supp\partial^T\delta|<\infty$. Then $\omega':=\omega\circ\rho_{(\gamma,\delta)}$ is also an asymptotic ground state. Moreover, the representation $P_{\omega'}:H^\infty_n(E;\hat{G})\times H_\infty^{n-1}(E;G)\to \boundeds(\mathcal{H}_{\omega'})$ is isomorphic to $P_\omega\otimes V_{(\gamma,\delta)}$ where $V_{(\gamma,\delta)}$ is a one dimensional representation where the pair $(\homatinfty{d},\cohomatinfty{c})$ acts as multiplication by
\begin{equation}
e^{2\pi i(\langle\homatinfty{\gamma},\cohomatinfty{c}\rangle+\langle\homatinfty{d},\cohomatinfty{\delta}\rangle)}.
\end{equation}
\end{prop}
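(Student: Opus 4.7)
The plan is, first, to verify that $\omega'$ is an asymptotic ground state so that $P_{\omega'}$ is defined by Theorem~\ref{thm:polarization}. For any $c\in C^{n-1}_{\text{lf}}(E;G)$ and $d\in C_{n+1}(E;\hat{G})$ (which have finite support by definition), Proposition~\ref{prop:endomorphisms} gives
\begin{equation*}
\omega'(X^{\partial^Tc}Z^{\partial d})=\omega(\rho_{(\gamma,\delta)}(X^{\partial^Tc}Z^{\partial d}))=e^{2\pi i(\langle\gamma,\partial^Tc\rangle-\langle\partial d,\delta\rangle)}\,\omega(X^{\partial^Tc}Z^{\partial d}).
\end{equation*}
Since $c$ and $d$ have finite support I may rewrite the phase as $e^{2\pi i(\langle\partial\gamma,c\rangle-\langle d,\partial^T\delta\rangle)}$. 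The hypotheses $|\supp\partial\gamma|<\infty$ and $|\supp\partial^T\delta|<\infty$ let me enlarge the sets $K_\pm(\varepsilon)$ supplied by Definition~\ref{def:astrong} applied to $\omega$ so that they contain $\supp\partial\gamma$ and $\supp\partial^T\delta$; then whenever $\supp c\cap K_-=\emptyset$ and $\supp d\cap K_+=\emptyset$ the phase is $1$, and the required estimate $|\omega'(X^{\partial^Tc}Z^{\partial d})-1|<\varepsilon$ is inherited from the corresponding estimate for $\omega$.

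Next I would exploit the automorphism $\rho_{(\gamma,\delta)}$ to relate the GNS representations. By uniqueness of the GNS construction there exists a unitary $U\colon\mathcal{H}_{\omega'}\to\mathcal{H}_\omega$ with $U\Omega_{\omega'}=\Omega_\omega$ and $U\pi_{\omega'}(A)U^*=\pi_\omega(\rho_{(\gamma,\delta)}(A))$ for all $A\in\mathfrak{A}$. Applying this to the defining formula of $P_{\omega'}(\homatinfty{d},\cohomatinfty{c})$ and using that conjugation by $U$ is strongly continuous yields
\begin{equation*}
UP_{\omega'}(\homatinfty{d},\cohomatinfty{c})U^*=\lim_{K_\pm\to\mathcal{E}_{n\pm 1}}\pi_\omega\!\left(\rho_{(\gamma,\delta)}\!\left(X^{\partial^T(c-c_{K_-})}Z^{\partial(d-d_{K_+})}\right)\right)
\end{equation*}
in the strong operator topology, so everything reduces to evaluating this limit.

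The third step is the main calculation: to show that this limit equals $e^{2\pi i(\langle\homatinfty{\gamma},\cohomatinfty{c}\rangle+\langle\homatinfty{d},\cohomatinfty{\delta}\rangle)}P_\omega(\homatinfty{d},\cohomatinfty{c})$. The formula used in the proof of Proposition~\ref{prop:endomorphisms} gives
\begin{equation*}
\rho_{(\gamma,\delta)}\!\left(X^{\partial^T(c-c_{K_-})}Z^{\partial(d-d_{K_+})}\right)=e^{2\pi i\phi_{K_-,K_+}}X^{\partial^T(c-c_{K_-})}Z^{\partial(d-d_{K_+})},
\end{equation*}
with $\phi_{K_-,K_+}=\langle\gamma,\partial^T(c-c_{K_-})\rangle-\langle\partial(d-d_{K_+}),\delta\rangle$. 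Splitting $\langle\gamma,\partial^T(c-c_{K_-})\rangle=\langle\gamma,\partial^Tc\rangle-\langle\gamma,\partial^Tc_{K_-}\rangle$, the finite support of $c_{K_-}$ permits the adjoint-pairing identity $\langle\gamma,\partial^Tc_{K_-}\rangle=\langle\partial\gamma,c_{K_-}\rangle$; once $K_-\supseteq\supp\partial\gamma$ the latter equals $\langle\partial\gamma,c\rangle$. The analogous manipulation applied to the $d$-term, using the finite support of $d_{K_+}$ and eventually $K_+\supseteq\supp\partial^T\delta$, yields in the limit
\begin{equation*}
\phi_{K_-,K_+}\to\bigl(\langle\gamma,\partial^Tc\rangle-\langle\partial\gamma,c\rangle\bigr)+\bigl(\langle d,\partial^T\delta\rangle-\langle\partial d,\delta\rangle\bigr)=\langle\homatinfty{\gamma},\cohomatinfty{c}\rangle+\langle\homatinfty{d},\cohomatinfty{\delta}\rangle.
\end{equation*}
Since scalar multiplication is jointly continuous in the strong operator topology on norm-bounded nets, and since $\pi_\omega(X^{\partial^T(c-c_{K_-})}Z^{\partial(d-d_{K_+})})\to P_\omega(\homatinfty{d},\cohomatinfty{c})$ strongly by Theorem~\ref{thm:polarization} applied to $\omega$, the product converges as claimed. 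Because $V_{(\gamma,\delta)}$ is one dimensional this is exactly the isomorphism $P_{\omega'}\simeq P_\omega\otimes V_{(\gamma,\delta)}$.

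The main technical hazard is precisely the asymmetry emphasised at the end of Section~\ref{sec:homology}: the identity $\langle b,\partial^Ta\rangle=\langle\partial b,a\rangle$ may fail when both $a$ and $b$ have infinite support, and $\gamma,c$ (respectively $d,\delta$) are typically of this type. The truncations $c_{K_-}$, $d_{K_+}$ are exactly what legitimise moving the (co-)boundary across the pairing: one side is always finitely supported, and the reassembly of the separate $\partial\gamma$- and $\partial^T\delta$-terms into the infinity pairings happens only in the limit, mirroring the very definition of $\langle\homatinfty{\cdot},\cohomatinfty{\cdot}\rangle$.
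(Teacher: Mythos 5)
Your proposal is correct and follows essentially the same route as the paper: conjugate the defining net for $P_{\omega'}$ by the GNS unitary implementing $\rho_{(\gamma,\delta)}$, use the formula $\rho_{(\gamma,\delta)}(X^aZ^b)=e^{2\pi i(\langle\gamma,a\rangle-\langle b,\delta\rangle)}X^aZ^b$, and observe that the truncations legitimize moving (co-)boundaries across pairings so that the scalar phase becomes eventually constant and reassembles into the $\infty$-pairing. The paper computes the same quantity via matrix elements and constructs the unitary explicitly rather than invoking GNS uniqueness, and (unlike you) it leaves the verification that $\omega'$ is an asymptotic ground state implicit — your explicit check of that first claim is a welcome addition.
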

\begin{proof}
For simplicity, in this proof we identify the representation space of $V_{(\gamma,\delta)}$ with $\mathbb{C}$ and use $\mathcal{H}_\omega\otimes\mathbb{C}=\mathcal{H}_\omega$ (as Hilbert spaces).

We will show that $U[A]_\omega=[\rho_{-\gamma}\rho_{-\delta}(A)]_{\omega'}$ provides the required isomorphism. Since
\begin{equation}
\begin{split}
\norm{[\rho_{-\gamma}\rho_{-\delta}(A)]_{\omega'}}^2
 & = \omega'((\rho_{-\gamma}\rho_{-\delta}(A))^*\rho_{-\gamma}\rho_{-\delta}(A))  \\
 & = \omega'(\rho_{-\gamma}\rho_{-\delta}(A^*A))  \\
 & = \omega(A^*A)=\norm{[A]_\omega}^2
\end{split}
\end{equation}
for any $A$ we have that $J_{\omega'}=\rho_{-\gamma}\rho_{-\delta}(J_\omega)$, therefore the map is well-defined and isometric. For the same reason, $[A]_{\omega'}\mapsto[\rho_{\gamma}\rho_{\delta}A]_\omega$ is well-defined, and as this is clearly $U^{-1}$ we have that $U$ is unitary.

To see that $U$ is equivariant, take any $A,B\in\mathfrak{A}$, $d\in C^{\text{lf}}_{n+1}(E;\hat{G})$, $c\in C^{n-1}(E;G)$ with $\supp\partial d$ and $\supp\partial^T c$ finite. Let us abbreviate $P_\omega(\homatinfty{d},\cohomatinfty{c})$ by $P_\omega$ and similarly for $P_{\omega'}$. Then
\begin{equation}
\begin{split}
 & \langle U[B]_\omega,P_{\omega'} U[A]_\omega\rangle  \\
 & = \langle [\rho_{-\gamma}\rho_{-\delta}(B)]_{\omega'},P_{\omega'} [\rho_{-\gamma}\rho_{-\delta}(A)]_{\omega'}\rangle  \\
 & = \lim_{K\pm\to\mathcal{E}_{n\pm 1}}\langle [\rho_{-\gamma}\rho_{-\delta}(B)]_{\omega'},\pi_{\omega'}(X^{\partial^T(c-c_{K_-})}Z^{\partial(d-d_{K_+})})[\rho_{-\gamma}\rho_{-\delta}(A)]_{\omega'}\rangle  \\
 & = \lim_{K\pm\to\mathcal{E}_{n\pm 1}}\omega'(\rho_{-\gamma}\rho_{-\delta}(B^*)X^{\partial^T(c-c_{K_-})}Z^{\partial(d-d_{K_+})}\rho_{-\gamma}\rho_{-\delta}(A))  \\
 & = \lim_{K\pm\to\mathcal{E}_{n\pm 1}}\omega(B^*\rho_{\gamma}\rho_{\delta}(X^{\partial^T(c-c_{K_-})}Z^{\partial(d-d_{K_+})})A)  \\
 & = \lim_{K\pm\to\mathcal{E}_{n\pm 1}}e^{2\pi i(\langle\gamma,\partial^T(c-c_{K_-})\rangle-\langle\partial(d-d_{K_+}),\delta\rangle)}\omega(B^*X^{\partial^T(c-c_{K_-})}Z^{\partial(d-d_{K_+})}A)  \\
 & = \lim_{K\pm\to\mathcal{E}_{n\pm 1}}e^{2\pi i(\langle\gamma,\partial^Tc\rangle-\langle\partial\gamma,c_{K_-}\rangle-\langle\partial d,\delta\rangle+\langle d_{K_+},\partial^T\delta\rangle)}\langle[B],\pi_\omega(X^{\partial^T(c-c_{K_-})}Z^{\partial(d-d_{K_+})})[A]\rangle  \\
 & = e^{2\pi i(\langle\gamma,\partial^Tc\rangle-\langle\partial\gamma,c\rangle+\langle d,\partial^T\delta\rangle-\langle\partial d,\delta\rangle)}\langle[B],P_\omega[A]\rangle
\end{split}
\end{equation}
holds, i.e. $U^*P_{\omega'}U=e^{2\pi i(\langle\homatinfty{\gamma},\cohomatinfty{c}\rangle+\langle\homatinfty{d},\cohomatinfty{\delta}\rangle)}P_\omega$.
\end{proof}

Since a frustration free ground state $\omega_0$ is an asymptotic ground state, the same is true for states $\omega_0\circ\rho_{(\gamma,\delta)}$ with $|\supp\partial\gamma|<\infty$ and ${|\supp\partial^T\delta|<\infty}$. From Proposition \ref{thm:equivalent} and Theorem \ref{thm:polarization} it follows that if the pairings $H^\infty_{n-1}(E;\hat{G})\times H_\infty^{n-1}(E;G)\to\mathbb{Q}/\mathbb{Z}$ and $H^\infty_n(E;\hat{G})\times H_\infty^n(E;G)\to\mathbb{Q}/\mathbb{Z}$ are nondegenerate, then quasi-equivalence classes of such states are distinguished by their polarizations. This is the case e.g. for spaces of the form $F\times\mathbb{R}^k$ with $F$ a compact CW complex as well as for spaces obtained by changing a finite part in such a space.

In the toric code the GNS representations under consideration satisfy an important selection criterion. Namely, they are unitary equivalent to the frustration free ground state (vacuum) when restricted to observables supported in the complement of any infinite cone. In our model there seem to be no obvious analogues of cones, but a similar property can be derived for certain regions. However, the allowed regions may depend on the representation.
\begin{prop}
Let $\omega_{(\gamma,\delta)}$ be a state as before with $|\supp\partial\gamma|<\infty$ and $|\supp\partial^T\delta|<\infty$. Suppose that $C\subseteq\mathcal{E}_n$ is a region such that the (co-)homology classes at infinity $\homatinfty{\gamma}$ and $\cohomatinfty{\delta}$ have representatives $(-\partial\gamma',\gamma')$ and $(-\partial\delta',\delta')$ with $(\supp\gamma')\cup(\supp\delta')\subseteq C$. Then
\begin{equation}
\pi_{\omega_{(\gamma,\delta)}}|_{\mathfrak{A}_{\mathcal{E}_n\setminus C}}\simeq\pi_{\omega_0}|_{\mathfrak{A}_{\mathcal{E}_n\setminus C}}.
\end{equation}
\end{prop}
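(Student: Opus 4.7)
The plan is to reduce to the case where $\gamma$ and $\delta$ are themselves supported inside $C$, and then observe that in this reduced situation the endomorphism $\rho_{(\gamma,\delta)}$ acts as the identity on $\mathfrak{A}_{\mathcal{E}_n\setminus C}$. Since $\homatinfty{\gamma}=\homatinfty{\gamma'}$ and $\cohomatinfty{\delta}=\cohomatinfty{\delta'}$, Proposition \ref{thm:equivalent} produces a unitary equivalence $\pi_{\omega_{(\gamma,\delta)}}\simeq\pi_{\omega_{(\gamma',\delta')}}$ of representations of the full algebra $\mathfrak{A}$. Restricting this equivalence to the subalgebra $\mathfrak{A}_{\mathcal{E}_n\setminus C}$ still gives a unitary equivalence of the restricted representations, so it suffices to establish the result with $(\gamma',\delta')$ in place of $(\gamma,\delta)$.

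For any finite $K\subseteq\mathcal{E}_n$, the truncations $\gamma'_K$ and $\delta'_K$ are supported in $(\supp\gamma')\cap K\subseteq C$ and $(\supp\delta')\cap K\subseteq C$ respectively, so the unitary $Z^{\gamma'_K}X^{\delta'_K}$ lies in $\mathfrak{A}_C$. Because local operators with disjoint supports commute in the quasi-local algebra, $Z^{\gamma'_K}X^{\delta'_K}$ commutes with every local observable supported in $\mathcal{E}_n\setminus C$, and hence by norm density with every element of $\mathfrak{A}_{\mathcal{E}_n\setminus C}$. The net defining $\rho_{(\gamma',\delta')}(A)$ is therefore eventually (in fact always) equal to $A$ for every $A\in\mathfrak{A}_{\mathcal{E}_n\setminus C}$, so $\rho_{(\gamma',\delta')}|_{\mathfrak{A}_{\mathcal{E}_n\setminus C}}=\id$.

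Finally, since $\omega_{(\gamma',\delta')}=\omega_0\circ\rho_{(\gamma',\delta')}$, the standard fact that the GNS representation of $\omega_0\circ\alpha$ is unitarily equivalent to $\pi_{\omega_0}\circ\alpha$ for any automorphism $\alpha$ gives a unitary $V:\mathcal{H}_{\omega_{(\gamma',\delta')}}\to\mathcal{H}_{\omega_0}$ with
\begin{equation}
V\,\pi_{\omega_{(\gamma',\delta')}}(A)\,V^*=\pi_{\omega_0}(\rho_{(\gamma',\delta')}(A))
\end{equation}
for all $A\in\mathfrak{A}$. By the previous paragraph, the right-hand side equals $\pi_{\omega_0}(A)$ whenever $A\in\mathfrak{A}_{\mathcal{E}_n\setminus C}$, so $V$ intertwines the two restricted representations, completing the proof.

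The argument is essentially formal once Proposition \ref{thm:equivalent} is in hand; there is no serious obstacle. The only mild care needed is that both unitary equivalences we employ are implemented on the full Hilbert space by single unitaries that do not depend on the subalgebra of restriction, so they automatically intertwine the restrictions. The geometric content is concentrated in the observation that $X$- and $Z$-strings representing the prescribed classes at infinity may be chosen within $C$, which is precisely the hypothesis of the proposition.
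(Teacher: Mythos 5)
Your proof is correct and follows essentially the same approach as the paper: pass via Proposition \ref{thm:equivalent} to the representatives $(\gamma',\delta')$ supported in $C$, observe that $\rho_{(\gamma',\delta')}$ acts as the identity on $\mathfrak{A}_{\mathcal{E}_n\setminus C}$, and use GNS uniqueness to identify $\pi_{\omega_{(\gamma',\delta')}}$ with $\pi_{\omega_0}\circ\rho_{(\gamma',\delta')}$. You merely spell out in somewhat more detail the two points the paper compresses, namely why the full-algebra unitary equivalence passes to restrictions and why the truncated string unitaries commute with $\mathfrak{A}_{\mathcal{E}_n\setminus C}$.
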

\begin{proof}
Choose $\gamma'$ and $\delta'$ as in the condition. Then by Proposition \ref{thm:equivalent} we have $\pi_{\omega_{(\gamma,\delta)}}\simeq\pi_{\omega_{(\gamma',\delta')}}$. In general, it is true that $\pi_\omega\circ\rho$ is a GNS representation for the state $\omega\circ\rho$, hence by uniqueness, $\pi_{\omega_0}\circ\rho_{(\gamma',\delta')}$ is also unitarily equivalent to $\pi_{\omega_{(\gamma,\delta)}}$. But on $\mathfrak{A}_{\mathcal{E}_n\setminus C}$ the endomorphism $\rho_{(\gamma',\delta')}$ acts as the identity map, so restricted to this subalgebra the representation is the same as $\pi_{\omega_0}$.
\end{proof}

The statement contains as a special case the corresponding one for cones in the plane. This is because \emph{any} element of $H_\infty^n(\mathbb{R}^2,G)$ and $H^\infty_{n-1}(\mathbb{R}^2,\hat{G})$ has a representative supported in any infinite cone region.

This proposition suggests that the role of cones should be played by regions where we can find representatives of (co-)homology classes at infinity. However, the exact condition depends on the equivalence class of the representation. One might try to impose the condition that \emph{every} (co-)homology class at infinity be represented within a region if it is to be admissible, but this idea does not seem to work. The problem is that in some cases this forces the complement to be too small in the sense that too many representations become equivalent to the vacuum when restricted to it. This is the case e.g when $E=\mathbb{R}$ and $n=0$, where such regions are the cofinite sets of $0$-cells, therefore any two representations become equivalent when restricted to their complements.

\section{Braiding}\label{sec:braiding}

In this section $\omega_0$ will be a fixed pure frustration free ground state, $\pi_0:\mathfrak{A}\to\boundeds(\mathcal{H})$ its GNS representation and for any endomorphism $\rho:\mathfrak{A}\to\mathfrak{A}$ we will identify the GNS representation of $\omega_0\circ\rho$ with $\pi_0\circ\rho$. Under this identification the distinguished cyclic vectors coincide. This vector will be denoted by $\Omega_0$.

In the DHR analysis one defines a category with localized and transportable endomorphisms as objects and intertwiners as morphisms (see e.g. ref. \cite{Halvorson}). In that case such an intertwiner $T\in\Hom(\rho,\rho')$ is an element of $\mathfrak{A}$ such that $T\rho(A)=\rho'(A)T$ for every $A\in\mathfrak{A}$. Applying $\pi_0$ to both sides one gets that $\pi_0(T)$ is an intertwiner between the corresponding representations. This category comes equipped with a tensor product defined as $\rho\otimes\sigma=\rho\circ\sigma$ for objects and as $S\otimes T=S\rho(T)$ for morphisms $S\in\Hom(\rho,\rho')$ and $T\in\Hom(\sigma,\sigma')$. Then a braiding can be defined as $\varepsilon_{\rho_1,\rho_2}(U_1,U_2)=\rho_2(U_1^*)U_2^*U_1\rho_1(U_2)\in\Hom(\rho_1\otimes\rho_2,\rho_2\otimes\rho_1)$ where $U_i\in\Hom(\rho_i,\rho_i')$ are unitary intertwiners for some ``spectator morphisms'' $\rho_i'$. On $\mathbb{R}^d$ with $d\ge 3$ this definition actually does not depend on the spectator morphisms and intertwiners chosen.

However, in our case (and also in the planar Kitaev model) the objects are not localized in compact regions and consequently, intertwiners are not in $\pi_0(\mathfrak{A})$, but rather in the von~Neumann algebra generated by it. Hence it is not possible to take their images under an endomorphism of $\mathfrak{A}$. One possibility to circumvent this problem is to show that the endomorphisms extend to some algebra containing the intertwiners. This route is taken in refs. \cite{NaaijkensLocalized,NaaijkensDuality} for the toric code on the plane, showing that localized endomorphisms extend to the von~Neumann algebra generated by local observables supported outside some fixed cone.

Our approach will be slightly different, since it is not clear what kind of region should be excluded. Still, the calculations will be very similar. The idea is that given a pair of locally finite $n$-chains $\gamma,\gamma'$ with finite boundaries and such that $\gamma'=\gamma-\hat{\gamma}+\partial p$ where $p$ is a locally finite $n+1$-chain and $|\supp\hat{\gamma}|<\infty$ it is possible to construct a net of unitaries in $\mathfrak{A}$ converging to a unitary intertwiner $U\in\Hom(\rho_\gamma,\rho_{\gamma'})$. Moreover, this can be done in such a way that applying an endomorphism of similar form to each element in the net results in another convergent net. This is the content of the following propositions.
\begin{figure}
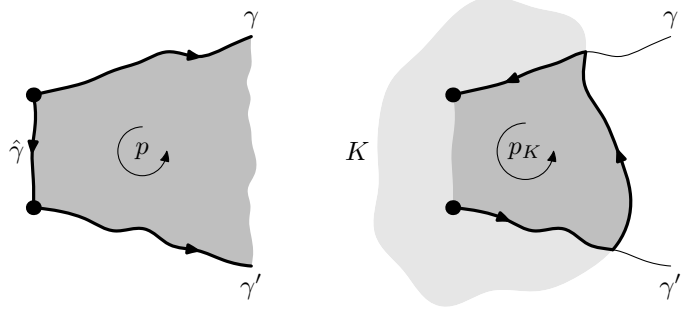

\begin{subfigure}{.45\textwidth}
\centering
\includegraphics{transporter1.mps}
\end{subfigure}
\begin{subfigure}{.45\textwidth}
\centering
\includegraphics{transporter2.mps}
\end{subfigure}
\caption{Left: $\gamma$ and $\gamma'$ represent the same class in $H^\infty_{n-1}(E;\hat{G})$, therefore it is possible to find $\hat{\gamma}\in C_n(E;\hat{G})$ and $p\in C^{\text{lf}}_{n+1}(E;\hat{G})$ such that $\gamma'=\gamma-\hat{\gamma}+\partial p$. Right: For each finite region $K\subseteq\mathcal{E}_{n+1}$ we can truncate $p$ to get an $n+1$ chain $p_K$ and form the operator corresponding to the $n$-chain $-\hat{\gamma}+\partial(p_K)$. As $K$ grows, the operators converge to a unitary intertwiner. ($n=1$)\label{fig:transporter}}
\end{figure}
\begin{prop}\label{prop:transporter}
Let $\gamma,\gamma'\in C^{\text{lf}}_n(E;\hat{G})$ and $\delta,\delta'\in C^n(E;G)$ such that $\homatinfty{\gamma}=\homatinfty{\gamma'}\in H^\infty_{n-1}(E;\hat{G})$ and $\cohomatinfty{\delta}=\cohomatinfty{\delta'}\in H_\infty^n(E;G)$. Choose $\hat{\gamma}\in C_n(E;\hat{G})$, $\hat{\delta}\in C_{\text{lf}}^n(E;G)$, $p\in C^{\text{lf}}_{n+1}(E;\hat{G})$ and $q\in C^{n-1}(E;G)$ such that
\begin{subequations}\label{eq:transportdata}
\begin{align}
\gamma' & = \gamma-\hat{\gamma}+\partial p  \\
\delta' & = \delta-\hat{\delta}+\partial^T q.
\end{align}
\end{subequations}
For $K_\pm\subseteq\mathcal{E}_{n\pm 1}$ let
\begin{equation}
U_{K_+,K_-}=Z^{-\hat{\gamma}}X^{-\hat{\delta}}X^{\partial^T(q_{K_-})}Z^{\partial(p_{K_+})}.
\end{equation}
(The construction is illustrated in Figure \ref{fig:transporter}.) Then the limit
\begin{equation}
U:=\lim_{K_\pm\to\mathcal{E}_{n\pm 1}}\pi_0(U_{K_+,K_-})
\end{equation}
exists in the strong operator topology, satisfies $U\pi_0(\rho_{(\gamma,\delta)}(A))=\pi_0(\rho_{(\gamma',\delta')}(A))U$ for any $A\in\mathfrak{A}$ (i.e. it is an intertwiner for the two representations). Moreover, $\pi_0(X^{\hat{\delta}}Z^{\hat{\gamma}})U\Omega_0=\Omega_0$.
\end{prop}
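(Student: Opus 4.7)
The plan is to verify the three assertions by evaluating the net on the dense subspace spanned by vectors of the form $\pi_0(X^aZ^b)\Omega_0$ with $a\in C_{\text{lf}}^n(E;G)$ and $b\in C_n(E;\hat{G})$ of finite support. Since each $\pi_0(U_{K_+,K_-})$ is unitary, pointwise convergence on this dense set promotes to strong operator convergence of the entire net to an isometry; applying the same argument to the adjoints $\pi_0(U_{K_+,K_-})^*$ (which are products of the same type, covered by swapping the roles of $(\gamma,\delta)$ and $(\gamma',\delta')$) gives strong convergence of $U^*$ as well, and joint continuity of multiplication on norm-bounded subsets of $\boundeds(\mathcal{H})$ in the strong operator topology promotes $U$ to a unitary.

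The algebraic core is the identity
\begin{equation*}
X^{\hat\delta}Z^{\hat\gamma}\,U_{K_+,K_-} \;=\; X^{\partial^T(q_{K_-})}Z^{\partial(p_{K_+})},
\end{equation*}
obtained by noting that the prefactor $Z^{-\hat\gamma}X^{-\hat\delta}$ of $U_{K_+,K_-}$ is the exact inverse of $X^{\hat\delta}Z^{\hat\gamma}$ (the $Z$'s cancel first and then the $X$'s, with no cross phases). The right hand side is a product of summands contributing to the star and plaquette operators, so frustration freeness of $\omega_0$ gives $\omega_0(X^{\partial^T(q_{K_-})}Z^{\partial(p_{K_+})}) = 1$, and Lemma~\ref{lem:unitary} forces $\pi_0(X^{\partial^T(q_{K_-})}Z^{\partial(p_{K_+})})\Omega_0 = \Omega_0$. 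Consequently $\pi_0(U_{K_+,K_-})\Omega_0 = \pi_0(Z^{-\hat\gamma}X^{-\hat\delta})\Omega_0$ is already independent of $K_\pm$, which simultaneously takes care of the normalization $\pi_0(X^{\hat\delta}Z^{\hat\gamma})U\Omega_0 = \Omega_0$.

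For a generic local vector, I would apply the commutation relations~\eqref{eq:XZcommutator} to obtain
\begin{equation*}
U_{K_+,K_-}\,X^aZ^b \;=\; e^{2\pi i\bigl(\langle -\hat\gamma+\partial(p_{K_+}),\,a\rangle - \langle b,\,-\hat\delta+\partial^T(q_{K_-})\rangle\bigr)}\,X^aZ^b\,U_{K_+,K_-}.
\end{equation*}
Combined with the identity of the previous paragraph, this expresses $\pi_0(U_{K_+,K_-})\pi_0(X^aZ^b)\Omega_0$ as a $K_\pm$-dependent phase times the fixed vector $\pi_0(X^aZ^b\,Z^{-\hat\gamma}X^{-\hat\delta})\Omega_0$. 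Using the adjunction $\langle\partial(p_{K_+}),a\rangle = \langle p_{K_+},\partial^T a\rangle$ (valid because $p_{K_+}$ has finite support) together with the finiteness of $\supp\partial^T a$ and $\supp\partial b$, the phase stabilizes to $\langle -\hat\gamma+\partial p,a\rangle - \langle b,-\hat\delta+\partial^T q\rangle = \langle\gamma'-\gamma,a\rangle - \langle b,\delta'-\delta\rangle$ as soon as $K_+\supseteq\supp\partial^T a$ and $K_-\supseteq\supp\partial b$, which establishes pointwise convergence.

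Passing to the limit in the commutation identity therefore yields $U\pi_0(X^aZ^b) = e^{2\pi i(\langle\gamma'-\gamma,a\rangle - \langle b,\delta'-\delta\rangle)}\pi_0(X^aZ^b)U$, which is precisely the ratio $\rho_{(\gamma',\delta')}(X^aZ^b)/\rho_{(\gamma,\delta)}(X^aZ^b)$ computed in Proposition~\ref{prop:endomorphisms}. Hence $U\pi_0(\rho_{(\gamma,\delta)}(A)) = \pi_0(\rho_{(\gamma',\delta')}(A))U$ holds for $A = X^aZ^b$ and extends to all $A\in\mathfrak{A}$ by norm continuity. The only point requiring care is disciplined bookkeeping with the pairings and the restriction of the adjunction $\langle \partial x, y\rangle = \langle x, \partial^T y\rangle$ to cases where at most one side has infinite support, which is why the truncations $p_{K_+}$, $q_{K_-}$ must be kept in place until after all pairings have been evaluated.
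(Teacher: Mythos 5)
Your proof is correct and follows essentially the same route as the paper's: you verify convergence on the cyclic vectors $[X^aZ^b]$, use the adjunction $\langle\partial(p_{K_+}),a\rangle=\langle p_{K_+},\partial^T a\rangle$ together with finiteness of $\supp\partial^T a$ and $\supp\partial b$ to stabilize the phase, invoke frustration-freeness to absorb $X^{\partial^T(q_{K_-})}Z^{\partial(p_{K_+})}$ into $\Omega_0$, and obtain the intertwining relation by passing to the limit in the commutation identity using joint strong continuity on bounded sets. The only addition beyond the paper's proof is your explicit argument that $U$ is unitary, which is harmless and correct.
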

\begin{proof}
The net is bounded, therefore it is enough to verify convergence on vectors of the form $[X^aZ^b]$. If $K_+\supseteq\supp\partial^Ta$ and $K_-\supseteq\supp\partial b$ then we have
\begin{equation}
\begin{split}
\pi_0(U_{K_+,K_-})[X^aZ^b]
 & = [Z^{-\hat{\gamma}}X^{-\hat{\delta}}X^{\partial^T(q_{K_-})}Z^{\partial(p_{K_+})}X^aZ^b]  \\
 & = e^{2\pi i(\langle\partial(p_{K_+}),a\rangle-\langle b,\partial^T(q_{K_-})\rangle)}[Z^{-\hat{\gamma}}X^{-\hat{\delta}}X^aZ^bX^{\partial^T(q_{K_-})}Z^{\partial(p_{K_+})}]  \\
 & = e^{2\pi i(\langle p_{K_+},\partial^Ta\rangle-\langle \partial b,q_{K_-}\rangle)}[Z^{-\hat{\gamma}}X^{-\hat{\delta}}X^aZ^b]  \\
 & = e^{2\pi i(\langle p,\partial^Ta\rangle-\langle \partial b,q\rangle)}[Z^{-\hat{\gamma}}X^{-\hat{\delta}}X^aZ^b],
\end{split}
\end{equation}
which no longer depends on $K_\pm$, therefore this is also how the strong limit acts.

To see that $U$ is an intertwiner, we use strong continuity of multiplication in bounded sets of operators:
\begin{equation}
\begin{split}
U\pi_0(X^aZ^b)U^{-1}
 & = \lim_{K\pm\to\mathcal{E}_{n\pm1}}\pi_0(U_{K_+,K_-}X^aZ^bU_{K_+,K_-}^*)  \\
 & = \lim_{K\pm\to\mathcal{E}_{n\pm1}}\pi_0(Z^{-\hat{\gamma}}X^{-\hat{\delta}}X^{\partial^T(q_{K_-})}Z^{\partial(p_{K_+})}X^aZ^bZ^{-\partial(p_{K_+})}X^{-\partial^T(q_{K_-})}X^{\hat{\delta}}Z^{\hat{\gamma}})  \\
 & = \lim_{K\pm\to\mathcal{E}_{n\pm1}}e^{2\pi i(\langle-\hat{\gamma}+\partial(p_{K_+}),a\rangle-\langle b,-\hat{\delta}+\partial^T(q_{K_-})\rangle)}\pi_0(X^aZ^b)  \\
 & = e^{2\pi i(\langle\gamma'-\gamma,a\rangle-\langle b,\delta'-\delta\rangle)}\pi_0(X^aZ^b)  \\
 & = (\pi_0\circ\rho_{(\gamma',\delta')}\circ\rho_{(\gamma,\delta)}^{-1})(X^aZ^b)
\end{split}
\end{equation}
by eqs. \eqref{eq:transportdata} and \eqref{eq:transformedstate}.

Finally,
\begin{equation}
\begin{split}
\pi_0(X^{\hat{\delta}}Z^{\hat{\gamma}})U\Omega_0
 & = \lim_{K\pm\mathcal{E}_{n\pm1}}[X^{\hat{\delta}}Z^{\hat{\gamma}}Z^{-\hat{\gamma}}X^{-\hat{\delta}}X^{\partial^T(q_{K_-})}Z^{\partial(p_{K_+})}]  \\
 & = \lim_{K\pm\mathcal{E}_{n\pm1}}[I]=\Omega_0
\end{split}
\end{equation}
proves the last statement.
\end{proof}

As Figure \ref{fig:transporter} shows, the string operators in the converging net can be visualized as curves that start at the endpoint of the second semi-infinite line, go along this line for a while, then connect the two lines with a ``distant'' segment and go back along the first line. As $K$ grows, the contribution of the connecting segment eventually commutes with any fixed local observable. This picture explains in a heuristic manner why is it not possible to find charge transporters when $\homatinfty{\gamma}\neq\homatinfty{\gamma'}$. As a simple example we may imagine a region in the plane which is the union of several overlapping semi-infinite bands (see Figure \ref{fig:notransport}). In this case one can find a compact region which cannot be avoided by any curve joining distant points of distinct bands.
\begin{figure}
\centering
\includegraphics{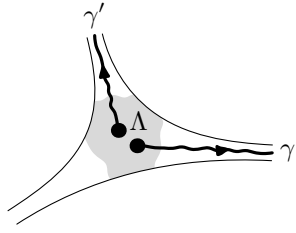}
\caption{For a CW complex with nontrivial end space it is not always possible to construct charge transporters for two states which look similar locally. Any path joining distant points of the two strings $\gamma$ and $\gamma'$ must cross the compact region $\Lambda$. ($n=1$)\label{fig:notransport}}
\end{figure}

Let us see what happens if we apply an endomorphism to each element in the net. Let $\gamma$ and $\delta$ be and arbitrary locally finite $n$-chain and $n$-cochain with finite boundary and coboundary, respectively. Then
\begin{equation}
\begin{split}
\pi_0(\rho_{(\gamma,\delta)}(U_{K_+,K_-}))
 & = e^{2\pi i(\langle\gamma,-\hat{\delta}+\partial^T(q_{K_-})\rangle-\langle-\hat{\gamma}+\partial(p_{K_+}),\delta\rangle)}\pi_0(U_{K_+,K_-})  \\
 & = e^{2\pi i(\langle\hat{\gamma},\delta\rangle-\langle\gamma,\hat{\delta}\rangle+\langle\partial\gamma,q_{K_-}\rangle-\langle p_{K_+},\partial^T\delta\rangle)}\pi_0(U_{K_+,K_-}).
\end{split}
\end{equation}
The first factor is eventually constant, while the second one has a limit in the strong operator topology, therefore the product also has a limit and is equal to
\begin{equation}
e^{2\pi i(\langle\hat{\gamma},\delta\rangle-\langle\gamma,\hat{\delta}\rangle+\langle\partial\gamma,q\rangle-\langle p,\partial^T\delta\rangle)}U.
\end{equation}
Now we are in a position to compute the braiding morphisms.
\begin{prop}\label{prop:braiding}
For $i=1,2$ let $\gamma_i\in C^{\text{lf}}_n(E;\hat{G})$ and $\delta_i\in C^n(E;G)$ with finite boundaries and coboundaries, respectively. Let $\hat{\gamma}_i\in C_n(E;\hat{G})$, $\hat{\delta}_i\in C_{\text{lf}}^n(E;G)$, $p_i\in C^{\text{lf}}_{n+1}(E;\hat{G})$ and $q_i\in C^{n-1}(E;G)$. Take the unitaries
\begin{equation}
U_{i,K_+,K_-}=Z^{-\hat{\gamma}_i}X^{-\hat{\delta}_i}X^{\partial^T(q_{i,K_-})}Z^{\partial(p_{i,K_+})}
\end{equation}
as before and set $U_i=\lim_{K_{\pm}\to\mathcal{E}_{n\pm 1}}\pi_0(U_{i,K_+,K_-})$. Then the limit
\begin{multline}\label{eq:braidingdef}
\varepsilon_{(\gamma_1,\delta_1),(\gamma_2,\delta_2)}((\hat{\gamma}_1,\hat{\delta}_1,p_1,q_1),(\hat{\gamma}_2,\hat{\delta}_2,p_2,q_2))  \\  :=\lim_{K_{\pm}\to\mathcal{E}_{n\pm 1}}\pi_0(\rho_{(\gamma_2,\delta_2)}(U_{1,K_+,K_-}^*))U_2^*U_1\pi_0(\rho_{(\gamma_1,\delta_1)}(U_{2,K_+,K_-}))
\end{multline}
exists in the strong operator topology and is equal to
\begin{equation}\label{eq:braidinggeneral}
\pi_0(I)e^{2\pi i(\langle\gamma_2,\hat{\delta}_1\rangle+\langle\hat{\gamma}_2,\delta_1\rangle-\langle\hat{\gamma}_2,\hat{\delta}_1\rangle+\langle\partial\hat{\gamma}_2,q_1\rangle-\langle\partial\gamma_2,q_1\rangle+\langle p_2,\partial^T\hat{\delta}_1\rangle-\langle p_2,\partial^T\delta_1\rangle)-(\ldots)},
\end{equation}
where the omitted terms are obtained by interchanging $1$ and $2$.
\end{prop}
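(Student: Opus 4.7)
The plan is to reduce \eqref{eq:braidingdef} to an explicit scalar by combining two ingredients: the asymptotic calculation displayed immediately before the proposition, and the CCR-type relations \eqref{eq:XZcommutator}. First, I apply that preceding computation with $(\gamma,\delta) = (\gamma_2,\delta_2)$ acting on the transporter data $(\hat{\gamma}_1,\hat{\delta}_1,p_1,q_1)$, and with $(\gamma,\delta) = (\gamma_1,\delta_1)$ acting on $(\hat{\gamma}_2,\hat{\delta}_2,p_2,q_2)$. This produces two scalar phases
$$\lambda_{i,j} := \exp\bigl\{2\pi i(\langle\hat{\gamma}_i,\delta_j\rangle - \langle\gamma_j,\hat{\delta}_i\rangle + \langle\partial\gamma_j,q_i\rangle - \langle p_i,\partial^T\delta_j\rangle)\bigr\}$$
such that $\pi_0(\rho_{(\gamma_2,\delta_2)}(U_{1,K_+,K_-})) \to \lambda_{1,2}\,U_1$ and $\pi_0(\rho_{(\gamma_1,\delta_1)}(U_{2,K_+,K_-})) \to \lambda_{2,1}\,U_2$ strongly as $K_\pm\to\mathcal{E}_{n\pm 1}$. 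Hence the expression inside the limit of \eqref{eq:braidingdef} factors as $\overline{\lambda_{1,2}}\lambda_{2,1}$ times $\pi_0(U_{1,K_+,K_-}^*)U_2^*U_1\pi_0(U_{2,K_+,K_-})$, up to quantities that are eventually constant.

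Next, I exploit joint strong continuity of multiplication on the norm-bounded set of unitaries. Since $\pi_0(U_{i,K_+,K_-}) \to U_i$ strongly and all operators involved are unitary, the displayed net converges strongly to $U_1^*U_2^*U_1U_2$, which by the same continuity applied in reverse equals $\lim_{K_\pm\to\mathcal{E}_{n\pm 1}} \pi_0(U_{1,K_+,K_-}^*U_{2,K_+,K_-}^*U_{1,K_+,K_-}U_{2,K_+,K_-})$ along any cofinal net. Using the relation $Z^bX^a = e^{2\pi i\langle b,a\rangle}X^aZ^b$, I rewrite each approximation in normal form $U_{i,K_+,K_-} = \phi_i(K)\,X^{a_i(K)}Z^{b_i(K)}$ with $a_i(K) = -\hat{\delta}_i + \partial^T q_{i,K_-}$ and $b_i(K) = -\hat{\gamma}_i + \partial p_{i,K_+}$. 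The scalar prefactors $\phi_i(K)$ cancel in the group-theoretic commutator, and \eqref{eq:XZcommutator} yields the Heisenberg-type identity
$$U_{1,K_+,K_-}^*U_{2,K_+,K_-}^*U_{1,K_+,K_-}U_{2,K_+,K_-} = e^{2\pi i(\langle b_1(K),a_2(K)\rangle - \langle b_2(K),a_1(K)\rangle)}\,I.$$

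Finally, I compute the limit of the exponent by bilinearity. The cross term $\langle\partial p_{i,K_+},\partial^T q_{j,K_-}\rangle$ vanishes for every $K$ because $p_{i,K_+}$ has finite support, which legitimizes the adjunction $\langle\partial x,y\rangle = \langle x,\partial^T y\rangle$ and leaves $\langle p_{i,K_+},(\partial^T)^2 q_{j,K_-}\rangle = 0$. The remaining terms $\langle\hat{\gamma}_i,\hat{\delta}_j\rangle$, $-\langle\hat{\gamma}_i,\partial^T q_{j,K_-}\rangle = -\langle\partial\hat{\gamma}_i,q_{j,K_-}\rangle$, and $-\langle\partial p_{i,K_+},\hat{\delta}_j\rangle = -\langle p_{i,K_+},\partial^T\hat{\delta}_j\rangle$ are eventually constant in $K$ since $\hat{\gamma}_i,\hat{\delta}_j,\partial\hat{\gamma}_i,\partial^T\hat{\delta}_j$ all have finite support, giving the respective limits $\langle\hat{\gamma}_i,\hat{\delta}_j\rangle$, $-\langle\partial\hat{\gamma}_i,q_j\rangle$, $-\langle p_i,\partial^T\hat{\delta}_j\rangle$. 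Multiplying by $\overline{\lambda_{1,2}}\lambda_{2,1}$ and collecting terms produces precisely \eqref{eq:braidinggeneral}, with the $-(\ldots)$ summand assembled from the $1\leftrightarrow 2$ swapped contributions of both factors. The principal obstacle is the tight sign- and phase-bookkeeping: every use of the $\partial/\partial^T$-adjunction must be justified by finite support of at least one argument, which is why all algebraic rearrangements are carried out on the truncated approximations $p_{i,K_+}$ and $q_{i,K_-}$ before any strong limit is taken.
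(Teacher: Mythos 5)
Your proof is correct and follows essentially the same route as the paper's: it uses the computation immediately preceding the proposition to extract the scalar phases coming from the action of $\rho_{(\gamma_i,\delta_i)}$ on the truncated transporters, invokes joint strong continuity of multiplication on bounded sets to reduce the middle block to the group commutator $U_1^*U_2^*U_1U_2$, and then evaluates that commutator via the CCR relations \eqref{eq:XZcommutator} on the $K$-truncated normal forms before passing to the limit. The paper presents the three phase contributions (two from the $\rho$-actions, one from the multiplicative commutator) together in a more compact form, but the decomposition and the bookkeeping are the same.
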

\begin{proof}
$\rho_{\gamma_i,\delta_i}$ acts on $U_{j,K_+,K_-}$ as multiplication by
\begin{multline}
e^{2\pi i(\langle\gamma_i,-\hat{\delta}_j+\partial^T(q_{j,K_-})\rangle-\langle-\hat{\gamma}_j+\partial(p_{j,K_+}),\delta_i\rangle)}  \\
= e^{2\pi i(-\langle\gamma_i,\hat{\delta}_j\rangle+\langle\partial\gamma_i,q_{j,K_-}\rangle+\langle\hat{\gamma}_j,\delta_i\rangle-\langle p_{j,K_+},\partial^T\delta_i\rangle)},
\end{multline}
therefore we need to consider two factors of this type (in the $i=2$, $j=1$ case the exponent should be multiplied by $-1$) and one coming from the (multiplicative) commutator of the $U$s. The latter is a multiple of $I$ and the coefficient looks like
\begin{multline}
e^{2\pi i\langle\hat{\gamma}_2-\partial(p_{2,K_+}),-\hat{\delta}_1+\partial^T(q_{1,K_-})\rangle-2\pi i\langle-\hat{\gamma}_1+\partial(p_{1,K_+}),\hat{\delta}_2-\partial^T(q_{2,K_-})\rangle}  \\
= e^{2\pi i(-\langle\hat{\gamma}_2,\hat{\delta}_1\rangle+\langle\partial\hat{\gamma}_2,q_{1,K_-}\rangle+\langle p_{2,K_+},\partial^T\hat{\delta}_1\rangle)-2\pi i(-\langle\hat{\gamma}_1,\hat{\delta}_2\rangle+\langle\partial\hat{\gamma}_1,q_{2,K_-}\rangle+\langle p_{1,K_+},\partial^T\hat{\delta}_2\rangle)}.
\end{multline}
Multiplying these together and taking the limit as $K_\pm\to\mathcal{E}_{n\pm1}$ gives eq. \ref{eq:braidinggeneral}.
\end{proof}

Under certain conditions an alternative formula for the braiding operators can be derived using the eqs. \eqref{eq:transportdata}. When the sets $\supp\gamma_i\cap\supp\delta_j$,  $\supp\gamma'_i\cap\supp\delta'_j$,  $\supp\partial p_i\cap\supp\delta'_j$ and  $\supp\gamma'_i\cap\supp\partial^Tq_j$ are finite, we can also write it as
\begin{equation}
\pi_0(I)e^{2\pi i[(\langle\gamma_2,\delta_1\rangle-\langle\gamma'_2,\delta'_1\rangle)+(\langle\partial p_2,\delta'_1\rangle-\langle p_2,\partial^T\delta'_1\rangle)+(\langle\gamma'_2,\partial^Tq_1\rangle-\langle\partial\gamma'_2,q_1\rangle)]-2\pi i[\ldots]},
\end{equation}
where the omitted terms are again obtained by interchanging $1$ and $2$.

From eq. \eqref{eq:braidinggeneral} it is clear that the braiding isomorphisms depend on the choice of $p_i$, $q_i$, $\hat{\gamma}_i$ and $\hat{\delta}_i$ even with fixed $\gamma'_i$ and $\delta'_i$. On the other hand, one expects that similarly to 2D systems this dependence is ``topological'' and not sensitive to small perturbations. This is made precise in the following proposition.
\begin{prop}\label{prop:freedominbraiding}
Let $\gamma_i,\delta_i,\hat{\gamma}_i,\hat{\delta}_i,p_i,q_i$ be as before. Then the braiding operator $\varepsilon_{(\gamma_1,\delta_1),(\gamma_2,\delta_2)}((\hat{\gamma}_1,\hat{\delta}_1,p_1,q_1),(\hat{\gamma}_2,\hat{\delta}_2,p_2,q_2))$ is left unchanged under any of the following transformations:
\begin{enumerate}
\item $p_i$ is replaced with $p_i+\hat{p}_i+\partial e_i$ and $\hat{\gamma}_i$ is replaced with $\hat{\gamma}_i+\partial\hat{p}_i$ where $\hat{p}_i\in C_{n+1}(E;\hat{G})$ and $e_i\in C^{\text{lf}}_{n+2}(E;\hat{G})$
\item $q_i$ is replaced with $q_i+\hat{q}_i+\partial^T f_i$ and $\hat{\delta}_i$ is replaced with $\hat{\delta}_i+\partial^T\hat{q}_i$ where $\hat{q}_i\in C_{\text{lf}}^{n-1}(E;G)$ and $f_i\in C_{\text{lf}}^{n-2}(E;G)$
\item $p_1$ is replaced with $p_1+p'$ and $\hat{\gamma}_1$ is replaced with $\hat{\gamma}_1+\hat{\gamma}'$ where $(\supp p')\cap(\partial^T(\hat{\delta}_2-\delta_2))=\emptyset$ and $\supp\hat{\gamma}'\cap\supp(\delta_2-\hat{\delta}_2+\partial^Tq_2)=\emptyset$
\item $q_1$ is replaced with $q_1+q'$ and $\hat{\delta}_1$ is replaced with $\hat{\delta}_1+\hat{\delta}'$ where $(\supp q')\cap(\supp \partial(\hat{\gamma}_2-\gamma_2))=\emptyset$ and $\supp\hat{\gamma}'\cap\supp(\gamma_2-\hat{\gamma}_2+\partial p_2)=\emptyset$
\end{enumerate}
and similarly with the roles of $1$ and $2$ interchanged.
\end{prop}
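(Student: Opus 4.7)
The plan is a direct term-by-term check. In each of the four transformations one substitutes the replacement data into the exponent of the braiding formula \eqref{eq:braidinggeneral} and shows the difference is zero. Two basic tools suffice: the identities $\partial^2=0$ and $(\partial^T)^2=0$, and the adjunction $\langle\partial a,b\rangle=\langle a,\partial^T b\rangle$, valid whenever at most one of $a,b$ has infinite support. Every application of the adjunction I need is legal because the setup guarantees finite support for the relevant side: $\hat{\gamma}_i$, $\hat{\gamma}'$, $\hat{p}_i$ lie in $C_\bullet$; $\hat{\delta}_i$, $\hat{q}_i$, $f_i$ lie in $C_{\text{lf}}^\bullet$; and $\partial\gamma_i,\partial^T\delta_i$ are finite by the standing hypothesis of Proposition~\ref{prop:braiding}. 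The built-in $1\leftrightarrow 2$ symmetry of the exponent reduces each verification to a single choice of index, and the duality interchanging boundaries with coboundaries further identifies transformations (1)--(2) and (3)--(4), so only two genuine computations are required.

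For transformation~(1) only the block of the exponent with outer index $1$ (the subtracted half) is affected. Writing $\Delta$ for the change, the $\partial\hat{p}_1$ piece of $\hat{\gamma}_1\mapsto\hat{\gamma}_1+\partial\hat{p}_1$ contributes $\langle\hat{p}_1,\partial^T\delta_2\rangle-\langle\hat{p}_1,\partial^T\hat{\delta}_2\rangle$ after moving $\partial$ across (using that $\hat{p}_1$ is finite), and its contribution to $\langle\partial\hat{\gamma}_1,q_2\rangle$ vanishes by $\partial^2=0$. The $\hat{p}_1$ piece of $p_1\mapsto p_1+\hat{p}_1+\partial e_1$ contributes exactly $\langle\hat{p}_1,\partial^T\hat{\delta}_2\rangle-\langle\hat{p}_1,\partial^T\delta_2\rangle$, cancelling the previous. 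The $\partial e_1$ piece uses the finite support of $\partial^T\hat{\delta}_2$ and $\partial^T\delta_2$ to become $\langle e_1,(\partial^T)^2\hat{\delta}_2\rangle-\langle e_1,(\partial^T)^2\delta_2\rangle=0$. Hence $\Delta=0$. Transformation~(2) is the formal dual: interchanging $\partial\leftrightarrow\partial^T$ and the $(\hat{\gamma},p,e)$ data with the $(\hat{\delta},q,f)$ data repeats exactly the same cancellation on the $q_1$ and $\hat{\delta}_1$ terms of the exponent.

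For transformation~(3), $p_1\mapsto p_1+p'$ and $\hat{\gamma}_1\mapsto\hat{\gamma}_1+\hat{\gamma}'$ change the exponent by
\begin{equation*}
\langle\hat{\gamma}',\delta_2\rangle-\langle\hat{\gamma}',\hat{\delta}_2\rangle+\langle\partial\hat{\gamma}',q_2\rangle+\langle p',\partial^T\hat{\delta}_2\rangle-\langle p',\partial^T\delta_2\rangle.
\end{equation*}
Since $\hat{\gamma}'\in C_n(E;\hat{G})$ has finite support, I can rewrite $\langle\partial\hat{\gamma}',q_2\rangle=\langle\hat{\gamma}',\partial^Tq_2\rangle$ and collect the first three terms into $\langle\hat{\gamma}',\delta_2-\hat{\delta}_2+\partial^Tq_2\rangle$, which vanishes by the second support hypothesis. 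The remaining two terms combine into $\langle p',\partial^T(\hat{\delta}_2-\delta_2)\rangle$; this pairing is well defined because $\partial^T(\hat{\delta}_2-\delta_2)$ is a finite cochain (both $\partial^T\hat{\delta}_2$ and $\partial^T\delta_2$ are finite), and it is zero by the first support hypothesis. Transformation~(4) is handled dually. The main obstacle is purely bookkeeping: ensuring that each move of $\partial$ across a pairing is legal and that the support conditions in the hypothesis are invoked exactly where needed. No further ideas enter.
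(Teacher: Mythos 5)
Your proof is correct and follows essentially the same route as the paper's: substitute each replacement into the exponent of eq.~\eqref{eq:braidinggeneral}, move $\partial$ and $\partial^T$ across pairings where the support conditions permit, and use $\partial^2=0=(\partial^T)^2$ together with the disjointness hypotheses to kill the resulting terms. The one small divergence is in the $\hat p_i$ part of transformation~(1): the paper verifies that change at the operator level, showing that $U_{i,K_+,K_-}$ is multiplied by the phase $e^{2\pi i\langle\partial\hat p_i,\hat\delta_i\rangle}$, which cancels between $U_{i,K_+,K_-}$ and $U_{i,K_+,K_-}^*$ in eq.~\eqref{eq:braidingdef}, whereas you substitute directly into eq.~\eqref{eq:braidinggeneral}; both are valid and yours has the mild advantage of treating all four items uniformly.
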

\begin{proof}
If $p_i$ is replaced with $p_i+\hat{p}_i$ and $\hat{\gamma}_i$ with $\hat{\gamma}_i+\partial\hat{p}_i$ then for any $K_+\supseteq\supp\hat{p}_i$ we have
\begin{equation}
\begin{split}
U_{i,K_+,K_-}
 & = Z^{-(\hat{\gamma}_i+\partial\hat{p}_i)}X^{-\hat{\delta}_i}X^{\partial^T(q_{i,K_-})}Z^{\partial(p_{i,K_+}+\hat{p}_i)}  \\
 & = e^{2\pi\langle\partial\hat{p}_i,\hat{\delta}_i\rangle}Z^{-\hat{\gamma}_i}X^{-\hat{\delta}_i}X^{\partial^T(q_{i,K_-})}Z^{\partial p_{i,K_+}}.
\end{split}
\end{equation}
The phase factors from $U_{i,K_+,K_-}$ and $U_{i,K_+,K_-}^*$ cancel in eq. \eqref{eq:braidingdef}, therefore the braiding remains the same.

From eq. \eqref{eq:braidinggeneral} one can see that $p_i$ only appears as $\langle p_i,\partial^T(\hat{\delta}_j-\delta_j),\rangle$. Adding a boundary $\partial e_i$ results in an additional term $\langle \partial e_i,\partial^T(\hat{\delta}_j-\delta_j),\rangle=\langle e_i,(\partial^T)^2(\hat{\delta}_j-\delta_j),\rangle=0$.

Again, since $p_1$ only appears as $\langle p_1,\partial^T(\hat{\delta}_2-\delta_2),\rangle$, it is clear that $p_1+p'$ with $\supp p'$ disjoint from that of $\partial^T(\hat{\delta}_2-\delta_2)$ leads to the same braiding.

Similarly, $\hat{\gamma}_1$ only appears as $\langle\hat{\gamma}_1,\delta_2-\hat{\delta}_2+\partial^Tq_2\rangle$, therefore adding $\hat{\gamma}'$ does not change the value of the bilinear pairing as long as its support is disjoint from that of $\delta_2-\hat{\delta}_2+\partial^Tq_2$.

The proofs for cochains are similar.
\end{proof}

Although there is some freedom in choosing the additional data required to define a braiding, it can still lead to different operators. In general, we do not know whether a canonical choice exists. However, in the plane and more generally, for spaces which are ``essentially planar'' one can find such a choice.

For definiteness, let us consider $E=F\times\mathbb{R}^2$ with $F$ compact, but it seems likely that the argument applies to more general spaces. In this case one can find excitations localized in cones, i.e. preimages of infinite circular sectors in $\mathbb{R}^2$ under the canonical projection $E\to\mathbb{R}^2$. After fixing a forbidden cone $C$ and an orientation of $\mathbb{R}^2$ one can introduce an ordering on the set of cones disjoint from $C$, saying that $C_1 < C_2$ if $C_1$ can be rotated to $C_2$ in the positive direction in such a way that it never intersects $C$.

Using this ordering one can fix the choices in the braiding as follows (see Figure \ref{fig:braiding}). Choose $\rho_{(\gamma_1,\delta_1)}$ and $\rho_{(\gamma_2,\delta_2)}$ localized in cones $C_1$ and $C_2$, both disjoint from $C$ (but possibly $C_1\cap C_2\neq\emptyset$). Let $\hat{\gamma}_1=\hat{\delta}_1=p_1=q_1=0$ and choose $\rho_{(\gamma'_2,\delta'_2)}$ such that it is localized in a cone $C_2'$, disjoint from $C_1$ and $C$ and with the additional requirement $C_1 < C_2'$. Then we can choose $\hat{\gamma}_2,\hat{\delta_2}$ to be disjoint from $\gamma_1,\delta_1$ and $p_2,q_2$ to be supported between $C_2$ and $C_2'$ and outside $C$.

\begin{figure}
\centering
\includegraphics{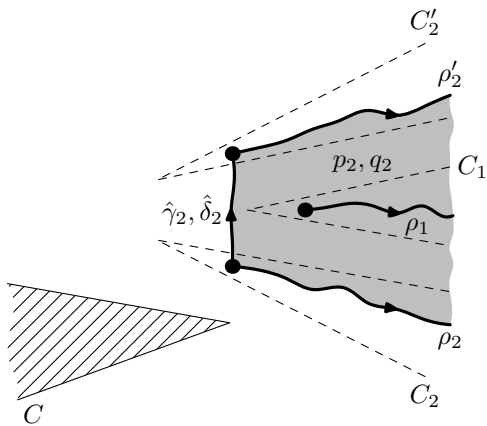}
\caption{In the plane one can fix a forbidden cone $C$ and consider endomorphisms and charge transporters supported outside $C$. Then a braiding can be defined unambiguously by requiring that the first charge stays in place while the second one is moved so that $C_1-C_2'-C$ are in counterclockwise order. ($n=1$)\label{fig:braiding}}
\end{figure}

Using Proposition \ref{prop:freedominbraiding} it is possible to give a different interpretation to the braiding operation. Let us start with an arrangement as in Figure \ref{fig:braiding}. Since the supports of $\hat{\gamma}_2$ and $\hat{\delta}_2$ are disjoint from those of $\delta_1$ and $\gamma_1$, we can replace them with $0$, thus changing $\gamma'_2$ to $\gamma'_2+\hat{\gamma}_2$ and $\delta'_2$ to $\delta'_2+\hat{\delta}_2$. After this transformation $\gamma'_2-\gamma_2$ is a locally finite boundary and $\delta'_2-\delta_2$ is a coboundary, and it is possible to choose $p'$ and $q'$ such that their supports are disjoint from those of $\partial^T\delta_1$ and $\partial\gamma_1$ (but they \emph{do} intersect $C$). By Proposition \ref{prop:freedominbraiding} we can replace $p_2$ with $p_2+p'$ and $q_2$ with $q_2+q'$. The net effect is that now $\partial p_2=\partial^T q_2=\hat{\gamma}_2=\hat{\delta}_2=0$ and we also have $\gamma_2=\gamma_2'$ and $\delta_2=\delta_2'$. The physical picture one may have in mind is that the endpoint of either $\gamma_2$ or $\delta_2$ (or both) which is at infinty is moved around the entire plane, making a full circle and returning back to the same position. Note that when $C_2<C_1<C'_2$ (as in the picture) then the canonical choice corresponds here to a full counterclockwise circle, whereas when $C_1<C_2$, we can take $p_2=q_2=0$. In the special case when $(\gamma_1,\delta_1)=(\gamma_2,\delta_2)$ none of these relations hold and we need to take either $p_2=0$ or $q_2=0$.

Now we can go one step further by noticing that we can add any $n+1$-chain to $p_2$ and locally finite $n+1$-chain to $q_2$ without affecting the braiding. What actually matter are the pairs $(p_2,\partial p_2)$ and $(q_2,\partial^Tq_2)$. Moreover, by Proposition \ref{prop:freedominbraiding} we can add any locally finite boundary to $p_2$ and any coboundary to $q_2$, i.e. the braiding only depends on the (co-)homology classes at infinity $\homatinfty{p_2}\in H^\infty_n(E;\hat{G})$ and $\cohomatinfty{q_2}\in H_\infty^{n-1}(E;G)$. According to this picture one may redefine the braiding operators to depend only on these (co-)homology classes at infinity:
\begin{equation}\label{eq:braidingatinfinity}
\varepsilon_{(\gamma_1,\delta_1),(\gamma_2,\delta_2)}(\homatinfty{p_2},\cohomatinfty{q_2})=e^{-2\pi i(\langle\homatinfty{p_2},\cohomatinfty{\delta_1}\rangle+\langle\homatinfty{\gamma_1},\cohomatinfty{q_2}\rangle)}.
\end{equation}
The advantage is that we no longer need to think about the relative spatial orientation with respect to an arbitrarily chosen forbidden cone. It is also possible to restore the symmetry between $1$ and $2$ by introducing a similar pair $(\homatinfty{p_1},\cohomatinfty{q_1})$ of classes at infinity. In this case we need to subtract the terms obtained by exchanging $1$ and $2$ in the exponent. Moreover, eq. \eqref{eq:braidingatinfinity} (as well as its symmetric variant) also makes sense in our general setting without the special form of $E$.

One should note however, that if this quantity is to have any physical meaning, the classes appearing in the subscript and in the argument need to be related to each other. Otherwise we could even get a nontrivial braiding between two vacuum representations, which is, at best, difficult to interpret. In general, we do not know what relations need to be required, but at least on $F\times \mathbb{R}^2$ with $F$ compact the above arguments single out a ``canonical'' choice.

Now we return to the general case, and consider conjugate charges. A conjugate of the endomorphism $\rho$ is a triple $(\bar{\rho},R,\bar{R})$ where $R\in\Hom(\id,\bar{\rho}\otimes\rho)$ and $\bar{R}\in\Hom(\id,\rho\otimes\bar{\rho})$ satisfying $\bar{R}^*\rho(R)=I=R^*\bar{\rho}(\bar{R})$. For the endomorphism $\rho=\rho_{(\gamma,\delta)}$ we can choose $\bar{\rho}=\rho_{(-\gamma,-\delta)}$ and $R=\bar{R}=I$. The twist of an endomorphism $\rho$ is then defined as \cite{Halvorson}
\begin{equation}\label{eq:twist}
\Theta_\rho=(\bar{R}^*\otimes\id_\rho)\circ(\id_{\bar{\rho}}\otimes\varepsilon_{\rho,\rho})\circ(\bar{R}\otimes\id_\rho),
\end{equation}
where the arguments of the braiding are suppressed. When $\rho=\rho_{(\gamma,\delta)}$ the representation $\pi_0\circ\rho$ is irreducible, therefore $\Theta_\rho$ is just a phase factor corresponding to the statistics of the sector. In this case using the conjugates as above one can see that the phase factor is the same as that appearing in $\varepsilon_{\rho,\rho}$. We remark that in the special case when the braiding is computed for an endomorphism with itself, the symmetrized version only depends on the differences of the (co-)homology classes at infinity appearing in the argument. As before, we do not know if there is a canonical choice for these when the space is not of the form $F\times\mathbb{R}^2$.

\section{Acknowledgement}
We would like to thank Pieter Naaijkens for helpful correspondence.

\end{document}